\documentclass[11pt]{article}
\usepackage{caption}

\usepackage[T1]{fontenc}
\usepackage[margin=1in]{geometry}
\usepackage{tikz-cd}
\usepackage{mathtools}
\usepackage{color}
\usepackage{proof}
\usepackage{caption}
\usepackage{bbm}
\usepackage{amssymb}
\usepackage{microtype}
\usepackage{xcolor}
\usepackage{enumitem}
\usepackage{amsthm}
\usepackage{thmtools,thm-restate}

\usepackage{amsmath}
\usepackage{amsfonts}
\usepackage{url}

\DeclareMathAlphabet{\mathpzc}{OT1}{pzc}{m}{it}

\setlength{\parskip}{0.2em}

\usepackage[colorlinks,citecolor=blue]{hyperref}
\newtheorem{theorem}{Theorem}[section]
\newtheorem{lemma}[theorem]{Lemma}
\newtheorem{definition}[theorem]{Definition}

\newtheorem{corollary}[theorem]{Corollary}

\newtheorem{conjecture}{Conjecture} 

\newcounter{prob}
\newtheorem{problem}[prob]{Problem}

\newcommand{\eq}[1]{\hyperref[eq:#1]{(\ref*{eq:#1})}}
\renewcommand{\sec}[1]{\hyperref[sec:#1]{Section~\ref*{sec:#1}}}
\newcommand{\thm}[1]{\hyperref[thm:#1]{Theorem~\ref*{thm:#1}}}
\newcommand{\lem}[1]{\hyperref[lem:#1]{Lemma~\ref*{lem:#1}}}
\newcommand{\cor}[1]{\hyperref[cor:#1]{Corollary~\ref*{cor:#1}}}
\newcommand{\itm}[1]{\hyperref[itm:#1]{\ref*{itm:#1}}}
\newcommand{\app}[1]{\hyperref[app:#1]{Appendix~\ref*{app:#1}}}
\newcommand{\dfn}[1]{\hyperref[dfn:#1]{Definition~\ref*{dfn:#1}}}
\newcommand{\fig}[1]{\hyperref[fig:#1]{Figure~\ref*{fig:#1}}}
\newcommand{\clm}[1]{\hyperref[clm:#1]{Claim~\ref*{clm:#1}}}
\newcommand{\alg}[1]{\hyperref[alg:#1]{Algorithm~\ref*{alg:#1}}}
\newcommand{\stp}[1]{\hyperref[stp:#1]{Step~\ref*{stp:#1}}}
\newcommand{\asm}[1]{\hyperref[asm:#1]{Assumption~\ref*{asm:#1}}}
\newcommand{\prot}[1]{\hyperref[prot:#1]{Protocol~\ref*{prot:#1}}}
\newcommand{\prob}[1]{\hyperref[prob:#1]{Problem~\ref*{prob:#1}}}
\newcommand{\rmk}[1]{\hyperref[rmk:#1]{Remark~\ref*{rmk:#1}}}
\newcommand{\cons}[1]{\hyperref[cons:#1]{Construction~\ref*{cons:#1}}}
\newcommand{\conj}[1]{\hyperref[conj:#1]{Conjecture~\ref*{conj:#1}}}
\newcommand{\tbl}[1]{\hyperref[tbl:#1]{Table~\ref*{tbl:#1}}}

\usepackage{tabularx}

\let\originalleft\left
\let\originalright\right
\renewcommand{\left}{\mathopen{}\mathclose\bgroup\originalleft}
\renewcommand{\right}{\aftergroup\egroup\originalright}

\newcommand{\A}[0]{\mathcal{A}}
\newcommand{\B}[0]{\mathcal{B}}

\newcommand{\F}[0]{\mathcal{F}}
\newcommand{\G}[0]{\mathcal{G}}
\renewcommand{\H}[0]{\mathcal{H}}

\newcommand{\N}[0]{\mathcal{N}}
\renewcommand{\O}[0]{\mathcal{O}}

\DeclareMathOperator*{\Exp}{\mathbb{E}}
\DeclareMathOperator{\poly}{poly}

\newcommand{\Per}[0]{\mathrm{Per}}
\newcommand{\GPE}[0]{\mathrm{GPE}}
\newcommand{\PE}[0]{\mathrm{PE}}
\newcommand{\class}[1]{\mathsf{#1}}
\renewcommand{\P}[0]{\class{P}}
\newcommand{\sharpP}[0]{{\#\class{P}}}
\newcommand{\BPP}[0]{\class{BPP}}

\newcommand{\NP}[0]{\class{NP}}
\newcommand{\Mod}[0]{\class{Mod}}
\newcommand{\PH}[0]{\class{PH}}

\newcommand{\CAIP}[0]{\textsc{caip}}

\newcommand{\Id}[0]{\mathbbm{1}}

\newcommand{\bit}{\{0,1\}}

\usepackage{qcircuit}
\usepackage{environ}

\begin{document}
\title{The Computational Complexity of Quantum Determinants}

\author{
Shih-Han~Hung\footnote{Department of Computer Science, University of Texas at Austin. Email: shung@cs.utexas.edu.}
\and 
En-Jui Kuo\footnote{Joint Center for Quantum Information and Computer Science, NIST and University of Maryland, College Park,
Maryland 20742, USA. Email:kuoenjui@umd.edu}
}

\date{}

\maketitle

\begin{abstract}
In this work, we study the computational complexity of quantum determinants, a  $q$-deformation of matrix permanents: 
Given a complex number $q$ on the unit circle in the complex plane and an $n\times n$ matrix $X$, the $q$-permanent of $X$ is defined as 
$$\mathrm{Per}_q(X) = \sum_{\sigma\in S_n} q^{\ell(\sigma)}X_{1,\sigma(1)}\ldots X_{n,\sigma(n)},$$
where $\ell(\sigma)$ is the inversion number of permutation $\sigma$ in the symmetric group $S_n$ on $n$ elements. The function family generalizes determinant and permanent, which correspond to the cases $q=-1$ and $q=1$ respectively.

For worst-case hardness, by Liouville's approximation theorem and facts from algebraic number theory, we show that for primitive $m$-th root of unity $q$ for odd prime power $m=p^k$, exactly computing $q$-permanent is $\mathsf{Mod}_p\mathsf{P}$-hard. 
This implies that an efficient algorithm for computing $q$-permanent results in a collapse of the polynomial hierarchy. 
Next, we show that computing $q$-permanent can be achieved using an oracle that approximates to within a polynomial multiplicative error and a membership oracle for a finite set of algebraic integers. 
From this, an efficient approximation algorithm would also imply a collapse of the polynomial hierarchy. 
By random self-reducibility, computing $q$-permanent remains to be hard for a wide range of distributions satisfying a property called the strong autocorrelation property.
Specifically, this is proved via a reduction from $1$-permanent to $q$-permanent for $O(1/n^2)$ points $z$ on the unit circle. Since the family of permanent functions shares common algebraic structure, various techniques developed for the hardness of permanent can be generalized to $q$-permanents.
\end{abstract}

{
  \hypersetup{linkcolor=black}
}

\section{Introduction}

The study of matrix permanents has played an essential role in theoretical computer science and combinatorics. 
In a seminal work, Valiant proved that computing matrix permanent is $\sharpP$-hard, even if the matrix is over integers, non-negative integers or the set $\{0,1\}$ \cite{valiant1979completeness}.
This is considered to be one of the most influential results in computational complexity theory. 
These results have been used to justify of the hardness of other computational tasks in graph theory \cite{broder1986hard, curticapean2021parameterizing} and matrix analysis \cite{horn2012matrix}.
By contrast, matrix determinant is known to be polynomial-time computable from standard tools in elementary linear algebra \cite{mahajan1997determinant}. 

An interesting observation is that both permanent and determinant of a matrix $X$ can be written as a polynomial in matrix elements with phase coefficients, i.e., they are elements on the unit circle in the complex plane. 
More specifically, both functions can be written as a summation of $q^{\ell(\sigma)} X_{1,\sigma(1)}\ldots X_{n,\sigma(n)}$ over permutations $\sigma$ on $n$ elements for a phase $q$.
Here $\ell(\sigma)$ is the \emph{inversion number} of the permutation $\sigma$, defined as the number of transpositions on adjacent elements applied to restore $\sigma(1),\ldots,\sigma(n)$ to its natural order $1,2,\ldots,n$ \cite{yang1991q}. 
In the case of permanent, $q=1$, whereas in the case of determinant, $q=-1$. 
However, given the expression of both functions in a unified form, what makes their computational complexities so different remains elusive. 

A natural idea to explore this question is to study a $q$-deformation of the matrix polynomial, connecting the cases $q\in\{-1,1\}$. 
In mathematics literature, the $q$-deformation on the unit circle  or in $[-1,1]$ has been studied, and was given different names including $\mu$-permanent, $q$-permanent, $q$-determinant, or quantum determinant \cite{yang1991q,bapat1994inequalities,lal1998inequalities, da2018mu, da2010mu, de2018noncrossing, tagawa1993q}. 
However, to our knowledge, none of these results discussed the complexity of computing the analogue. 
In this work, we focus on the unit circle, and study the computational complexity of the function defined as
\begin{align}\label{eq:z-per}
  \Per_z(X) := \sum_{\sigma\in S_n} z^{\ell(\sigma)}  \prod_{i} X_{i,\sigma(i)}.
\end{align} 
for the symmetric group $S_n$ on $n$ elements and a complex number $z$ such that $|z|=1$. 
In the rest of the paper, for matrix $X$, we will call $\Per_z(X)$ the $z$-permanent of $X$. 

The $z$-permanent is not the only $q$-analogue of matrix permanents.
In particular, \emph{immanants} \cite{hartmann1985complexity,burgisser2000computational,mertens2011complexity,de2013determinant,spivak2020immanants,curticapean2021full} and \emph{fermionants} \cite{mertens2011complexity,bjorklund2019generalized,rugy2013determinant,de2013determinant} are two families of functions that have been widely studied. 
They are both defined with class functions, i.e., for some function $f$ invariant under conjugation on the symmetric group, these function can be written in the form
\begin{align}
    \sum_{\sigma\in S_n}f(\sigma) X_{1,\sigma(1)}\ldots X_{n,\sigma(n)}. 
\end{align}
For immanants, $f(\sigma)=\chi_\lambda(\sigma)$ where $\chi_\lambda$ is the character of the group representation $\lambda$ of $S_n$.
The permanent corresponds to the special case that $\chi_\lambda$ is the trivial representation, and the determinant corresponds to the sign representation.  
In a recent breakthrough, Curticapean classified the complexity of computing immanants for partition families.
The complexity of immanants is settled down by proving that under plausible assumptions, for every partition family $\Lambda$, matrix immanants are efficiently computable if and only if a quantity $b(\Lambda)$, which counts the boxes to the right of the first column in the Young diagram of $\lambda\in\Lambda$, is unbounded \cite{curticapean2021full}. 
For fermionants, $f(\sigma)=(-1)^n(-k)^{c(\sigma)}$, where $c(\sigma)$ denotes the number of cycles in $\sigma$. 
On the hardness side, Martins and Moore showed that fermionants are $\sharpP$-hard for $k>2$ under Turing reduction, and for $k=2$, the problem is $\oplus\mathsf{P}$-hard \cite{mertens2011complexity}. 
On the other hand, Bj\"{o}rklund, Kaski, and Williams gave an algorithm in time $2^{m-\Omega(m/\log\log q))}\cdot O(\mathrm{M}(q))$ for $m\times m$ matrices over a finite field $\mathbb F_q$ of $q$ elements, where $\mathrm{M}(q)$ is the time complexity of multiplication and division over $\mathbb F_q$ \cite{bjorklund2019generalized}.
These results justify that these functions are computationally intractable in general, except in special cases.

However, to our knowledge, no classification of the complexity of $z$-permanents has been given for any $z\notin\{-1,+1\}$, and none of the aforementioned results seems to obviously imply the hardness for this family, partly because the inversion number is not a class function.
In fact, no known complexity result for $q$-deformations of matrix permanents has even been given beyond class functions. 

In addition to the fact that the coefficients are not class functions, one may wonder why $z$-permanent (also called quantum determinant) is interesting object to study? 
First, in representation theory, the standard Lie algebra can be generalized into the so-called $q$-deformed algebra (also known as quantum group \cite{majid2000foundations}). Quantum determinants show up naturally in such contexts. 
More specifically, consider the $z$-Grassmann algebra which is the associative algebra $K[z]$ generated by $x_1, x_2, ..., x_n$ satisfying the $x_i^2=0$ for every $i\in[n]$, and $x_ix_j=zx_jx_i$ for $i<j$. 
It is straightforward to see that
\begin{equation}
    \left(\sum_{j=1}^{n}a_{1j}x_j\right)\left(\sum_{j=1}^{n}a_{2j}x_j\right)\ldots\left(\sum_{j=1}^{n}a_{nj}x_j\right)=\Per_{z}(A)x_1x_2\ldots x_n.
\end{equation}
which $A$ is $n$ by $n$ matrix such that $A_{ij}=a_{ij}$. 
Quantum determinants also serve as an important object in the representation of such $q$-deformed algebra \cite{etingof1998quantum}, and are closely related to the representation theory over different characteristics. 
Second, it is known that there are many applications of quantum determinants in various areas of physics, including string theory \cite{frohlich2006quantum} and conformal field theory \cite{gomez1996quantum, pasquier1990common}. 
In quantum information theory, quantum determinants also play an essential role in the Fock representations of $q$-deformed commutation relations \cite{bozejko2017fock} or anyonic relation \cite{goldin2004fock}, generalizing the standard bosonic/fermionic commutation relation \cite{bozejko1991example}.   

Motivated by the above observations, we initiate the study on the computational complexity of $z$-permanents, by giving general hardness results for complex number $z$ on the unit circle.
In the following sections, we give an overview of our contributions.

\subsection{Our Contributions}

In this paper, we study the computational complexity of $z$-permanent for an $m$-th primitive root of unity $z$ for integer $m>2$.
Our contribution is built upon tools from algebraic number theory. 
We summarize our result in~\tbl{main1}. From \sec{classical} to \sec{wa}, we prove one theorem in each section.

\begin{table}
\begin{center}
\begin{tabularx}{0.7\textwidth} { 
  | >{\raggedright\arraybackslash}X 
  | >{\centering\arraybackslash}X 
  | >{\raggedleft\arraybackslash}X | }
 \hline
 $ z=\zeta_{p^k}, \in {\displaystyle \mathbb{T} \setminus \{ \pm 1\},}
 |\Per_z(X)|^2$ & Exact & Approximate within multiplicative error $\poly(n)$ \\
 \hline
 Worst case &  \thm{classical hard} &  \thm{we to wa} \\
\hline
Average case for $X\sim  \H$  & constant fraction  \thm{random} &
\thm{GPE-z}
\\
\hline
\end{tabularx}
\end{center}
\caption{
We list hardness results into common four categories and specify the four main theorems shown in this paper.
}
\label{tbl:main1}
\end{table}

\subsubsection{Worst-case Hardness}
First, we study the hardness of computing $\zeta_m$-permanent of a $n\times n$ binary matrix $X$ for an $m$-th primitive root of unity $\zeta_m$.
By \eq{z-per}, we can write the $z$-permanent as a polynomial in $z$ whose coefficients can be written as multivariate polynomials in the elements of $X$.
For $z\notin\{+1,-1\}$, $z$-permanent is in general a complex number which can be written as an integral polynomial in $z$.
Since complex number arithmetic can only be computed using a finite precision, we first clarify what it means by an algorithm computing $\Per_z(X)$ exactly.

While we do not know how to give a sensible definition for every $z$ on the unit circle, applying facts of \emph{cyclotomic fields} yields a natural definition for a primitive root of unity and a binary matrix.
More specifically, $\zeta_m$-permanent of a binary matrix can be written as an integral polynomial in $\zeta_m$ of degree $d=\deg\Phi_m-1$, where $\Phi_m$ is the $m$-th cyclotomic polynomial.
This implies that representing $\Per_{\zeta_m}(X)$ as a degree-$d$ polynomial, the tuple of coefficients is unique.
Thus we define an algorithm which exactly computes $\Per_{\zeta_m}(X)$ to be one that outputs a set of coefficients $a_0,\ldots,a_d$ such that $\Per_{\zeta_m}(X)=\sum_{i=0}^d a_i\zeta_m^i$.

Given this definition, we show that for odd prime power $m=p^k$,  any oracle $\O$ that computes $\zeta_m$-permanent implies that $\BPP^{\Mod_p\P}\subseteq\BPP^\O$.
\begin{theorem}[Hardness of $z$-permanent, informal]\label{thm:worst-informal}
For odd prime $p$, prime power $m=p^k$ and a primitive $m$-th root of unity $\zeta_m$, it is $\Mod_p\P$-hard to compute $\zeta_m$-permanent of $\{0,1\}$-matrices.
More explicitly, for every oracle $\O$ that on input a binary $n\times n$ matrix $X$ outputs the coefficients of $\Per_{\zeta_m}(X)$, it holds that 
\begin{align}\label{eq:intro-worst}
    \PH\subseteq \BPP^{\Mod_p\P} \subseteq \BPP^\O.
\end{align}
\end{theorem}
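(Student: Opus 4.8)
The plan is to show that a single call to $\O$ lets one compute the permanent of a $\{0,1\}$-matrix modulo $p$, and then to combine the (known) $\Mod_p\P$-hardness of this modular permanent with a randomized version of Toda's theorem. Throughout, $\zeta_m$ with $m=p^k$ and $p$ an odd prime is fixed, so $d=\deg\Phi_m-1=\phi(m)-1$ is a constant and, as explained before the statement, the oracle output $(a_0,\dots,a_d)$ with $\Per_{\zeta_m}(X)=\sum_{i=0}^{d}a_i\zeta_m^i$ is well defined. The arithmetic input is the classical fact about cyclotomic fields: $\mathbb{Z}[\zeta_m]$ is the ring of integers of $\mathbb{Q}(\zeta_m)$, and since $\Phi_m(1)=p$ for the prime power $m=p^k$, the rational prime $p$ is totally ramified, $p\,\mathbb{Z}[\zeta_m]=(1-\zeta_m)^{\phi(m)}$, with $(1-\zeta_m)$ a prime ideal whose residue field $\mathbb{Z}[\zeta_m]/(1-\zeta_m)$ is $\mathbb{F}_p$. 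Writing $\pi\colon\mathbb{Z}[\zeta_m]\to\mathbb{F}_p$ for the quotient map, we have $\pi(\zeta_m)=1$ (because $\zeta_m=1-(1-\zeta_m)$) and $\pi$ restricts to reduction modulo $p$ on $\mathbb{Z}$.

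Next I would reduce the modular permanent to $\O$. For a binary matrix $X$, $\Per_{\zeta_m}(X)\in\mathbb{Z}[\zeta_m]$, and applying $\pi$ to \eq{z-per} collapses every phase $\zeta_m^{\ell(\sigma)}$ to $1$, giving
\[\pi\bigl(\Per_{\zeta_m}(X)\bigr)=\sum_{\sigma\in S_n}\prod_{i}X_{i,\sigma(i)}=\Per(X)\bmod p.\]
Applying $\pi$ to the oracle's representation instead gives $\pi(\Per_{\zeta_m}(X))=\sum_{i=0}^{d}a_i\bmod p$, so $\Per(X)\equiv\sum_{i=0}^{d}a_i\pmod p$, computable in polynomial time from one query to $\O$. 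Hence the map $X\mapsto\Per(X)\bmod p$ on $\{0,1\}$-matrices lies in $\P^{\O}$.

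It then remains to recall that this modular permanent is $\Mod_p\P$-hard for odd $p$. This is Valiant's $\sharpP$-hardness construction run modulo $p$: the reduction from counting satisfying assignments of a $3$CNF formula produces an integer matrix whose permanent equals that count times a fixed power of $4$, a unit modulo the odd prime $p$; and replacing each negative entry $-1$ by a gadget of weight $p-1\equiv-1\pmod p$ and each positive entry by its standard expansion gadget converts this into a polynomial-size $\{0,1\}$-matrix with the same permanent modulo $p$. Oddness of $p$ is used precisely here --- for $p=2$ one has $\Per(X)\equiv\det(X)\pmod 2$, which is polynomial-time computable, so no such hardness can hold.

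Putting the pieces together, $\Mod_p\P\subseteq\P^{\O}$, hence $\BPP^{\Mod_p\P}\subseteq\BPP^{\P^{\O}}=\BPP^{\O}$; composing with the Toda--Ogiwara theorem $\PH\subseteq\BPP^{\Mod_p\P}$ gives the chain \eq{intro-worst}. I expect the main obstacle to be the $\{0,1\}$-restriction in the previous paragraph rather than the cyclotomic algebra: one has to check that no step of Valiant's reduction chain introduces a multiplicative factor divisible by $p$ once the matrix is forced to be binary, so that $\Mod_p\P$-hardness really descends to binary matrices. A secondary point worth flagging is that this route is special to odd prime powers: when $m$ has two distinct prime factors $\Phi_m(1)=1$, so $1-\zeta_m$ is a unit and $\pi$ degenerates, and when $m=2^{k}$ it only reaches the tractable permanent modulo $2$.
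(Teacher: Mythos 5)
Your proof is correct, but the key arithmetic step is organized differently from the paper's. The paper never constructs a residue map: instead it takes the difference of the two representations, $\sum_i(A_i-a_i)\zeta_m^i=0$, raises it to the $q$-th power ($q=p^k$), uses the characteristic-$p$ binomial identity to write $\sum_i(A_i-a_i)^q+ph=0$ with $h\in\mathbb{Z}[\zeta_m]\cap\mathbb{Q}=\mathbb{Z}$, and then applies Fermat's little theorem to conclude $\sum_i a_i\equiv\sum_i A_i\equiv\Per(X)\pmod p$; this is an elementary computation that only implicitly uses $\Phi_m(1)=p$. Your route goes through the total ramification of $p$ in $\mathbb{Q}(\zeta_{p^k})$ and the reduction homomorphism $\pi\colon\mathbb{Z}[\zeta_m]\to\mathbb{F}_p$ with $\pi(\zeta_m)=1$, which yields the same congruence in one line and makes transparent both why any representation the oracle returns gives the same answer modulo $p$ and why the method dies when $m$ has two distinct prime factors (then $\Phi_m(1)=1$, so $1-\zeta_m$ is a unit), which is exactly the limitation the paper notes. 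After that, the two arguments finish identically, invoking Valiant's $\Mod_p\P$-hardness of computing $\Per(X)\bmod p$ and the generalized Toda theorem $\PH\subseteq\BPP^{\Mod_p\P}$. One point where you are in fact more careful than the paper: the paper cites Valiant's theorem as stated for matrices over $\mathbb{F}_p$ and applies it to binary inputs without comment, whereas you sketch the descent to $\bit$-matrices (the power-of-$4$ factor being a unit modulo the odd prime $p$, and gadget replacement of non-binary entries); this descent is standard and does go through, so the caveat you flag is a detail to cite or spell out, not a gap.
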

The first inequality in \eq{intro-worst} follows from the generalized Toda's theorem \cite{toda1992counting,toda1991pp}.
To prove the second, we rely on the fact that for prime power $m=p^k$, the cyclotomic polynomial evaluated on $1$ is $p$.
This means that the summation of the coefficients is congruent to $\Per(X)$ modulo $p$.
Then the second inequality follows from the fact that computing $\Per(X)$ modulo an odd prime is known to be $\Mod_p\P$-hard by Valiant \cite{valiant1979completeness}. 

While our definition of exact computation seems to be natural, one may wonder whether the requirement to output all the coefficients correctly would be too strong to allow the existence of efficient classical algorithms.
We answer this question by studying hardness of approximating $\Per_{\zeta_m}(X)$ in $\ell_2$-norm.
In particular, we show that finding a close enough complex number to $\Per_{\zeta_m}(X)$ already suffices to extract all the coefficients using an $\NP$ oracle.

\subsubsection{Hardness of Approximation}

Next, we show that an oracle that approximates $\Per_{\zeta_m}(X)$ to within polynomial multiplicative error is also $\Mod_p\P$-hard.
\begin{theorem}[Hardness of approximation, informal]\label{thm:approximation-informal}
For odd prime $p$, prime power $m=p^k$ and a primitive $m$-th root of unity $\zeta_m$, it is $\Mod_p\P$-hard to approximate $\zeta_m$-permanent of $\{0,1\}$-matrices.
More explicitly, for every algorithm $\O$ that on input an $n\times n$ binary matrix $X$, outputs an approximation to within multiplicative error $g=\poly(n)$, it holds that 
\begin{align}
    \PH \subseteq \BPP^{\Mod_p\P} \subseteq \BPP^{\NP^\O}.
\end{align}
\end{theorem}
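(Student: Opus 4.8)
The plan is to reduce the \emph{exact} computation of $\Per_{\zeta_m}$ of $\{0,1\}$-matrices to the approximation oracle $\O$, and then invoke \thm{worst-informal} together with the generalized Toda theorem (which already supplies $\PH\subseteq\BPP^{\Mod_p\P}$). Concretely, I would show that the exact coefficient oracle of \thm{worst-informal} is realizable in $\class{FP}^{\NP^\O}$: on input a $\{0,1\}$-matrix $X$, output the integers $a_0,\dots,a_d$ (with $d=\deg\Phi_m-1$) such that $\Per_{\zeta_m}(X)=\sum_i a_i\zeta_m^i$. For $\{0,1\}$-matrices one has $|a_i|\le 2\cdot n!$ --- group the $n!$ monomials by inversion number modulo $m$, then reduce modulo the prime-power cyclotomic polynomial $\Phi_{p^k}(x)=\sum_{j=0}^{p-1}x^{jp^{k-1}}$, which for a prime power is a single step that subtracts at most one residue coefficient from another --- so each $a_i$ has $O(n\log n)$ bits and the whole tuple is guessable by an $\NP$ machine.

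The bridge is the observation that a polynomial multiplicative-error approximation, although far too coarse to recover $\Per_{\zeta_m}(X)$ by rounding to $\mathbb{Z}[\zeta_m]$ --- two values of $\Per_{\zeta_m}$ on $\{0,1\}$-matrices can lie within $2^{-\Theta(n\log n)}$ of one another, by a Liouville-type estimate (their difference is a nonzero algebraic integer of $\mathbb{Q}(\zeta_m)$ all $\phi(m)$ of whose conjugates have modulus $O(n!)$, hence modulus at least $(O(n!))^{-(\phi(m)-1)}$), whereas amplifying via $\Per_{\zeta_m}(X^{\oplus t})=\Per_{\zeta_m}(X)^t$ drives only the \emph{relative} error down to $O(\log n/t)=1/\poly(n)$, an absolute error of order $|\Per_{\zeta_m}(X)|/\poly(n)$ that can be as large as $n!/\poly(n)$ --- is nevertheless enough to \emph{decide whether $\Per_{\zeta_m}(Y)=0$} for any $\{0,1\}$-matrix $Y$ we present: a multiplicative approximation of $0$ is $0$, whereas a nonzero $\Per_{\zeta_m}(Y)$ of polynomial bit size is bounded away from $0$ by $2^{-\poly}$, comfortably above the oracle's output precision. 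If $\O$ is randomized (or if its precision must be made robust), amplify by $\Per_{\zeta_m}(Y^{\oplus t})=\Per_{\zeta_m}(Y)^t$ and take a majority vote; this is where the outer $\BPP$ is spent.

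Given this zero-testing subroutine, an $\NP^\O$ predicate, on input $X$ and a candidate tuple $(b_0,\dots,b_d)$ with $|b_i|\le 2\cdot n!$, builds in polynomial time a $\{0,1\}$-matrix $Y=Y(X,b_0,\dots,b_d)$ with $\Per_{\zeta_m}(Y)=\Per_{\zeta_m}(X)-\sum_i b_i\zeta_m^i$, queries $\O(Y)$, and accepts iff the answer is $0$; by uniqueness of the cyclotomic representation exactly $(a_0,\dots,a_d)$ is accepted, and a deterministic machine with this $\NP^\O$ oracle recovers the unique witness bit by bit. Building $Y$ is a $\zeta_m$-permanent analogue of Valiant's gadget reductions: $\Per_{\zeta_m}(A\oplus B)=\Per_{\zeta_m}(A)\Per_{\zeta_m}(B)$ realizes products; the permutation matrix of a $\sigma$ with $\ell(\sigma)=j$ realizes the monomial $\zeta_m^{\,j}$; the scalar $-1$ is realized through the cyclotomic identity $1+\zeta_m^{p^{k-1}}+\dots+\zeta_m^{(p-1)p^{k-1}}=0$, that is, $-1=\sum_{j=1}^{p-1}\zeta_m^{jp^{k-1}}$; integers up to $2\cdot n!$ are realized from their binary expansions; and an addition gadget combines these and adds the outcome onto $X$. (If the earlier sections already contain a Valiant-style reduction from $\Per_{\zeta_m}$ over $\mathbb{Z}$, or over $\mathbb{Z}[\zeta_m]$, to $\Per_{\zeta_m}$ over $\{0,1\}$, I would invoke that instead, after which all of these gadgets become routine over a ring.)

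The main obstacle is exactly this last construction. Unlike the ordinary permanent, $\Per_{\zeta_m}$ is \emph{not} invariant under row or column permutations, so Valiant's gadgets do not transfer verbatim: auxiliary rows and columns must be appended at the high-index end, and the extra powers of $\zeta_m$ that a Laplace-type expansion along them introduces have to be tracked and cancelled, all while keeping the final matrix genuinely $\{0,1\}$ rather than merely integral. Once the gadget delivering $Y$ with $\Per_{\zeta_m}(Y)=\Per_{\zeta_m}(X)-\sum_i b_i\zeta_m^i$ is in place, one obtains $\PH\subseteq\BPP^{\Mod_p\P}\subseteq\BPP^{\NP^\O}$, the second inclusion because the exact coefficient oracle of \thm{worst-informal} is realized in $\class{FP}^{\NP^\O}$.
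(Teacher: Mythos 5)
There is a genuine gap, and you name it yourself: the entire reduction rests on a Valiant-style gadget that, given a $\{0,1\}$-matrix $X$ and a candidate tuple $(b_0,\dots,b_d)$, produces a $\{0,1\}$-matrix $Y$ with $\Per_{\zeta_m}(Y)=\Per_{\zeta_m}(X)-\sum_i b_i\zeta_m^i$, and this gadget is never constructed. For the ordinary permanent such interpolation/arithmetization gadgets exist because $\Per$ is invariant under row and column permutations and admits clean product and sum gadgets; for $\Per_{\zeta_m}$ the coefficient $z^{\ell(\sigma)}$ is not a class function, every block or Laplace-type manipulation shifts inversion numbers in a position-dependent way, and realizing even the scalar $-1$ or an ``addition gadget'' while keeping all entries in $\{0,1\}$ is exactly the kind of structural statement the paper never proves and that no cited work supplies. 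Your parenthetical hope that ``the earlier sections already contain a Valiant-style reduction from $\Per_{\zeta_m}$ over $\mathbb{Z}$ (or $\mathbb{Z}[\zeta_m]$) to $\Per_{\zeta_m}$ over $\{0,1\}$'' is not satisfied: no such reduction appears in the paper. Without the gadget, your $\NP^{\O}$ predicate cannot be implemented, so the second inclusion is unproven. (The surrounding scaffolding --- bounding $|a_i|\le 2\cdot n!$ for prime-power $m$, uniqueness of the representation in the basis $1,\zeta_m,\dots,\zeta_m^{\phi(m)-1}$, zero-testing via a multiplicative oracle, prefix search with an $\NP^\O$ oracle --- is fine, but it all hangs on the missing construction.)

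The paper's proof avoids needing any such gadget. It first reduces the error numerically: using the expansion theorem \eq{expan}, it forms a telescoping product of ratios $\Per_z(X_{k-1})/\Per_z(X_k)$ over a chain of submatrices and refines each ratio by a grid search on queries $\O(X^{[t]})$ (\thm{we to wa}, adapting Aaronson--Arkhipov's Theorem~28), turning a $\poly(n)$ multiplicative approximation of $|\Per_{\zeta_m}(\cdot)|^2$ into a $2^{-\poly(n)}$-accurate complex estimate of $\Per_{\zeta_m}(X)$ --- precisely the step you argue (correctly, for your amplification-by-direct-sums idea) cannot be done by powering alone. It then uses Liouville separation (\cor{distance-bound}) together with an $\NP$ oracle for the close-algebraic-integer problem (\prob{alg}, \thm{caip-np}) to binary-search the exact coefficients (\thm{search-caip}), and concludes via \thm{classical hard} and Toda. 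If you want to salvage your route, you would have to actually build the $\{0,1\}$ arithmetization gadgets for $\Per_{\zeta_m}$ --- a substantial and independently interesting result --- or else fall back on the paper's error-reduction-plus-CAIP strategy.
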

\thm{approximation-informal} implies that an efficient simulation of $\O$ leads to a collapse of the polynomial hierarchy.
The first inequality, again, follows from generalized Toda's theorem \cite{toda1992counting,toda1991pp}.
We then prove the second inequality in the following two steps.
First, we show that there is a deterministic algorithm using polynomially many queries to $\O$ to decrease the error to $\delta=2^{-\poly(n)}$ in $\ell_2$-norm. 
Our algorithm always output a complex number $\delta$-close to $\Per_{\zeta_m}(X)$ successfully.
However, this step alone does not yield a tuple of coefficients that satisfies our definition of exact computation.
We overcome this issue by invoking a variant of Liouville's approximation theorem. 
From the theorem, we show that two arbitrary algebraic integers, when represented as the minimal polynomial in $\zeta_m$ with coefficients in $[-A,B]$ for $A,B=2^{\poly(n)}$ must be $2^{-\poly(n)}$-far in $\ell_2$-distance.
Using an oracle that determines whether there is a $\delta$-close algebraic integer, we can perform a binary search to identify all the coefficients.
Finally, we show determining whether there is an algebraic integer that is close enough to our approximation, and whose coefficients are bounded in $[-A,B]$ that can be solved using a non-deterministic Turing machine.

As a bonus, in \thm{gen-q}, we give evidence that there is no polynomial algorithm for computing most of the roots of unity.

\subsubsection{Average-Case Hardness of Exactly Computing $z$-Permanent}

For the average-case hardness of 1-permanent, Aaronson and Arkhipov \cite{aaronson2011computational} showed that under reasonable assumptions, computing $\Per(X)$ exactly for most of the Gaussian matrices is $\sharpP$-hard. 
Haferkamp, Hangleiter, Eisert, and Gluza \cite{haferkamp2020contracting} proved the hardness for most of the truncated uniform distribution.

In \sec{wa}, we adapt their techniques to show that for every $z$, if computing $z$-permanent is hard in the worst case, via random self-reducibility, computing $z$-permanent of Gaussian matrices is hard.  
This proof utilizes the Berlekamp-Welch algorithm and the fact that $z$-permanent is a low-degree polynomial of its matrix elements.

Moreover, we extend the hardness results to a family of probability distributions that satisfy a property, called the \emph{strong autocorrelation property}.
Specifically, a random matrix is said to satisfy the strong autocorrelation property if the total variational distance between shifted or scaled original distribution and the original one is small (see \dfn{autocor} for a detailed definition).
We show that computing $z$-permanent of matrices drawn from distributions satisfying this property remains to be $\sharpP$-hard. 
The hardness result can thus be applied to, for example, a uniform distribution centered around zero and truncated at some chosen threshold.

\subsubsection{Average-Case Hardness of Approximating $z$-Permanent}

In \sec{van}, 
we show that if one can compute particular $\Omega(n^2)$ points of $\Per_{z}(X)$ in time $\poly(n,1/\epsilon,1/\delta)$, then one can determine $\Per(X)$ for $X$ sampled from a distribution $\H$ that satisfies the strong autocorrelation property. 
This implies that given a distribution $\H$ and $X\sim\H$, approximating $\Per(X)$ is hard. 
Our results work for multiplicative and additive errors.

In addition, we also show that approximating $\Per_z(X)$ is as hard as approximating $\Per_{z^*}(X)$ for any distribution $X$ such that $X,X^*$ are identically distributed. The proof works for $|\Per_z(X)|^2$ as well.

\subsection{Related Work}
As we described in the introduction, permanent and determinant have completely different computational complexity. Any path that connects both cases is interesting from the point of computational complexity. More generally, one can consider such 
\begin{equation}\label{eq:gen}
   \sum_{\nu \in S_n}f(\nu) \prod_{i=1}^{n}A_{i, \nu(i)}
\end{equation} function and ask the complexity of functions in the form of \eq{gen}. 
People have already investigated a few function families including immanants \cite{mertens2011complexity,hartmann1985complexity,burgisser2000computational,de2013determinant,spivak2020immanants,burgisser2000computational} and fermionants \cite{mertens2011complexity,bjorklund2019generalized,rugy2013determinant,de2013determinant}. However, our work is completely different from them in the following perspective:
\begin{itemize}
    \item Unlike immanants and fermionants, $z$-permanent is not a class function. To our best knowledge, the complexity of such matrix function \eq{gen} has only been investigated when $f(\nu)$ is a class function. So one can not use the standard immanant results.
    \item Since $z$-permanent is not a class function, we provide completely new techniques involving algebraic number theory.   
    \item To our best knowledge, All results of approximate hardness in the worst case and average case are investigated for 1-permanent. 
    For $f(\nu)\neq 1$, no approximate hardness results are given previously in the literature. 
    In \sec{app1}, \sec{wa}, and \sec{van}, we provide technical tools which lead to the results of approximate hardness for the general function $f$. 
\end{itemize}

\subsection{Organization}

The rest of the paper is structured as follows. A preliminary background is presented in \sec{pre}. 
In \sec{classical}, for worst-case hardness, we show that for primitive $m$-th root of unity for odd prime power $m=p^k$, exactly computing $z$-permanent is $\Mod_p\P$-hard.  
In \sec{app1}, we show that efficiently approximating the $z$-permanent of binary matrices implies a collapse of the polynomial hierarchy.
In~\sec{wa}, based on the worst-case hardness, we give average-case hardness of exactly computing $z$-permanent for a family of distributions satisfying the strong autocorrelation property.
In~\sec{van}, we move to the average-case hardness. 
Finally, in~\sec{final}, we summarize our results and discuss a few open questions related to $z$-permanent.

\section{Preliminaries}\label{sec:pre}

\subsection{Introduction to Algebraic Number Theory}\label{sec:field}
Here we introduce some facts about field extension \cite{benkart1987abstract} and algebraic number theory \cite{lang2013algebraic, neukirch2013algebraic, weiss1998algebraic} relevant to our work, although in this paper, we will only use facts about cyclotomic fields.

First, we recall basic definitions in abstract algebra and algebraic number theory.
An abelian group is a set of elements closed under its group operation often denoted $+$ (for an additive group) or $\cdot$ (for a multiplicative group) and every element has an inverse.
A commutative ring is an additive group that also has an associative multiplication operation and a multiplicative identity $1$.
A field is a ring, and its non-zero elements form an abelian multiplicative group.
For example, the set $\mathbb Q$ of rational numbers is a field and the set $\mathbb Z$ of integers forms a ring.

A field $K$ is said to be an extension field (or simply extension), denoted $K/F$, of field $F$ if $K$ contains $F$ as a subfield.
The extension field $K$ can be viewed as a vector space over $F$.
The degree of the field extension $K/F$, denoted $[K:F]$, is the dimension of $K$ as a vector space of $F$.
Given a field $F$ and some value $\alpha$, let $F(\alpha)$, called $F$ adjoined $\alpha$, be the smallest field containing $F$ and $\alpha$.
The operation of adjoining an element to a field is a field extension.
For example, $K=\mathbb Q(\zeta_3)=\{a+b\zeta_3:a,b\in\mathbb Q\}$ is an extension field of $\mathbb Q$ of degree two for $\zeta_3=e^{i2\pi/3}$ since every element of $K$ can be written as a vector of two components in $\mathbb Q$.
We also denote $\mathbb Z[x]$ for indeterminant $x$ the ring of polynomials with integer coefficients.
For complex number $\alpha$, the set $\mathbb Z[\alpha]=\{f(\alpha):f(x)\in\mathbb Z[x]\}$ forms a ring. 

A number field is an extension field of $\mathbb Q$ of finite degree.
We recall the definition of number fields and algebraic numbers. 
\begin{definition}
An algebraic number field (or simply number field) is an extension field $K$ of the field of rational numbers $\mathbb {Q}$ such that the field extension $K/\mathbb {Q}$ has finite degree (hence is an algebraic field extension).
\end{definition}

\begin{definition}
A complex number $x\in\mathbb{C}$ is an algebraic number if there exists a nonzero polynomial $f$ with rational coefficients (equivalently, integral) such that $x$ is a root of $f$. 
The set of algebraic numbers $\bar{\mathbb{Q}}$ forms a subfield of $\mathbb{C}.$
\end{definition}
\begin{definition}
A complex number $x\in\mathbb{C}$ is an algebraic integer if there exists a nonzero monic polynomial $f$ (whose leading coefficient equals one) with integral coefficients such that $x$ is a root of $f$. 
\end{definition}
The set of algebraic integer forms a subring of $\bar{\mathbb{Q}}$.
We also have $\alpha \in K$ is an algebraic integer if the minimal monic polynomial of $\alpha$ over $\mathbb {Q}$ is in $\mathbb{Z}[x]$ \cite{lang2013algebraic}. We have a very simple fact about rational field and algebraic number using Gauss's lemma. 
\begin{theorem}[{\cite{lang2013algebraic}}]\label{thm:az}
Let $\O$ be the set of algebraic integers.
The intersection of algebraic integers and $\mathbb{Q}$ is $\mathbb{Z}$, i.e., $\mathbb{Q}\cap\O=\mathbb{Z}.$
\end{theorem}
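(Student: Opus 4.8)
The plan is to establish the equality $\mathbb{Q}\cap\O=\mathbb{Z}$ by proving the two inclusions separately, the forward one being trivial and the reverse one following from the rational root theorem, which is itself a consequence of Gauss's lemma (as hinted in the remark preceding the statement).

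First I would observe that $\mathbb{Z}\subseteq\mathbb{Q}\cap\O$: every $n\in\mathbb{Z}$ is rational and is a root of the monic integral polynomial $x-n\in\mathbb{Z}[x]$, hence an algebraic integer. For the reverse inclusion $\mathbb{Q}\cap\O\subseteq\mathbb{Z}$, take $\alpha\in\mathbb{Q}\cap\O$ and write it in lowest terms as $\alpha=a/b$ with $a\in\mathbb{Z}$, $b\geq 1$, and $\gcd(a,b)=1$. Since $\alpha\in\O$, there is a monic polynomial $f(x)=x^n+c_{n-1}x^{n-1}+\cdots+c_1x+c_0\in\mathbb{Z}[x]$ with $f(\alpha)=0$. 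Multiplying the identity $f(a/b)=0$ through by $b^n$ gives
\[
a^n+c_{n-1}a^{n-1}b+\cdots+c_1ab^{n-1}+c_0b^n=0,
\]
so $a^n=-b\bigl(c_{n-1}a^{n-1}+\cdots+c_0b^{n-1}\bigr)$ is divisible by $b$. Since $\gcd(a,b)=1$ implies $\gcd(a^n,b)=1$, this forces $b=1$, i.e. $\alpha=a\in\mathbb{Z}$.

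I would also record the alternative, perhaps more conceptual route matching the remark before the statement: the minimal monic polynomial $g$ of $\alpha$ over $\mathbb{Q}$ divides $f$ in $\mathbb{Q}[x]$, and since $f$ is monic with integer coefficients, Gauss's lemma shows $g\in\mathbb{Z}[x]$; but for rational $\alpha$ one has $g(x)=x-\alpha$, which is integral precisely when $\alpha\in\mathbb{Z}$. Either argument closes the proof.

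I do not expect any real obstacle here, as the result is elementary. The only points requiring (routine) care are citing the appropriate form of Gauss's lemma / the rational root theorem and noting that coprimality is preserved under powers, $\gcd(a,b)=1\Rightarrow\gcd(a^n,b)=1$; both are standard facts that I would simply invoke.
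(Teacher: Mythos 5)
Your proof is correct. The paper does not actually prove this statement: it cites it from Lang's text, with only the remark that it follows ``using Gauss's lemma.'' Your secondary argument (the minimal polynomial of a rational $\alpha$ is $x-\alpha$, and Gauss's lemma forces it to lie in $\mathbb{Z}[x]$ when $\alpha$ is an algebraic integer) is exactly the route the paper gestures at, while your primary argument --- writing $\alpha=a/b$ in lowest terms, clearing denominators in a monic integral relation, and concluding $b\mid a^n$ hence $b=1$ --- is the elementary rational root theorem proof, fully self-contained and needing nothing beyond coprimality being preserved under powers. Either argument suffices; no gaps.
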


\begin{definition}[Ring of integers]

The ring of integers $\O_K$ of an algebraic number field $K$ is the ring of all algebraic integers contained in $K$, i.e., $\O_K=\O\cap K$. 

\end{definition}

The cyclotomic field is a number field obtained by adjoining a primitive root of unity to $\mathbb Q$.
In this paper, we denote $\zeta_n$ an $n$th primitive root of unity for integer $n>1$.
Note that a choice of the primitive roots of unity is not unique.
Indeed, for $n>1$, $e^{2\pi ia/n}$ is a primitive root of unity if $\gcd(a,n)=1$.
\begin{definition}[Cyclotomic fields]
The $n$th cyclotomic field is the extension $\mathbb{Q}(\zeta_n)$ of $\mathbb{Q}$ generated by $\zeta_n$. More explicitly:
\begin{equation}
    \mathbb{Q}(\zeta_n)=\left\{ \left.\sum_{i=0}^{n-1} a_i \zeta_n^i \right| a_i \in \mathbb{Q}\right\}.
\end{equation}
\end{definition}
While there can be multiple choices of $\zeta_n$, the field $\mathbb Q(\zeta_n)$ is unique.
This follows from a simple observation: 
Let $\zeta,\xi$ be two distinct $n$th primitive roots of unity. 
Since there exists $a\in\{0,1,\ldots,n-1\}$ and $\gcd(a,n)=1$ such that $\xi=\zeta^a$,
$\mathbb Q(\xi)\subseteq \mathbb Q(\zeta)$ and thus they are equal.
Similarly, we consider the ring 
\begin{equation}
    \mathbb{Z}[\zeta_n]=
    \left\{ \left.\sum_{i=0}^{n-1} a_i \zeta_n^i \right|a_i \in \mathbb{Z}\right\}.
\end{equation}
For cyclotomic field $\mathbb Q(\zeta_n)$, the ring of integers is $\mathbb Z[\zeta_n]$.
\begin{theorem}[{\cite{lang2013algebraic}}]\label{thm:alg}
For integer $n>1$ and $K=\mathbb Q(\zeta_n)$, $\O_K=K\cap \O=\mathbb{Z}[\zeta_n]$.
\end{theorem}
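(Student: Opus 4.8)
The statement has two equalities; $\O_K = K\cap\O$ is simply the definition of the ring of integers recalled above, so the plan is to prove $\O_K = \mathbb{Z}[\zeta_n]$. I would run the classical two-stage argument: first settle the prime-power case $n=p^k$, then deduce the general case by reduction to it. Throughout I will use that $[\mathbb{Q}(\zeta_n):\mathbb{Q}] = \phi(n)$ (irreducibility of the cyclotomic polynomial $\Phi_n$) and the index formula $\operatorname{disc}(\mathbb{Z}[\alpha]) = [\O_K:\mathbb{Z}[\alpha]]^2\,\operatorname{disc}(\O_K)$. The inclusion $\mathbb{Z}[\zeta_n]\subseteq\O_K$ is immediate, since $\zeta_n$ is a root of the monic integral polynomial $x^n-1$, hence an algebraic integer, and algebraic integers form a ring; the content is the reverse inclusion.

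\textbf{Prime-power case.} Write $\zeta=\zeta_{p^k}$, $\phi=\phi(p^k)=p^{k-1}(p-1)=[K:\mathbb{Q}]$, and $\lambda=1-\zeta$. The first key step is to analyze the prime $p$ in $\O_K$: from $\Phi_{p^k}(x)=\prod_{1\le a\le p^k,\ \gcd(a,p)=1}(x-\zeta^a)$ and $\Phi_{p^k}(1)=p$, together with the fact that $(1-\zeta^a)/(1-\zeta)$ is a unit of $\mathbb{Z}[\zeta]$ whenever $\gcd(a,p)=1$, one obtains $p=u\lambda^{\phi}$ for a unit $u$. Hence $(\lambda)$ is a prime ideal of $\O_K$ with residue field $\mathbb{F}_p$ and $p\O_K=(\lambda)^{\phi}$, so $p$ is totally ramified; moreover $\operatorname{disc}(\mathbb{Z}[\zeta])$ is, up to sign, a power of $p$ (computing it as $\pm N_{K/\mathbb{Q}}(\Phi_{p^k}'(\zeta))$), whence by the index formula $[\O_K:\mathbb{Z}[\zeta]]$ is a power of $p$. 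The second key step is successive approximation modulo $\lambda$: for $x\in\O_K$, since $\O_K/(\lambda)=\mathbb{F}_p$ is the image of $\mathbb{Z}$, pick $a_0\in\mathbb{Z}$ with $x-a_0\in(\lambda)$, write $x-a_0=\lambda y_1$ with $y_1\in\O_K$, iterate, and after $\phi$ steps obtain $x-\sum_{i=0}^{\phi-1}a_i\lambda^i\in(\lambda)^{\phi}=p\O_K$ with $a_i\in\mathbb{Z}$. Thus $\O_K=\mathbb{Z}[\zeta]+p\O_K$, so $G:=\O_K/\mathbb{Z}[\zeta]$ satisfies $G=pG$ while being a finite abelian group of $p$-power order; since a nontrivial finite $p$-group has nontrivial $p$-torsion, $G=0$ and $\O_K=\mathbb{Z}[\zeta_{p^k}]$.

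\textbf{General case.} Write $n=p_1^{a_1}\cdots p_r^{a_r}$ and choose the primitive root with $\zeta_n=\prod_i\zeta_{p_i^{a_i}}$; then $\mathbb{Q}(\zeta_n)$ is the compositum of the fields $K_i:=\mathbb{Q}(\zeta_{p_i^{a_i}})$, and multiplicativity of $\phi$ gives $[\mathbb{Q}(\zeta_n):\mathbb{Q}]=\prod_i[K_i:\mathbb{Q}]$. By the prime-power case $\O_{K_i}=\mathbb{Z}[\zeta_{p_i^{a_i}}]$, whose discriminant is a power of $p_i$, so the discriminants of the $K_i$ are pairwise coprime. The standard lemma that number fields with pairwise coprime discriminants whose compositum attains the multiplicative degree satisfy $\O_{K_1\cdots K_r}=\O_{K_1}\cdots\O_{K_r}$ then yields, by induction on $r$, $\O_{\mathbb{Q}(\zeta_n)}=\prod_i\mathbb{Z}[\zeta_{p_i^{a_i}}]=\mathbb{Z}[\prod_i\zeta_{p_i^{a_i}}]=\mathbb{Z}[\zeta_n]$.

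The main obstacle I anticipate is the prime-power case, and within it the ramification computation: proving $(1-\zeta_{p^k})$ generates a prime ideal, that $p$ is totally ramified, and hence that $\operatorname{disc}(\mathbb{Z}[\zeta_{p^k}])$ is a prime power. Once that is established the remainder is formal — one successive-approximation argument and a one-line group-theoretic step. The compositum lemma invoked in the general case is another nontrivial black box, but its only delicate hypothesis, that the degrees multiply, is supplied for free by multiplicativity of $\phi$.
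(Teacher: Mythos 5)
The paper offers no proof of \thm{alg}: it is invoked as a standard fact and attributed to the cited reference \cite{lang2013algebraic}. Your argument is precisely the classical textbook proof of that fact --- the prime-power case via $p=u(1-\zeta_{p^k})^{\phi(p^k)}$, the discriminant/index computation forcing $[\O_K:\mathbb Z[\zeta_{p^k}]]$ to be a $p$-power, successive approximation giving $\O_K=\mathbb Z[\zeta_{p^k}]+p\O_K$ and hence equality, followed by the coprime-discriminant compositum lemma for general $n$ --- and it is correct as outlined, so it matches the source the paper relies on rather than departing from it.
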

We will need some useful properties about Euler's totient function \cite{lang2013algebraic}.
\begin{definition}[Euler's totient function {\cite{lang2013algebraic}}]\label{dfn:euler}
Euler's totient function $\phi$ counts the number of positive integers up to a given integer $n$ that are relatively prime to $n$. More explicitly, it can be computed through the following formula:
\begin{equation}
    {\displaystyle \phi (n)=n\prod _{p\mid n}\left(1-{\frac {1}{p}}\right),}
\end{equation}
where the product is over the distinct prime numbers dividing $n.$
\end{definition}

We will need the following lower bound of $\phi(n)$ \cite{rosser1962approximate}.
\begin{theorem}[{\cite{rosser1962approximate}}]
For $n>2$,
\begin{equation} 
  \phi (n)>{\frac {n}{e^{\gamma }\;\log \log n+{\frac {3}{\log \log n}}}}.
\end{equation}
simply speaking: the order of $\phi(n)$ is nearly $n$. So this implies that we always have
$\phi(\Omega(n^c))>n$ for any $c>1$ for large enough $n.$ Here $\gamma$ is Euler–Mascheroni constant.
\end{theorem}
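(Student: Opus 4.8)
The plan is to exploit the multiplicative identity $\phi(n)/n=\prod_{p\mid n}(1-1/p)$ from \dfn{euler} and to locate the extremal $n$. First I would observe that for any target bound $n\le N$, the quantity $\phi(n)/n$ depends only on the \emph{set} of prime divisors of $n$, that each factor $1-1/p$ lies in $(0,1)$ and is increasing in $p$, and that the radical $\prod_{p\mid n}p$ is at most $n$. Consequently, if $n$ has $k$ distinct prime factors $q_1<\dots<q_k$, then replacing them by the first $k$ primes $p_1<\dots<p_k$ only shrinks the product and keeps the radical below $N$; so the minimum of $\phi(n)/n$ over $n\le N$ is attained at a \emph{primorial} $P_k=p_1\cdots p_k$, and it suffices to lower bound $\prod_{p\le y}(1-1/p)$, where $y=p_k$ is the largest prime with $P_k\le N$.

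Second, I would control the size of $y$. Since $\log P_k=\theta(y)=\sum_{p\le y}\log p$ and $\theta(y)\sim y$ by the prime number theorem (or a Chebyshev bound), the constraint $\theta(y)\le\log N$ forces $y\le(1+o(1))\log N$. Taking $N$ of the same order as $n$, the relevant primes reach only up to about $\log n$, hence $\log y\le\log\log n+o(\log\log n)$.

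Third, I would invoke Mertens' third theorem, $\prod_{p\le y}(1-1/p)=\dfrac{e^{-\gamma}}{\log y}\bigl(1+o(1)\bigr)$; combined with $\log y\approx\log\log n$ this yields $\phi(n)/n\ge\dfrac{e^{-\gamma}}{\log\log n}\bigl(1+o(1)\bigr)$, the leading term of the claimed inequality. The exact lower-order correction $3/\log\log n$, and the assertion that the bound holds for \emph{every} $n>2$ rather than merely asymptotically, is exactly where Rosser and Schoenfeld do the real work: one needs an effective form of Mertens' theorem with a fully explicit error term together with a finite numerical check for small $n$. I would treat that refinement as a black box and cite \cite{rosser1962approximate}, since it is a careful but routine computation rather than a new idea; the structural part above — reducing to primorials and bounding their largest prime factor — is the conceptual content and is short.

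Finally, the stated consequence follows quickly: if $m\ge c_0 n^c$ for constants $c_0>0$, $c>1$, then $\log\log m=\log\log n+O(1)$, so the denominator $e^{\gamma}\log\log m+3/\log\log m$ is $O(\log\log n)$, while the numerator $m$ is polynomial in $n$; hence $\phi(m)\ge m/(e^{\gamma}\log\log m+3/\log\log m)=\Omega(n^{c}/\log\log n)>n$ for all sufficiently large $n$. The only genuine obstacle in the whole argument is pinning down the constant $e^{\gamma}$ in Mertens' theorem with explicit error terms; everything else is elementary.
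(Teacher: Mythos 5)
Your proposal is correct, but note that the paper does not prove this statement at all: it is imported wholesale from Rosser and Schoenfeld \cite{rosser1962approximate}, so the "proof" in the paper is the citation itself. Your sketch supplies the conceptual skeleton that the paper omits: the reduction of $\min_{n\le N}\phi(n)/n$ to primorials (sound, since replacing the $k$ distinct prime factors of $n$ by the first $k$ primes only decreases each factor $1-1/p$ while keeping the product of the primes at most $n$), the Chebyshev/PNT bound $\theta(y)\le\log N$ forcing the largest relevant prime to be of order $\log n$, and Mertens' third theorem giving the leading term $e^{-\gamma}/\log\log n$; you then correctly black-box the explicit constant $e^{\gamma}$, the $3/\log\log n$ correction, and the validity for all $n>2$ to \cite{rosser1962approximate}, which is exactly where the genuinely technical effective estimates live. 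Your derivation of the consequence the paper actually uses — $\phi(m)>n$ whenever $m=\Omega(n^c)$ with $c>1$ and $n$ large, since $\phi(m)=\Omega(n^c/\log\log n)$ — is also correct. The only point worth tightening if you wanted a self-contained asymptotic statement is that the primorial reduction compares $\phi(n)/n$ at different integers, so one should check monotonicity of the comparison function in $n$ (harmless for the asymptotic claim, and irrelevant once the exact inequality is cited). In short: the paper buys brevity by citing; your route buys an explanation of where the $e^{\gamma}\log\log n$ shape comes from and an explicit verification of the corollary, at no loss of rigor given the citation for the effective form.
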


For an $n$th primitive root of unity $\zeta_n$, the $n$th cyclotomic polynomial $\Phi_n$ is the minimal polynomial of $\zeta_n$ over $\mathbb Q$.

\begin{definition}[{\cite{lang2012cyclotomic}}]\label{dfn:cyclo}
For integer $n\geq 1$, the $n$th cyclotomic polynomial  is defined as
\begin{equation}
{\displaystyle \Phi _{n}(x)=\prod _{\stackrel {1\leq k\leq n}{\gcd(k,n)=1}}\left(x-e^{2i\pi {\frac {k}{n}}}\right)}.
\end{equation}
The polynomial $\Phi_n(x)$ has integer coefficients, i.e., $\Phi_n(x)\in\mathbb Z[x]$.
\end{definition}
We include some basic properties of cyclotomic polynomials useful in our work.
Since $\Phi_{n}(x)$ is irreducible over $\mathbb{Q}$, so it is the minimal polynomial of $\zeta_n$ over $\mathbb{Q}$,  the degree $[\mathbb{Q}(\zeta_n) : \mathbb{Q}]=\text{deg }\Phi_{n}= \phi(n)$. The minimal polynomial of every element in $\mathbb{Q}(\zeta_n)$ over $\mathbb Q$ has 
degree at most $\phi(n)$. The equivalent statement is that the following set
$\{ e^{2 \pi a i j/q} : 0 \leq j < \phi(q) \}$
is linearly independent over $\mathbb{Q}$.
For $n\geq 2$, 
\begin{align}
\Phi _{n}(0) &= 1,\nonumber \\
\Phi_{n}(1) &= 1, \quad\text{ if $n$ is not a prime power}, \nonumber\\
\Phi _{n}(1) &= p, \quad\text{  if $n=p^{k}$ is a prime power with $k \geq 1$.}
\end{align}

\begin{theorem}[Bounds for the coefficients of cyclotomic polynomials \cite{vaughan1975bounds}]\label{thm:cyclotomic-coefficient-bound}
Let $\Phi_n(x):=\sum_{i=0}^{\phi(n)}a_ix^i$ be the $n$th cyclotomic polynomial.
The maximum absolute value of coefficient $a_i$ of $\Phi_n(x)$ satisfies
\begin{equation}
    \max_{i\in\mathbb N, 0\leq i\leq \phi(n)} |a_i| \leq e^{\frac{1}{2}d(n)\log n},
\end{equation}
where $d(n)$ is the divisor function which counts the number of divisors of $n$ (including $1$ and $n$) for integer $n\geq 1$.\footnote{For instance, the divisors of $6$ are $1$, $2$, $3$, and $6$, and thus $d(6)=4$.} 
\end{theorem}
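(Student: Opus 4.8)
The plan is to derive the bound from the Möbius‑inversion factorization of the cyclotomic polynomial. Starting from $x^{n}-1=\prod_{d\mid n}\Phi_{d}(x)$, Möbius inversion gives $\Phi_{n}(x)=\prod_{d\mid n}(x^{d}-1)^{\mu(n/d)}$, and since $\sum_{e\mid n}\mu(e)=0$ for $n>1$ the signs cancel, so $\Phi_{n}(x)=\prod_{d\mid n}(1-x^{d})^{\mu(n/d)}$. Separating positive from negative exponents, I would write $\Phi_{n}(x)=P(x)/D(x)$ with $P(x)=\prod_{\mu(n/d)=1}(1-x^{d})$ and $D(x)=\prod_{\mu(n/d)=-1}(1-x^{d})$; because $\prod_{q\mid n}(1+\mu(q))=0$, the two products have the same number of factors, $p:=2^{\omega(n)-1}$, where $\omega(n)$ is the number of distinct primes of $n$.

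Next I would expand the reciprocal denominator as a power series with nonnegative coefficients, $1/D(x)=\sum_{k\ge 0}N(k)\,x^{k}$, where $N(k)$ counts the nonnegative integer solutions of $\sum_{\mu(n/d)=-1}d\,a_{d}=k$. Since $\deg\Phi_{n}=\phi(n)<n$, only the coefficients of $1/D$ of degree at most $\phi(n)$ matter, and any factor $1-x^{d}$ with $d>\phi(n)$ — in particular the always‑present factor $d=n$ in $P$ — can be discarded modulo $x^{\phi(n)+1}$. This yields $|a_{k}|\le \|P\|_{1}\cdot\max_{0\le j\le\phi(n)}N(j)$, where $\|P\|_{1}$ is the sum of the absolute values of the coefficients of the truncated $P$. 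I would bound $\|P\|_{1}\le 2^{p}$ by submultiplicativity of the $\ell_{1}$‑norm under polynomial multiplication, and bound $N(j)$ by a lattice‑point estimate; the crucial saving is that for $j\le\phi(n)<n$ the equation $\sum d\,a_{d}=j$ forces $a_{d}=0$ whenever $d>\phi(n)$, and the surviving exponents (each of the form $n$ divided by a squarefree integer) are typically far larger than $1$, so $N(j)\le (j+\Sigma)^{p'}/(p'!\,\Pi)$, where $p'$ is the reduced number of surviving factors and $\Sigma,\Pi$ are the sum and product of the surviving exponents exceeding $1$.

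Finally I would assemble the pieces: using $\phi(n)<n$, the identity $\prod_{d\mid n}d=n^{d(n)/2}$, and $p=2^{\omega(n)-1}\le d(n)/2$ (because $d(n)=\prod_{q\mid n}(v_{q}(n)+1)\ge 2^{\omega(n)}$), one checks that $\|P\|_{1}\cdot\max_{j}N(j)\le n^{d(n)/2}=e^{\frac{1}{2}d(n)\log n}$. The case $\omega(n)\le 2$ is handled separately and trivially, since then all coefficients of $\Phi_{n}$ lie in $\{-1,0,1\}$ by Migotti's theorem, so the bound is immediate.

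The main obstacle is the assembling step: one must control the product of $2^{p}$, the factorials arising in the volume bound, and the contribution of the genuinely small exponents (the single primes $q\mid n$ occurring among the $d$ with $\mu(n/d)=-1$ when $n$ is squarefree with an even number of prime factors) so that everything stays below $n^{d(n)/2}$ uniformly in $n$; this requires keeping precise track of which divisors survive the truncation and a true volume estimate for $N(j)$ rather than the crude $N(j)\le\binom{\phi(n)+p}{p}$. This bookkeeping is the technical heart of the cited theorem. An equivalent route takes logarithms, writing $\log\Phi_{n}(x)=\sum_{d\mid n}\mu(n/d)\log(1-x^{d})=\sum_{m\ge1}b_{m}x^{m}$ with $b_{m}=-\tfrac{1}{m}\sum_{d\mid\gcd(m,n)}\mu(n/d)d$, exploiting the cancellation in the inner Möbius sum to bound $|b_{m}|$, and dominating $\exp\!\big(\sum_{m}b_{m}x^{m}\big)$ coefficientwise by a series of the form $(1-x)^{-C}$; the same endgame care is needed there.
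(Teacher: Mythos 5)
The paper itself does not prove this statement: it is quoted from the literature (the bound is Bateman's classical estimate $A(n)\le n^{d(n)/2}$, cited here via Vaughan), so there is no in-paper argument to compare yours against. Your outline is exactly the standard route — Möbius factorization $\Phi_n=\prod_{d\mid n}(1-x^d)^{\mu(n/d)}$, split into $P/D$ with $2^{\omega(n)-1}$ factors each, expand $1/D$ with nonnegative coefficients, and bound $\|P\|_1\cdot\max_k N(k)$. The skeleton is correct. The genuine gap is that the decisive quantitative step is never carried out: the claim ``one checks that $\|P\|_1\cdot\max_j N(j)\le n^{d(n)/2}$'' is asserted via an unproved lattice-point estimate $N(j)\le (j+\Sigma)^{p'}/(p'!\,\Pi)$ together with unspecified bookkeeping about which factors ``survive'' truncation, and you then state yourself that this assembly is the unresolved technical heart. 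As written, the theorem's inequality is therefore not established; a reader cannot verify the only step where the specific bound $e^{\frac12 d(n)\log n}$ actually appears.

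The gap is real but entirely fillable, and more cheaply than you fear — no truncation of $P$, no volume/factorial estimates, and no tracking of surviving divisors are needed. Write $s=2^{\omega(n)-1}$ and let $e$ range over squarefree divisors of $n$ with $\mu(e)=-1$, so the exponents in $D$ are $n/e$. For $k\le\phi(n)\le n-1$ the per-factor count gives $N(k)\le\prod_{e}\bigl(\lfloor ke/n\rfloor+1\bigr)\le\prod_{e}\frac{(n-1)e+n}{n}\le\prod_{e}2e=2^{s}\,\mathrm{rad}(n)^{s/2}$, using that each prime of $n$ divides exactly $2^{\omega(n)-2}$ of these $e$ (valid for $\omega(n)\ge2$); note the relevant product is $\prod_{\mu(e)=-1}e=\mathrm{rad}(n)^{2^{\omega(n)-2}}$, not the identity $\prod_{d\mid n}d=n^{d(n)/2}$ you invoke. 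Combined with $\|P\|_1\le 2^{s}$ this yields $|a_i|\le 4^{s}\,\mathrm{rad}(n)^{s/2}\le 4^{s}n^{s/2}$, and since $d(n)\ge 2^{\omega(n)}=2s$, the right-hand side is at most $n^{s}\le n^{d(n)/2}$ as soon as $4^{s}\le n^{s/2}$, i.e.\ $n\ge 16$; every $n$ with $\omega(n)\ge3$ satisfies $n\ge30$, and for $\omega(n)\le2$ you correctly fall back on flatness (here you should also record $\Phi_n(x)=\Phi_{\mathrm{rad}(n)}\bigl(x^{n/\mathrm{rad}(n)}\bigr)$ so that Migotti's theorem for $\Phi_1,\Phi_p,\Phi_{pq}$ covers all such $n$). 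Inserting this short computation in place of the ``one checks'' sentence turns your sketch into a complete proof.
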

For $n=p_1^{a_1}\ldots, p_k^{a_k}$, $d(n)= (a_1+1)\ldots (a_k+1)$.
Since for prime $p\geq 2$ and integer $a\geq 1$, $a+1\leq p^{a}$, we can get the trivial bound $d(n)\leq n$.
By \thm{cyclotomic-coefficient-bound}, we can represent $\Phi_n(x)$ using at most only $O(n\log n)$ bits.

The following theorem will be useful.
\begin{theorem}[Liouville's approximation theorem {\cite{levesque2012approximation}}]\label{thm:liouville}
Let $\alpha \in \mathbb{C}$ be an algebraic number with minimal polynomial $f(x)=a_0 x^d+a_1x^{d-1}+..+a_d$ of degree $d$. Then for any $a,b \in \mathbb{Z}$ that are relatively prime,
\begin{equation}
    \left|\alpha-\frac{a}{b}\right|\geq \frac{1}{b^d a_0 (2|\bar{\alpha}|+1)^{d-1}}
\end{equation}
where $f(x)=a_0 \prod_{\sigma}(x-\sigma(\alpha))$, and $|\bar\alpha|$ the maximum complex modulus of the algebraic conjuguates of $\alpha$ in $\mathbb{C}$, i.e., $|\bar{\alpha}| =\max|\sigma(\alpha)|$.
\end{theorem}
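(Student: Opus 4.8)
The plan is to run the classical Liouville argument, adapted to a complex algebraic number. Write the (primitive integral) minimal polynomial of $\alpha$ as $f(x)=a_0\prod_{\sigma}\bigl(x-\sigma(\alpha)\bigr)$, where $\sigma$ ranges over the $d$ embeddings of $\mathbb{Q}(\alpha)$ into $\mathbb{C}$, one of which fixes $\alpha$; by Gauss's lemma (cf.\ the discussion preceding \thm{az}) we may take all coefficients $a_i$ to be integers with $a_0\in\mathbb{Z}_{>0}$. We may assume $b\geq 1$ and $\alpha\neq a/b$ (the inequality is vacuous otherwise; and for $d=1$ this is the only point worth a remark, while for $d\geq 2$ irreducibility of $f$ already rules out a rational root). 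First I would dispose of the easy regime: if $\left|\alpha-\tfrac ab\right|\geq 1$, then since $b\geq 1$, $a_0\geq 1$ and $2|\bar\alpha|+1\geq 1$, the claimed lower bound is at most $1$, so we are done; hence from now on assume $\left|\alpha-\tfrac ab\right|<1$, which in particular gives $\left|\tfrac ab\right|\leq|\alpha|+1\leq|\bar\alpha|+1$.

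The key step is the integrality observation. The quantity $b^d f(a/b)=a_0a^d+a_1a^{d-1}b+\cdots+a_db^d$ is an integer, and it is \emph{nonzero} because $a/b$ is not a root of $f$. Therefore $|f(a/b)|\geq b^{-d}$. On the other hand, evaluating the factorization at $a/b$ and separating the factor coming from the identity embedding,
\begin{equation}
    \left|f\!\left(\tfrac ab\right)\right| = a_0\left|\alpha-\tfrac ab\right|\prod_{\sigma\neq\mathrm{id}}\left|\tfrac ab-\sigma(\alpha)\right|.
\end{equation}
For each non-identity embedding, the triangle inequality together with $\left|\tfrac ab\right|\leq|\bar\alpha|+1$ yields $\left|\tfrac ab-\sigma(\alpha)\right|\leq\left|\tfrac ab\right|+|\sigma(\alpha)|\leq(|\bar\alpha|+1)+|\bar\alpha|=2|\bar\alpha|+1$, and there are exactly $d-1$ such factors. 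Combining the two bounds gives $b^{-d}\leq a_0\left|\alpha-\tfrac ab\right|(2|\bar\alpha|+1)^{d-1}$, which rearranges to the asserted inequality.

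I do not expect a genuine obstacle here, since this is a textbook estimate; the only points needing care are (i) justifying that the minimal polynomial may be cleared to coprime integer coefficients, so that $b^d f(a/b)$ is an integer, and that this integer is nonzero, which is precisely where the irreducibility of $f$ enters (no rational root when $d\geq2$); and (ii) the case split on whether $\left|\alpha-\tfrac ab\right|<1$, which is what lets us bound the moduli $\left|\tfrac ab-\sigma(\alpha)\right|$ of the conjugate factors uniformly. Everything else is the triangle inequality, and coprimality of $a,b$ is in fact not used beyond fixing the canonical form of the rational approximant.
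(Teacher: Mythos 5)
Your argument is correct: it is the standard Liouville estimate (clear denominators so that $b^d f(a/b)$ is a nonzero integer, then bound the $d-1$ conjugate factors by $2|\bar\alpha|+1$ after the case split $|\alpha-\tfrac ab|<1$), and the paper itself offers no proof to compare against, since it quotes the theorem directly from \cite{levesque2012approximation}. The only caveat is cosmetic: when $\alpha=\tfrac ab$ the stated inequality is actually false rather than vacuous, so the exclusion of that case (implicit in the paper's statement, and harmless for its use with $(a,b)=(0,1)$ and $\alpha\neq 0$) should be recorded as a hypothesis rather than dismissed; with that understanding, your write-up is complete.
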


In particular, since $0$ and $1$ are relatively prime, applying \thm{liouville}, for every algebraic number $\alpha$ whose minimal polynomial is $f(x)=a_0x^d+\ldots+a_d$, $|\alpha|\geq \frac{1}{a_0(|\bar\alpha|+1)^{d-1}}$. 
For two algebraic integers $\alpha,\beta\in\mathbb Z[\zeta_n]$, $\alpha-\beta\in\mathbb Z[\zeta_n]$, and $|\alpha-\beta|\geq \frac{1}{(|\overline{\alpha-\beta}|+1)^{\phi(n)-1}}$.

\begin{theorem}[Fermat's little theorem~\cite{weisstein2004fermat}]\label{thm:f}
Let $p$ be a prime number. 
Then for any integer $a$, the number $a^p-a$ is an integer multiple of $p$. In the notation of modular arithmetic, this is expressed as
\begin{equation}
    a^p \equiv a \pmod p.
\end{equation}
\end{theorem}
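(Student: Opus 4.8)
The plan is to prove the congruence $a^p \equiv a \pmod p$ by induction on the integer $a$, after first reducing to the case $a \geq 0$. First I would observe that it suffices to handle nonnegative $a$: if $a < 0$ write $a = -b$ with $b > 0$, so that $a^p = (-1)^p b^p$; for odd $p$ this equals $-b^p \equiv -b = a \pmod p$ by the nonnegative case, and the one remaining case $p = 2$ is immediate because $a^2 - a = a(a-1)$ is a product of consecutive integers, hence even.

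The base case $a = 0$ is trivial. For the inductive step the key tool is the ``freshman's dream'' identity modulo a prime: expanding by the binomial theorem gives $(a+1)^p = \sum_{k=0}^{p} \binom{p}{k} a^k$, and for every $k$ with $1 \leq k \leq p-1$ the coefficient $\binom{p}{k} = \frac{p!}{k!\,(p-k)!}$ is divisible by $p$, since $p$ divides the numerator $p!$ while every factor appearing in the denominator $k!\,(p-k)!$ is a positive integer strictly smaller than the prime $p$ and hence coprime to $p$. Therefore $(a+1)^p \equiv a^p + 1 \pmod p$, and combining this with the induction hypothesis $a^p \equiv a \pmod p$ yields $(a+1)^p \equiv a + 1 \pmod p$, which closes the induction.

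The main obstacle — indeed essentially the only substantive point — is the divisibility $p \mid \binom{p}{k}$ for $1 \leq k \leq p - 1$, which is precisely where primality of $p$ enters in a nonremovable way; everything else is bookkeeping. As an alternative I could give the group-theoretic argument: the nonzero residue classes modulo $p$ form a multiplicative group of order $p - 1$, so Lagrange's theorem gives $a^{p-1} \equiv 1 \pmod p$ whenever $p \nmid a$, and multiplying through by $a$ gives $a^p \equiv a \pmod p$, while the case $p \mid a$ is trivial since then both sides vanish modulo $p$. I would, however, present the inductive proof as the primary one, as it is fully self-contained and does not presuppose Lagrange's theorem.
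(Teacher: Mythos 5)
Your proof is correct. Note that the paper itself offers no proof of this statement: it is quoted as a classical fact with a citation, and is then used only to deduce $a^{p^k}\equiv a \pmod p$. Your inductive argument is the standard self-contained one, and its single substantive ingredient --- that $p$ divides $\binom{p}{k}$ for $1\leq k\leq p-1$, because $p$ divides the numerator $p!$ while every factor of $k!\,(p-k)!$ is smaller than the prime $p$ --- is exactly the divisibility lemma the paper does prove, inside \thm{finite-field}, to establish the ``freshman's dream'' $(a+b)^p=a^p+b^p$ in characteristic $p$. Your handling of the edge cases is also fine: the reduction of negative $a$ to positive $a$ for odd $p$, and the separate observation that $a^2-a=a(a-1)$ is even for $p=2$, close the argument for all integers, and the alternative route via Lagrange's theorem in $(\mathbb{Z}/p\mathbb{Z})^{\times}$ that you sketch is equally valid. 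In short, the proposal supplies a complete proof of a fact the paper merely cites, using the same binomial-coefficient mechanism the paper relies on elsewhere.
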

\thm{f} implies
\begin{equation}
    a^{p^k} \equiv a \pmod p.
\end{equation}
for $k \geq 1$ since by \thm{f}, $a^{p^k}\equiv a^{p^{k-1}} \pmod p$.

\begin{theorem}\label{thm:finite-field}
For field $F$ of characteristic $p$ and $a,b\in F$, 
\begin{proof}
By the binomial theorem,
\begin{equation}
    (a+b)^p=a^p+\binom{p}{1}a^{p-1}b+\binom{p}{2}a^{p-2}b^{2}+...+\binom{p}{p-1}ab^{p-1}+b^p.
\end{equation}
We just need the following Lemma: The prime $p$ divides each binomial coefficient
$\binom{p}{r}$
for $1 \leq r \leq p-1.$ The proof of lemma is simple by definition:
\begin{equation}
  \binom{p}{r}=\frac{p \cdot (p-1) \cdot ...\cdot (p-r+1)}{1 \cdot 2 \cdot ...\cdot r}.  
\end{equation}
The result follows since $p$ divides the top but not the bottom.
\end{proof}
\end{theorem}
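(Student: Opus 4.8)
The plan is to expand $(a+b)^p$ with the binomial theorem, which holds verbatim in any commutative ring and hence in the field $F$, and then to observe that the characteristic-$p$ hypothesis annihilates every term except the two extreme ones. Concretely, I would first write
\[
  (a+b)^p \;=\; \sum_{r=0}^{p}\binom{p}{r}\,a^{p-r}b^{r}
  \;=\; a^{p}+b^{p}+\sum_{r=1}^{p-1}\binom{p}{r}\,a^{p-r}b^{r},
\]
where each integer coefficient $\binom{p}{r}$ is interpreted inside $F$ through the canonical ring homomorphism $\mathbb{Z}\to F$ (that is, $\binom{p}{r}$ abbreviates $\binom{p}{r}\cdot 1_F$). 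Commutativity of $F$ is what licenses this expansion.

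The only substantive step is to show that $p\mid\binom{p}{r}$ for $1\le r\le p-1$. Writing $\binom{p}{r}=\dfrac{p\,(p-1)\cdots(p-r+1)}{r!}$, the numerator contains exactly one factor equal to $p$, while every factor appearing in $r!$ and in $(p-1)\cdots(p-r+1)$ lies strictly between $1$ and $p$ and is therefore coprime to the prime $p$; hence the prime $p$ divides the integer $\binom{p}{r}$. Because $F$ has characteristic $p$, this forces $\binom{p}{r}\cdot 1_F=0$ in $F$, so every term of the middle sum vanishes and we are left with $(a+b)^p=a^p+b^p$.

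I do not expect a genuine obstacle: the argument is entirely elementary, and the lone nontrivial ingredient is the divisibility $p\mid\binom{p}{r}$, the classical fact recalled above. If the iterated form $(a+b)^{p^{k}}=a^{p^{k}}+b^{p^{k}}$ is wanted, it follows by induction on $k$, applying the case $k=1$ to $a^{p^{k-1}}$ and $b^{p^{k-1}}$; equivalently, the map $x\mapsto x^{p}$ is then seen to be an endomorphism (the Frobenius) of $F$, which is the form in which this identity is used together with \thm{f} in the cyclotomic arguments of \sec{classical}.
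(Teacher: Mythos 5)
Your proof is correct and follows essentially the same route as the paper: expand $(a+b)^p$ via the binomial theorem and kill the middle terms using the divisibility $p\mid\binom{p}{r}$ for $1\le r\le p-1$, which you justify exactly as the paper does (one factor of $p$ in the numerator, none in the denominator). Your extra remarks on interpreting the coefficients through $\mathbb{Z}\to F$ and on the iterated Frobenius identity are fine but not needed beyond what the paper states.
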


\subsection{Properties of $z$-permanent}

Let $\mathbb T:=\{z\in\mathbb{C}:|z|=1\}$ be the unit circle in the complex plane.
For matrix $X\in\mathbb C^{n\times n}$, the $z$-permanent of $X$ is defined as 
\begin{align}\label{def:z}
  \Per_z(X) := \sum_{\sigma\in S_n} z^{\ell(\sigma)}  \prod_{i} X_{i,\sigma(i)},
\end{align}
where $S_n$ is the symmetric group on $n$ elements and $\ell(\sigma)$ is the inversion number of $\sigma$.
The inversion number is defined as the number of pairs $(i,j)$ such that $i<j$ and $\sigma(i)>\sigma(j)$.
It is clear that $\ell(\sigma)\leq \binom{n}{2}=\frac{1}{2}n(n-1)$.
This implies that we can write a $z$-permanent as a polynomial in $z$:
\begin{align}\label{eq:per-polynomial}
  \Per_z(X) = \sum_{\ell=0}^{\binom{n}{2}} z^\ell \left(\sum_{\sigma:\ell(\sigma)=\ell} \prod_{i=1}^n X_{i,\sigma(i)}\right).
\end{align}
Note that by \eq{per-polynomial}, $z$-permanent is a polynomial with degree ${n \choose 2}$ in terms of $z.$ 

Here we list basic properties of $z$-permanent \cite{yang1991q, citation-0} for every matrix $X\in\mathbb C^{n\times n}$. 
Most of them are exact the same as permanent and determinant.
\begin{enumerate}
\item $\Per_{z}(X)$ is a multilinear function of the rows and columns.

\item For block triangular $X$, $\Per_z(X)$ is the product of the $z$-permanent of its diagonal blocks.

\item $\Per_{z}(X)=\Per_{z}(X^{T})$ where $X^T$ is the transpose of $X$.

\item (Expansion Theorem) Let $X^{ij}$ denote the $(i, j)$ minor of $X$ and $X_{ij}$ denote the element of $X.$ Then
\begin{equation}\label{eq:expan}
   \Per_z(X) = \sum_{k=1}^{n}X_{1k}z^{k-1}\Per_z(X^{1k})
    = \sum_{k=1}^{n}X_{jk}z^{k-1}\Per_z(X^{jk})
\end{equation}
for $j=1,2,\ldots,n$.

\item For integer matrix $X$ and $n$-th root of unity $z$, $\Per_z(X) \in \mathbb{Z}[\zeta_n]$.
Thus, the minimal polynomial of $\Per_z(X)$ over $\mathbb{Q}$ has leading coefficient $1$ and its degree is at most $\phi(n)$.

\end{enumerate}

We give the explicit formula for the simplest case $n=2$ and $n=3$:
For $n=2$,
\begin{align*}
    \Per_z(X)=(X_{11}X_{22}+X_{21}X_{12}z).
\end{align*}
For $n=3$,
\begin{align*}
    \Per_z(X)=X_{13} X_{22} X_{31}z^3+\left(X_{12} X_{23} X_{31}+X_{13} X_{21} X_{32}\right)z^2+
    \\ \nonumber
    \left(X_{11} X_{23} X_{32}+X_{12} X_{21} X_{33}\right)z+X_{11} X_{22} X_{33}.
\end{align*}

\subsection{Common Tasks Related to Permanent}
We recall the following task defined by Aaronson and Arkhipov \cite{aaronson2011computational}. We will discuss some of their generalizations in our paper. 
\begin{problem}[$\GPE_{\times}$]
  Given as input $X\sim\N_{\mathbb{C}}(0,1)^{n\times n}$ of i.i.d Gaussians (Complex normal distribution) and $\epsilon,\delta>0$, estimate $\Per(X)$ to within error $\pm\epsilon|\Per(X)|$ to probability at least $1-\delta$ over $X$.
\end{problem}

\begin{conjecture}\label{conj:GPE}
  $\GPE_\times$ is $\sharpP$-hard.
  In other words, if $\O$ is any oracle that solves $\GPE_\times$, then $\P^\sharpP\subseteq\BPP^{\O}$.
\end{conjecture}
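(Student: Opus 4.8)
The plan is to follow the two-step template of Aaronson and Arkhipov \cite{aaronson2011computational}: first show that computing $\Per(X)$ \emph{exactly} for all but a $1/\poly(n)$ fraction of Gaussian matrices $X$ is already $\sharpP$-hard, and then try to boost an approximate $\GPE_{\times}$ oracle to an exact one. For the first step, by Valiant's theorem \cite{valiant1979completeness} it suffices to recover $\Per(Y)$ for an arbitrary worst-case integer matrix $Y$. Fix such a $Y$, embed/rescale it as $Y'$ so that the affine line $X(t):=(1-t)Y' + tX$ through a genuine sample $X\sim\N_{\mathbb{C}}(0,1)^{n\times n}$ has, for each fixed $t$ in a short interval, distribution within $1/\poly(n)$ total variation of $\N_{\mathbb{C}}(0,1)^{n\times n}$ (the ``hiding'' argument, using rotation invariance of the complex Gaussian and a union bound over the chosen grid of $t$'s). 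Since $t\mapsto \Per(X(t))$ is a univariate polynomial of degree $n$, querying the oracle at $\poly(n)$ grid points and applying Lagrange interpolation — or Berlekamp--Welch decoding to absorb the $1/\poly(n)$ fraction of faulty answers — recovers the whole polynomial, and evaluating at $t=0$ returns $\Per(Y')$, hence $\Per(Y)$. This already yields $\P^{\sharpP}\subseteq\BPP^{\O}$ for an oracle $\O$ that is \emph{exact} a.e., i.e.\ the $\epsilon=0$ case of $\GPE_{\times}$.

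For the second step one must upgrade an oracle returning $\widetilde p$ with $|\widetilde p-\Per(X)|\le\epsilon|\Per(X)|$ to something precise enough for the interpolation to go through. Robust interpolation along the line tolerates small \emph{additive} error at each grid point, so the relative slack $\epsilon|\Per(X(t))|$ must stay below the Berlekamp--Welch threshold everywhere on the grid. This reduces the whole problem to an anti-concentration statement: with probability $1-1/\poly(n)$ over Gaussian $X$, $|\Per(X)|\ge \sqrt{n!}/\poly(n)$ (the Permanent Anti-Concentration Conjecture), together with a modulus-of-continuity bound controlling how far $|\Per(X(t))|$ can dip between grid points. Granting that, the errors are uniformly small on a polynomially fine grid, Berlekamp--Welch succeeds, and one again extracts $\Per(Y)$, giving $\P^{\sharpP}\subseteq\BPP^{\O}$.

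The main obstacle is precisely this anti-concentration input, which is why the statement must be phrased as a conjecture rather than a theorem. One knows $\Exp|\Per(X)|^2=n!$ and higher-moment bounds such as $\Exp|\Per(X)|^4=(n+1)\,n!$, so Paley--Zygmund gives $|\Per(X)|\ge c\sqrt{n!}$ only with \emph{constant} probability, not with probability $1-1/\poly(n)$, and not down to the $\sqrt{n!}/\poly(n)$ scale the interpolation needs; closing both the constant-vs-$(1-o(1))$ probability gap and the constant-vs-$1/\poly(n)$ magnitude gap is the open problem. Consequently the honest conclusion is conditional — $\GPE_{\times}$ is $\sharpP$-hard assuming PACC, via the reduction above — and a genuine proof would require either establishing PACC, or engineering a robust interpolation scheme that avoids querying near zeros of $\Per$ by exploiting extra algebraic structure (for instance, relating $\Per$ to the $z$-deformed family $\Per_z$ studied here and interpolating jointly in $z$ and $t$ to dodge the bad locus).
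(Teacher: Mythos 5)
The statement you were asked about is not proved in the paper at all: it is stated verbatim as a conjecture imported from Aaronson and Arkhipov, and the paper's own results (e.g.\ \thm{GPE-z}) only \emph{assume} hardness of $\PE_{\H,\times}$-type problems rather than establish it. So your refusal to claim an unconditional proof is the right call, and your identification of anti-concentration as the missing ingredient matches the paper's discussion of \conj{pacc}.

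That said, your conditional claim --- that the reduction goes through ``assuming PACC, together with a modulus-of-continuity bound,'' so that $\GPE_\times$ would be $\sharpP$-hard under PACC --- overstates what this machinery gives and hides the real obstruction. PACC (plus Markov on $\Exp|\Per(X)|^2=n!$) is what Aaronson--Arkhipov use to prove the \emph{equivalence} $\GPE_\times \equiv \GPE_\pm$; it does not rescue the worst-case-to-average-case step. The step that fails is the extrapolation: the hiding argument forces the grid points $t$ to lie in a tiny interval near the Gaussian end of the line $X(t)=(1-t)Y'+tX$, while the worst-case value sits at the far endpoint, and extrapolating a degree-$n$ polynomial from noisy values on such a grid amplifies per-point error by factors exponential in $n$ (Lagrange/extrapolation constants), far beyond what $\epsilon|\Per(X(t))| \approx \epsilon\sqrt{n!}/\poly(n)$-level accuracy can absorb. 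Berlekamp--Welch does not help here either: it corrects a bounded \emph{fraction of fully corrupted} values over a field, not small analog noise present at \emph{every} grid point. This is precisely why even the conditional statement ``PACC $\Rightarrow$ $\GPE_\times$ is $\sharpP$-hard'' is open, and why both the original paper and this one leave \conj{GPE} as a conjecture. One small factual slip: the fourth moment is $\Exp_{X\sim\G}|\Per(X)|^4=(n+1)(n!)^2$, not $(n+1)\,n!$, though this does not affect your qualitative Paley--Zygmund point.
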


\begin{conjecture}\label{conj:pacc}
  There exists a polynomial $p$ such that for positive integer $n$, real number $\delta>0$, 
  \begin{align}
    \Pr_{X}\left[ |\Per(X)|^2 \geq \frac{\sqrt{n!}}{p(n,1/\delta)}\right] \geq 1-\delta. 
  \end{align}
\end{conjecture}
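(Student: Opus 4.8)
The plan is to prove \conj{pacc}, which is the permanent anti-concentration conjecture of Aaronson and Arkhipov~\cite{aaronson2011computational}, via the second-moment method, exploiting that $\Per$ (that is, $\Per_z$ at $z=1$) is a homogeneous polynomial in the Gaussian entries whose low moments are computable. The first step would be the exact second moment: expanding $\Per(X)=\sum_{\sigma\in S_n}\prod_i X_{i,\sigma(i)}$ gives $|\Per(X)|^2=\sum_{\sigma,\tau\in S_n}\prod_i X_{i,\sigma(i)}\overline{X_{i,\tau(i)}}$, and because the $X_{ij}$ are i.i.d.\ standard complex Gaussians, with $\Exp[X_{ij}\overline{X_{k\ell}}]=\delta_{ik}\delta_{j\ell}$ and all odd moments vanishing, only the diagonal terms $\sigma=\tau$ survive, so $\Exp_X[|\Per(X)|^2]=n!$.

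The second step is to control the fourth moment and invoke Paley--Zygmund. Set $Z:=|\Per(X)|^2\ge 0$, so $\Exp[Z]=n!$; Paley--Zygmund says that for any fixed $\theta\in(0,1)$, $\Pr[Z\ge\theta\,\Exp[Z]]\ge(1-\theta)^2\,\Exp[Z]^2/\Exp[Z^2]$, so the whole step reduces to estimating $\Exp[|\Per(X)|^4]$. Expanding this over quadruples $(\sigma_1,\sigma_2,\tau_1,\tau_2)\in S_n^{4}$ and applying Wick's theorem (Isserlis' formula) for complex Gaussians rewrites $\Exp[|\Per(X)|^4]$ as a weighted count of configurations in which, row by row, the multiset $\{\sigma_1(i),\sigma_2(i)\}$ equals $\{\tau_1(i),\tau_2(i)\}$; one would then aim to prove the combinatorial bound $\Exp[|\Per(X)|^4]\le\poly(n)\cdot(n!)^2$ (the small cases are consistent with this: the ratio $\Exp[|\Per(X)|^4]/(n!)^2$ equals $2$ for $n=1$ and $12/4=3$ for $n=2$). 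Granting such a bound, Paley--Zygmund gives $\Pr_X[\,|\Per(X)|^2\ge\theta\,n!\,]\ge(1-\theta)^2/\poly(n)$, and a genuinely constant fourth-moment ratio would even give a constant probability.

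The hard part will be amplifying this to the probability $1-\delta$ that the conjecture demands. Magnitude is not the issue, since the target $\sqrt{n!}/p(n,1/\delta)$ is far below the typical value $n!$ of $|\Per(X)|^2$; rather, one needs \emph{concentration} of $Z$ around $n!$, and elementary inequalities do not deliver it --- Paley--Zygmund or Chebyshev with any fixed bound on the fourth moment yield only a constant (or inverse-polynomial) lower bound on the probability, never $1-\delta$ for small $\delta$. The one route that does reach the $1-\delta$ form is to treat $\Per(X)$ as a nonzero degree-$n$ polynomial in the real coordinates of $X$ and apply the Carbery--Wright anti-concentration inequality, yielding $\Pr_X[\,|\Per(X)|^2\le\epsilon\,n!\,]\le O(n)\,\epsilon^{1/(2n)}$; but this only certifies a threshold of order $n!\cdot\exp(-\Theta(n\log(n/\delta)))$, smaller than $\sqrt{n!}/p(n,1/\delta)$ by a factor of $n^{\Theta(n)}$. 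Closing the gap between the inverse-polynomial-probability second-moment bound and the (exponentially) weaker Carbery--Wright bound is precisely the content of \conj{pacc}; this is the main obstacle, and to our knowledge it remains open, which is why the statement is taken here only as an assumption.
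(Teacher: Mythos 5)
You are right to stop short of a proof: the statement is precisely the Permanent Anti-Concentration Conjecture of Aaronson--Arkhipov, and the paper does not prove it either --- it is stated as an assumption (\conj{pacc}), with only partial evidence given. Your second-moment route matches exactly what the paper does establish in \app{mom}: $\Exp_{X\sim\G}[|\Per(X)|^2]=n!$, the fourth moment is in fact exactly $(n!)^2(n+1)$ (so your conjectured $\poly(n)\cdot(n!)^2$ bound holds with constant ratio $n+1$), and Chebyshev/Paley--Zygmund then gives the weak anti-concentration bound $\Pr[|\Per_z(X)|^2>\alpha\, n!]>(1-\alpha)^2/(n+1)$ of \thm{weak-z-pacc}; your assessment that neither this nor Carbery--Wright reaches the $1-\delta$ form at threshold $\sqrt{n!}/p(n,1/\delta)$, so the conjecture remains open, is accurate and consistent with the paper.
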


\section{Hardness of $z$-permanent for $\{0,1\}$-matrices}\label{sec:classical}

In this section, we investigate the hardness of $z$-permanent.
We only consider the case that $z$ is root of a unity, i.e., there is positive integer $m$ such that $z^m=1$, for the following two reasons. 
First, the set $\{z=e^{2 \pi i a/b}:a,b\in\mathbb{N}\}$ is dense on the unit circle, so for every $z\in\mathbb T$, there exists an arbitrarily close element in this set.
Second, from the computational point of view, as we will see, it is more relevant and convenient to give a definition of exact computation with a rational multiple of $2\pi$. 
Thus, in this paper, we always assume that $z$ is root of a unity, and write $z=\zeta_{m}$ for primitive $m$-th root for unity for integer $m\geq 2$.

We start with the cubic root of unity, showing that computing $z$-permanant for $z$ satisfying $z^3=1$ is $\Mod_3\P$-hard.
Thus the existence of an algorithm for computing the $z$-permanent implies the collapse of the polynomial hierarchy by Toda's theorem.

\subsection{Cube Roots of Unity}

Recall that from algebraic number theory, for $z=\zeta_p$ and binary matrix $X\in\bit^{n\times n}$, $\Per_z(X)$ can be written as a polynomial in $z$:
\begin{align}\label{eq:per-poly}
  \Per_z(X) = \sum_k A_k z^k,
\end{align}
where $A_k:=\sum_{\sigma:\ell(\sigma)\equiv k\pmod p} \prod_{i=1}^n X_{i,\sigma(i)}$.
Furthermore, for $z=\zeta_p$, $\Per_z(X)\in\mathbb Z[\zeta_p]$, and thus is an algebraic integer in the number field $\mathbb Q[\zeta_p]$.
We will link the observation to hardness of computing $1$-permanent and the complexity $\Mod_p\P$, defined as follows.

\begin{definition}[$\Mod_k\P$]
For any integer $k \geq 2$, let $\Mod_{k}\P$ be the class of decision problems solvable by a polynomial time non-deterministic machine which rejects if the number of accepting paths is divisible by $k$, and accepts otherwise. When $k = 2$, $\Mod_{k}\P$ is also known as parity $\P$, and
denoted $\oplus \P$.
\end{definition} 

It is well known due to Valiant that computing $\Per(X)\bmod p$ is $\Mod_p\P$-complete for prime $p>2$ \cite{valiant1979completeness}.
Interestingly, we can reduce the problem of computing $\Per(X)\bmod p$ to the problem of computing $\Per_{\zeta_p}(X)$ in the special case $p=3$.

\begin{theorem}[Valiant \cite{valiant1979completeness}]\label{thm:vali}
The problem of computing $\Per(M) \bmod p$ for a square matrix $M \in \mathbb{F}_p^{n \times n}$
is $\Mod_{p}\P$-complete for any prime $p>2$.
\end{theorem}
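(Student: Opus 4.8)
The plan is to establish both directions of the completeness claim — containment in $\Mod_p\P$ and $\Mod_p\P$-hardness — the latter by running Valiant's gadget reduction and reducing every weight modulo $p$, with the hypothesis $p>2$ entering precisely to invert the resulting junk factor. Throughout I would work with the decision version, the language $L_p=\{M:\Per(M)\not\equiv 0\pmod p\}$; computing the function $\Per(M)\bmod p$ is interreducible with $L_p$ in the usual way, so it suffices to treat $L_p$.

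\textbf{Containment.} First I would exhibit a nondeterministic polynomial-time machine whose number of accepting computations on input $M\in\mathbb{F}_p^{n\times n}$ equals $\Per(\tilde M)$, where $\tilde M_{ij}\in\{0,1,\dots,p-1\}$ is the canonical integer lift of $M_{ij}$: nondeterministically guess a permutation $\sigma\in S_n$; then for each row $i$, reject if $\tilde M_{i,\sigma(i)}=0$ and otherwise branch into $\tilde M_{i,\sigma(i)}$ sub-paths; accept on every surviving path. The number of accepting paths through a fixed $\sigma$ is $\prod_i \tilde M_{i,\sigma(i)}$, so the total is $\sum_{\sigma\in S_n}\prod_i \tilde M_{i,\sigma(i)}=\Per(\tilde M)\equiv \Per(M)\pmod p$. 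Hence $L_p\in\Mod_p\P$.

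\textbf{Hardness.} For the converse I would fix a canonical $\Mod_p\P$-hard problem: deciding whether a given $3$CNF formula $\varphi$ has a number of satisfying assignments $\not\equiv 0\pmod p$. This is $\Mod_p\P$-hard because Cook--Levin's reduction from an $\NP$ machine to $3$SAT is parsimonious, so it preserves the number of accepting paths (in particular modulo $p$). I would then invoke Valiant's reduction from $\#3$SAT to the permanent: in polynomial time it builds a matrix $A_\varphi$ with small integer entries such that $\Per(A_\varphi)=4^{t}\,s(\varphi)$, where $s(\varphi)$ counts the satisfying assignments and $t$ is an explicitly computable polynomial bound (the number of XOR/consistency gadgets). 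Since the permanent is a polynomial identity valid over every commutative ring, reducing each entry of $A_\varphi$ modulo $p$ yields a matrix over $\mathbb{F}_p$ whose permanent equals $4^{t}\,s(\varphi)$ evaluated in $\mathbb{F}_p$; if one insists on $\{0,1\}$-entries, the standard weight-splitting gadgets turn the small integer weights (including $-1$) into $0/1$ weights without changing the permanent modulo $p$. Because $p$ is odd, $4$ is a unit in $\mathbb{F}_p$, so $s(\varphi)\not\equiv 0\pmod p$ iff $\Per(A_\varphi)\not\equiv 0\pmod p$, which is the desired many-one reduction to $L_p$.

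\textbf{Main obstacle.} The delicate point — and the reason $p>2$ is essential rather than cosmetic — is controlling the multiplicative junk factor of the gadget reduction: Valiant's construction only pins down $\Per(A_\varphi)$ up to a fixed power of $2$, which collapses to $0$ modulo $2$. Indeed at $p=2$ the statement genuinely fails, since $\Per(M)\bmod 2=\det(M)\bmod 2$ is polynomial-time computable; for odd $p$ the factor is invertible and the argument goes through. A secondary, purely bookkeeping concern is that Valiant's gadgets use small negative weights, but this is harmless over $\mathbb{F}_p$ because $-1$ is simply an element of $\mathbb{F}_p$ and the permanent identity is insensitive to which commutative ring the entries live in.
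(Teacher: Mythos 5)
The paper does not actually prove this statement: it is imported verbatim as a citation to Valiant's 1979 paper, and the hardness of the mod-$p$ permanent is used later as a black box. Your reconstruction is the standard classical argument and is essentially sound: the counting NTM for containment is correct (its accepting paths number exactly $\Per(\tilde M)\equiv\Per(M)\pmod p$, matching the paper's definition of $\Mod_p\P$), and the hardness chain (parsimonious Cook--Levin to $3$SAT, Valiant's gadget reduction with junk factor $4^{t}$, reduction of entries modulo $p$ using that the permanent commutes with the ring homomorphism $\mathbb Z\to\mathbb F_p$, and invertibility of $4$ for odd $p$) is exactly how the result is proved in the literature; your identification of the junk factor as the reason $p>2$ is needed, and the $p=2$ degeneration to the determinant, is also the right diagnosis (with the caveat that "fails" should be read as "fails unless $\oplus\P=\P$"). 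Two small points you wave through deserve a sentence each if this were written out in full: first, the interreducibility of the function problem with the language $L_p$ "in the usual way" is not completely free --- to recover $\Per(M)\bmod p$ from a decision oracle you either query languages of the form $\{(M,c):\Per(M)\equiv c\pmod p\}$, whose membership in $\Mod_p\P$ uses closure of $\Mod_p\P$ under complement for prime $p$ (the Beigel--Gill/Fermat's-little-theorem amplification), or you simply state the theorem as hardness of the decision version, which is all the paper ever uses; second, the optional reduction to $\{0,1\}$ entries needs the negative weight $-1$ to be replaced by $p-1$ before applying the positive-weight splitting gadgets, which is harmless but worth saying since Valiant's own $0/1$ reduction instead works modulo a large auxiliary integer. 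Neither point is a gap in what the paper needs from the theorem.
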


Now we state the theorem for this section.

\begin{theorem}\label{120}
For every cube roots of unity $\zeta$, 
computing $\Per_{\zeta}(X)$ and $|\Per_{\zeta}(X)|^2$ are both $\Mod_{3}\P$-hard.
\end{theorem}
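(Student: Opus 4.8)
The plan is to reduce the problem of computing $\Per(M)\bmod 3$ for $M\in\mathbb F_3^{n\times n}$ --- which is $\Mod_3\mathsf P$-complete by \thm{vali} --- to the problem of computing $\Per_\zeta(X)$ for a $\{0,1\}$-matrix $X$, where $\zeta=\zeta_3$ is a primitive cube root of unity. The first step is to recall that for $z=\zeta_3$ we have $\Phi_3(x)=x^2+x+1$, so $\mathbb Q(\zeta_3)$ has degree $2$ over $\mathbb Q$ and $\Per_\zeta(X)=A_0'+A_1'\zeta$ for unique integers $A_0',A_1'$, where $A_k'=\sum_{\sigma:\ell(\sigma)\equiv k\pmod 3}\prod_i X_{i,\sigma(i)}$ after reducing the cubic representation in \eq{per-poly} modulo $\Phi_3$ (using $\zeta^2=-1-\zeta$, so $A_0'=A_0-A_2$, $A_1'=A_1-A_2$). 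The key arithmetic observation, already flagged in the introduction, is that $\Phi_3(1)=3$, hence $A_0'+A_1'=A_0+A_1+A_2-3A_2\equiv A_0+A_1+A_2=\Per(X)\pmod 3$. So summing the two coefficients returned by an oracle for $\Per_\zeta(X)$ and reducing mod $3$ gives $\Per(X)\bmod 3$.

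The second step handles the gap between a general integer matrix $M\in\mathbb F_3^{n\times n}$ (entries in $\{0,1,2\}$, or their lifts) and a genuine $\{0,1\}$-matrix, since \thm{vali} is about matrices over $\mathbb F_p$ while the theorem statement restricts to binary $X$. Here I would invoke Valiant's standard gadget reductions: an entry equal to $2$ (or any small integer) can be simulated by replacing it with a small fixed $\{0,1\}$-block whose $z$-permanent contributes the right value, and more generally one reduces computing the permanent of a bounded-integer matrix mod $p$ to that of a $\{0,1\}$-matrix of polynomially larger size. One must check that these gadgets behave correctly under the $z$-weighting: using the block-triangular multiplicativity (property 2 of $z$-permanent) and the expansion theorem \eq{expan}, the inversion-number weighting is compatible with substituting a block for a single entry, provided the gadget is inserted without creating extra inversions across blocks --- this is the point that needs care, and I expect it to be the main obstacle, because unlike the ordinary permanent the coefficient $z^{\ell(\sigma)}$ is sensitive to the positions of the entries, not just to which entries are multiplied together. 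One fix is to choose $z$ to be a \emph{cube} root of unity so that inversions only matter modulo $3$; arranging the gadget blocks so that all cross-block inversions come in multiples of $3$ (e.g.\ by padding block sizes to multiples of $3$) would make $\ell(\sigma)\bmod 3$ additive over the block decomposition, preserving the reduction.

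The third step is the $|\Per_\zeta(X)|^2$ variant. Since $\zeta_3$ is a primitive cube root of unity, complex conjugation sends $\zeta$ to $\zeta^2=\bar\zeta$, the nontrivial element of $\mathrm{Gal}(\mathbb Q(\zeta_3)/\mathbb Q)$, so $\overline{\Per_\zeta(X)}=\Per_{\bar\zeta}(X)=A_0'+A_1'\bar\zeta=(A_0'-A_1')-A_1'\zeta$, and $|\Per_\zeta(X)|^2=\Per_\zeta(X)\overline{\Per_\zeta(X)}=A_0'^2-A_0'A_1'+A_1'^2\in\mathbb Z$. The plan is to show that from the integer $N:=|\Per_\zeta(X)|^2=A_0'^2-A_0'A_1'+A_1'^2$ together with, say, one more easily computable linear functional of $(A_0',A_1')$ one can recover $(A_0'+A_1')\bmod 3$; indeed $N=A_0'^2-A_0'A_1'+A_1'^2\equiv (A_0'+A_1')^2 \pmod 3$ (check: $(A_0'+A_1')^2-N=3A_0'A_1'$), so $N\bmod 3$ already determines $(A_0'+A_1')^2\equiv\Per(X)^2\pmod 3$, and over $\mathbb F_3$ squaring is injective on $\{0\}$ and $2$-to-$1$ on $\{1,2\}$. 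To pin down the sign I would either feed the oracle a polynomial family of related matrices (e.g.\ appending rows/columns that shift $\Per(X)$ by a known amount, using multilinearity --- property 1 and \eq{expan}) and solve for $\Per(X)\bmod 3$ from several squared values, or argue directly that distinguishing $\Per(X)\equiv 1$ from $\Per(X)\equiv 2 \pmod 3$ is itself $\Mod_3\mathsf P$-hard so that knowing $\Per(X)^2\bmod 3$ suffices. Assembling these pieces, any oracle computing $\Per_\zeta(X)$ or $|\Per_\zeta(X)|^2$ on binary matrices solves a $\Mod_3\mathsf P$-complete problem, which gives the claimed hardness.
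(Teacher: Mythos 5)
Your core reductions are exactly the paper's: the identity $\Phi_3(1)=3$ gives $A_0'+A_1'\equiv\Per(X)\pmod 3$ (the paper extracts the same information from the real part, $2A_0-A_1-A_2\equiv 2\Per(X)\pmod 3$, with $2$ a unit mod $3$), and for the squared case both you and the paper arrive at $|\Per_\zeta(X)|^2\equiv\Per(X)^2\pmod 3$. So the essential content matches.

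Two of your complications are unnecessary, and one of them, as written, is a soft spot. Your second step --- simulating $\mathbb F_3$-entries by $\{0,1\}$-gadgets \emph{inside the matrix fed to the $\Per_\zeta$ oracle}, and then worrying whether the gadgets respect the inversion-number weighting --- is doing the reduction in the wrong order. The chain should be $\Mod_3\P \le \Per(\cdot)\bmod 3$ on $\{0,1\}$-matrices $\le \Per_\zeta(\cdot)$ on $\{0,1\}$-matrices: all gadget work (Valiant's) lives entirely in the first arrow, at the level of the \emph{ordinary} permanent mod $3$, where the weighting $z^{\ell(\sigma)}$ never appears; only the final $\{0,1\}$-matrix is handed to the $\zeta$-oracle, and your step~1 then extracts $\Per(X)\bmod 3$. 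The ``main obstacle'' you anticipate (cross-block inversions, padding block sizes to multiples of $3$) therefore never arises, and your proposed fix for it is neither needed nor verified --- if your proof actually leaned on it, that would be a gap, so restructure as above. Similarly, in the third step the hedging about recovering the sign of $\Per(X)\bmod 3$ is moot: $\Mod_3\P$ is a \emph{decision} class (accept iff the number of accepting paths is not divisible by $3$), so knowing $\Per(X)^2\bmod 3$, i.e.\ whether $\Per(X)\equiv 0\pmod 3$, already decides the $\Mod_3\P$-complete problem; no auxiliary family of matrices or extra hardness claim about distinguishing residues $1$ and $2$ is required. This is also how the paper's terse treatment of $|\Per_\zeta(X)|^2$ should be read.
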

\begin{proof}
For $\zeta=1$, the theorem holds due to Valiant \cite{valiant1979completeness}, and thus it suffices to prove the theorem for $\zeta\in\{e^{i2\pi/3},e^{i4\pi/3}\}$.
For $\zeta=e^{i2\pi/3}$, we can write $\Per_{\zeta}(X)$ as
\begin{align}\nonumber
  \Per_\zeta(X)
  &=A_0+\zeta A_1+\zeta^2 A_2 \\\nonumber
  &=A_0+\left(-\frac{1}{2}+\frac{\sqrt{3}i}{2}\right)A_1 +\left(-\frac{1}{2}-\frac{\sqrt{3}i}{2}\right)A_2 \\
  &=\frac{2A_0-A_1-A_2}{2}+i\sqrt{3}\frac{A_1-A_2}{2}
\end{align}
If one can compute $\Per_{\zeta}(X)$, then one can compute $\Per(X)\bmod 3$ by taking the real part of $\Per_\zeta(X)$ since 
\begin{align}
  2A_0-A_1-A_2 
  \equiv 2\Per(X) \pmod 3,
  \end{align}
  and $2$ is a unit in the ring $\mathbb Z_3$.
  The same reasoning holds for $\zeta=e^{i4\pi/3}$ by interchanging $i$ and $-i$.

It was shown in Valiant's original paper \cite{valiant1979complexity} that the permanent modulo $m=2^k$ can be computed in time $n^{O(k)}$, but for all $m\neq 2^k$, it is $\NP$-hard under randomized reductions \cite{curticapean2015parameterizing, datta2021parallel}.

For $|\Per_\zeta(X)|^2$, we observe that 
\begin{align}\nonumber
  |\Per_{\zeta}(X)|^2 
  &\equiv |\Per(X)|^2 \pmod 3. 
\end{align}
Thus we can obtain $|\Per(X)|^2\bmod 3$ if $|\Per_\zeta(X)|^2$ can be exactly computed. 
\end{proof}

The generalized Toda's theorem \cite{toda1992counting} states that a few counting classes are as hard as the polynomial hierarchy.

\begin{theorem}[Toda theorem~\cite{toda1991pp,toda1992counting}]\label{thm:toda}
For prime $k>2$, let $A$ be one of the counting classes $\Mod_{k}\mathsf{P}, \mathsf{\#P}, \mathsf{PP},$ or $\mathsf{GapP}$. Then $\mathsf{PH} \subseteq \mathsf{BPP}^{A}$.
\end{theorem}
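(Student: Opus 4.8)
This is the theorem of Toda (and, for the $\Mod_k\P$ form, Toda--Ogiwara), and I would follow the standard two-part route. First, a reduction step: since $\Mod_k\P\subseteq\P^{\sharpP}=\P^{\class{PP}}$, and $\class{GapP}$-functions are differences of $\sharpP$-functions so that $\P^{\class{GapP}}=\P^{\sharpP}$, and moreover $\BPP^{\sharpP}=\P^{\sharpP}$ (one extra counting query over the random string derandomizes the $\BPP$ layer, using that $\P^{\sharpP}$ is closed under $\sharpP$-oracle queries), it suffices to establish the single inclusion $\PH\subseteq\BPP^{\Mod_k\P}$ for prime $k>2$; all four oracle classes then follow. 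I would prove $\PH\subseteq\BPP^{\Mod_k\P}$ by induction over the constantly many levels of the hierarchy, showing that $\BPP^{\Mod_k\P}$ contains $\P$ and is closed under the polynomially-bounded $\exists$ and $\forall$ operators.

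The first ingredient is a package of closure properties of $\Mod_k\P$ that use primality of $k$. I would pass from "number of accepting paths" to $\class{GapP}$ arithmetic (closed under negation, addition, polynomially many products, and exponential sums) and invoke Fermat's little theorem (\thm{f}) in the form $a^{k-1}\equiv[\,a\not\equiv 0\bmod k\,]$. Writing a language in $\Mod_k\P$ as $\{x:g(x)\not\equiv 0\bmod k\}$ for $g\in\class{GapP}$, this yields: (i) closure under complement, via $g\mapsto 1-g^{k-1}$; (ii) closure under the $\Mod_k$-operator, via $g(x,\cdot)\mapsto\sum_y g(x,y)^{k-1}$; and (iii) $\P^{\Mod_k\P}=\Mod_k\P$, since $\class{GapP}$ is closed under polynomial-time Turing reductions to a $\Mod_k\P$-oracle. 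Property (iii) gives $\BPP^{\Mod_k\P}=\mathrm{BP}\cdot\Mod_k\P$, and (i) makes this class closed under complement, so closure under $\forall$ will follow from closure under $\exists$.

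The second ingredient is a Valiant--Vazirani style isolation lemma applied with a $\Mod_k\P$-predicate inside: adjoining random linear hash constraints on a witness $y$ preserves emptiness of the witness set and, when it is nonempty, isolates a unique witness with probability $\Omega(1/n)$; counting the survivors modulo $k$ is a $\Mod_k\cdot\Mod_k\P$ computation, which collapses by (ii), so after amplification $\exists\cdot\Mod_k\P\subseteq\mathrm{BP}\cdot\Mod_k\P=\BPP^{\Mod_k\P}$. To close the induction I would show $\exists\cdot\BPP^{\Mod_k\P}\subseteq\BPP^{\Mod_k\P}$: write $\BPP^{\Mod_k\P}=\mathrm{BP}\cdot\Mod_k\P$, amplify the $\mathrm{BP}$-error below $2^{-2|y|-n}$, and observe that for YES instances a single good witness already forces acceptance probability $\geq 1-2^{-2|y|-n}$ while for NO instances a union bound over the $\le 2^{|y|}$ witnesses keeps the error $\le 2^{-|y|-n}$; this gives $\exists\cdot\mathrm{BP}\cdot\Mod_k\P\subseteq\mathrm{BP}\cdot\exists\cdot\Mod_k\P$, and then the isolation lemma followed by collapsing the two $\mathrm{BP}$ layers finishes it. Starting from $\P\subseteq\Mod_k\P\subseteq\BPP^{\Mod_k\P}$ and iterating closure under $\exists$ and $\forall$ over the levels of $\PH$ yields $\PH\subseteq\BPP^{\Mod_k\P}$, and the reduction step of the first paragraph transfers this to $\sharpP$, $\class{PP}$, and $\class{GapP}$.

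The main obstacle, and essentially the only place primality of $k$ is used, is the closure package (i)--(iii): getting all of it uniformly really does want the translation into $\class{GapP}$ arithmetic plus Fermat's little theorem, and one must check that the $\class{GapP}$-functions produced by products and exponential sums stay polynomially describable and that (iii) is phrased for the correct notion of Turing reduction. The second delicate point is the operator bookkeeping — amplifying \emph{before} quantifying so the union bound over exponentially many witnesses still beats the error when swapping $\exists$ past $\mathrm{BP}$ — while the hashing in Valiant--Vazirani and the collapse of repeated $\mathrm{BP}$ operators are routine.
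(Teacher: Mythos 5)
The paper does not prove this statement at all: \thm{toda} is imported as a black box from Toda and Toda--Ogiwara \cite{toda1991pp,toda1992counting}, so there is no internal proof to compare against. Your outline is, in essence, a faithful reconstruction of the standard argument from those sources, and I see no gap in it: the reduction of the $\sharpP$, $\class{PP}$, and $\class{GapP}$ oracle cases to the single inclusion $\PH\subseteq\BPP^{\Mod_k\P}$ via $\Mod_k\P\subseteq\P^{\sharpP}=\P^{\class{PP}}=\P^{\class{GapP}}$ is sound; the closure package (complement, $\Mod_k$-operator, self-lowness $\P^{\Mod_k\P}=\Mod_k\P$) via $\class{GapP}$ arithmetic and Fermat's little theorem is exactly where primality of $k$ enters; and the relativized Valiant--Vazirani isolation plus the amplify-before-swapping bookkeeping for $\exists$, $\forall$, and nested $\mathrm{BP}$ layers is the standard induction over the levels of $\PH$. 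Two minor points you gloss over but which are easily supplied from your own ingredients: the one-sided error after isolation is amplified by OR-ing over independent hash choices, which needs $\Mod_k\P$ closed under union (it follows from (i)--(iii)), and the collapse $\mathrm{BP}\cdot\mathrm{BP}\cdot\Mod_k\P\subseteq\mathrm{BP}\cdot\Mod_k\P$ likewise uses these closure properties; neither affects correctness.
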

By Toda's theorem, we immediately have the following theorem.
\begin{theorem}\label{thm:prime3}
  For $\zeta\in\{1, e^{i2\pi/3}, e^{i4\pi/3}\}$, let $\O$ be any oracle oracle that on input binary matrix $X\in\bit^{n\times n}$, outputs $\Per_\zeta(X)$. Then
\begin{equation}
  \mathsf{PH} \subseteq \BPP^{\Mod_3\P} \subseteq
  \BPP^{\O}.
\end{equation}
In other words, if there is an efficient classical simulation of $\O$, then the polynomial hierarchy collapses to the second level.
\end{theorem}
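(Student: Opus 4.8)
The plan is to derive \thm{prime3} directly by composing Toda's theorem (\thm{toda}) with the $\Mod_3\P$-hardness of computing $\Per_\zeta$ already established in Theorem~\ref{120}; no new machinery is required, so the ``proof'' is really a packaging argument.

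For the left inclusion $\PH \subseteq \BPP^{\Mod_3\P}$, I would simply invoke \thm{toda} with the prime $k = 3$ and the counting class $A = \Mod_3\P$; this is exactly its conclusion. For the right inclusion $\BPP^{\Mod_3\P} \subseteq \BPP^\O$, the point is that Theorem~\ref{120} furnishes a polynomial-time (Turing) reduction from a $\Mod_3\P$-complete problem to evaluating $\Per_\zeta$: by \thm{vali} computing $\Per(M) \bmod 3$ is $\Mod_3\P$-complete, and the computation in the proof of Theorem~\ref{120} shows that from the output of $\O$ on a matrix $X$ one recovers the integer $2A_0 - A_1 - A_2$ (it is twice the real part of $\Per_\zeta(X)$), which satisfies $2A_0 - A_1 - A_2 \equiv 2\Per(X) \pmod 3$ with $2$ invertible modulo $3$; the case $\zeta = e^{i4\pi/3}$ is identical after swapping $i$ and $-i$, and $\zeta = 1$ is trivial since $\O$ returns $\Per(X)$ outright. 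Hence every language in $\Mod_3\P$ lies in $\P^\O$, and therefore $\BPP^{\Mod_3\P} \subseteq \BPP^{\P^\O} = \BPP^\O$, absorbing the $\P$-oracle layer into the surrounding $\BPP$ machine in the standard way.

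The only thing to be careful about — and it is bookkeeping rather than a genuine obstacle — is the reduction/oracle model: one must confirm that ``$\O$ outputs $\Per_\zeta(X)$'' means $\O$ returns the coefficient data (equivalently, the real and imaginary parts with enough precision), so that a deterministic polynomial-time machine can read off $2A_0 - A_1 - A_2$ and reduce it modulo $3$, and that nested $\P$-oracle queries collapse into the outer $\BPP$ computation. The concluding sentence of the theorem — that an efficient classical simulation of $\O$ collapses $\PH$ to its second level — then follows because such a simulation yields $\BPP^\O = \BPP$, whence $\PH \subseteq \BPP \subseteq \Sigma_2^{\P}$ and so $\PH = \Sigma_2^{\P}$.
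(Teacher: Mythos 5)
Your proposal is correct and is essentially the paper's own argument: the paper derives \thm{prime3} immediately by combining the $\Mod_3\P$-hardness from Theorem~\ref{120} with Toda's theorem (\thm{toda}), exactly as you do. Your extra remarks about the oracle output format and absorbing the $\P^\O$ layer into $\BPP^\O$ are just the routine bookkeeping the paper leaves implicit.
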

If the polynomial hierarchy is infinite, \thm{prime3} excludes the existence of an efficient classical algorithm for exactly computing $\zeta$-permanent.

\subsection{Primitive $m$th roots of unity for prime power $m$}\label{sec:prime-power}

Next, we turn our attention to any prime number $p=n^{O(1)}$.
Since in general $\Per_{\zeta_p}(X)$ can be an irrational number in the complex plane, any truncation in finite precision only results in an approximation.
In this section, we focus the hardness of computing $\Per_{\zeta_p}(X)$ exactly, and thus the first question we must clarify is how we define ``computing $\Per_{\zeta_p}(X)$ exactly'' for binary $X$.

For binary matrix $X$, $A_k\in\mathbb Z$ (see \eq{per-poly}).
Thus $\Per_{\zeta_p}(X)$ can be written as a polynomial in $\zeta_p$ of degree at most $p-1$ with integer coefficients.
We can define an exact representation of $\Per_{\zeta_p}(X)$ as a set of coefficients.
\begin{definition}[Representation of an algebraic integer]\label{dfn:exact-representation}
  A representation of algebraic integer $\alpha\in\mathbb Z[\zeta_p]$ is a tuple of integers $(b_0,b_1,\ldots,b_{p-1})$ satisfying
  \begin{align}\label{eq:alpha-poly}
    \alpha = b_0 + b_1\zeta_p + \ldots b_{p-1}\zeta_p^{p-1}.
  \end{align}
  Note that the representation is not unique since $\zeta_p$ is a root of $f(x)=1+x+\ldots+x^{p-1}$, and thus $\alpha+cf(\zeta_p)=\alpha$ for any integer $c\in\mathbb Z$. 
\end{definition}

In general, one can define a representation of algebraic integers  $\alpha \in \mathbb{Z}[\zeta_n]$ for general $n$ in \dfn{general-n}. However, in this section, the \dfn{exact-representation} is sufficient for showing the hardness. 
Notice that one can also define such representation by a tuple of integers $(b_0,b_1,\ldots,b_{p-2})$ satisfying $\alpha = b_0 + b_1\zeta_p + \ldots b_{p-2}\zeta_p^{p-2}$ since one has $\zeta_{p}^{p-1}=-1-\zeta_p-\zeta_p^2-...-\zeta_{p}^{p-2}.$

An algorithm which computes $\Per_{\zeta_p}(X)$ exactly outputs a tuple of coefficients $(b_0,\ldots,b_{p-1})\in\mathbb Z^{p}$ such that \eq{alpha-poly} holds.
This definition is without loss of generality for our purpose. 
In \sec{stod}, we will construct an algorithm that maps an approximation in the complex plane to an exact representation using an $\NP$ oracle.

\begin{theorem}\label{thm:mod-p}
  For prime $p$, let $\zeta_p$ be a primitive $p$-th root of unity and $\O$ be any oracle that on input a binary matrix $X\in\bit^{n\times n}$, outputs a representation of $\Per_{\zeta_p}(X)$.
  Then $\BPP^{\Mod_p\P}\subseteq \BPP^{\O}$.
\end{theorem}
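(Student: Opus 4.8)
The plan is to reduce computing $\Per(X) \bmod p$ — which is $\Mod_p\P$-complete by Valiant's theorem (\thm{vali}) — to the problem of computing a representation $(b_0,\ldots,b_{p-1})$ of $\Per_{\zeta_p}(X)$. This generalizes the argument for $p=3$ in \thm{120}, but where that proof could take the real part of the answer by hand, for general $p$ we instead exploit the algebraic structure of the cyclotomic ring $\mathbb Z[\zeta_p]$ together with the fact, recorded in the preliminaries, that $\Phi_p(1) = p$.

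First I would set up the reduction. Given $X \in \bit^{n\times n}$, call $\O(X)$ to obtain integers $(b_0,\ldots,b_{p-1})$ with $\Per_{\zeta_p}(X) = \sum_{k=0}^{p-1} b_k \zeta_p^k$. On the other hand, from \eq{per-poly} we have $\Per_{\zeta_p}(X) = \sum_{k=0}^{p-1} A_k \zeta_p^k$ where $A_k = \sum_{\sigma : \ell(\sigma) \equiv k \bmod p} \prod_i X_{i,\sigma(i)} \in \mathbb Z_{\geq 0}$, and crucially $\sum_k A_k = \Per(X)$. The key observation is that the tuple $(b_k)$ and the tuple $(A_k)$ represent the same element of $\mathbb Z[\zeta_p]$, so they differ by an integer multiple of the coefficient vector of $\Phi_p(x) = 1 + x + \cdots + x^{p-1}$; that is, there is $c \in \mathbb Z$ with $b_k = A_k + c$ for every $k$ (this uses that $\{1,\zeta_p,\ldots,\zeta_p^{p-2}\}$ is a $\mathbb Q$-basis, equivalently $\Phi_p$ is the minimal polynomial, so the $\mathbb Z$-kernel of the evaluation map is exactly $\mathbb Z \cdot (1,1,\ldots,1)$). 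Summing over $k$ gives $\sum_k b_k = \sum_k A_k + pc = \Per(X) + pc$, hence
\begin{align}
  \sum_{k=0}^{p-1} b_k \equiv \Per(X) \pmod p.
\end{align}
So the $\BPP^\O$ machine simply outputs $\big(\sum_k b_k\big) \bmod p$, recovering $\Per(X) \bmod p$, and \thm{vali} plus closure of $\BPP^{(\cdot)}$ under the obvious simulation gives $\BPP^{\Mod_p\P} \subseteq \BPP^\O$.

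The step that needs the most care is the claim that two integer tuples representing the same algebraic integer in $\mathbb Z[\zeta_p]$ differ by a constant multiple of $(1,\ldots,1)$ — i.e. pinning down the integer kernel of the evaluation map $\mathbb Z^p \to \mathbb Z[\zeta_p]$. Over $\mathbb Q$ this is immediate from irreducibility of $\Phi_p$ (so the kernel of $\mathbb Q^p \to \mathbb Q(\zeta_p)$ is the one-dimensional span of the coefficient vector of $\Phi_p$, since $\Phi_p$ has degree exactly $p-1$); intersecting with $\mathbb Z^p$ and using that $\Phi_p$ is monic with content $1$ shows the integer kernel is generated by $(1,1,\ldots,1)$ itself, with no finer lattice possible. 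Everything else — that $\Per_{\zeta_p}(X) \in \mathbb Z[\zeta_p]$ for binary (indeed integer) $X$, that $\ell(\sigma) \le \binom n2$ so the polynomial-in-$z$ expansion is valid, and that $\Per(X) = \sum_\sigma \prod_i X_{i,\sigma(i)} = \sum_k A_k$ — is routine and already recorded in the properties of $z$-permanent. Note this argument does not use primality of $p$ beyond $\Phi_p(1) = p$ and Valiant's $\Mod_p\P$-completeness; the restriction to prime power $m = p^k$ in the headline theorems comes from the analogous identity $\Phi_{p^k}(1) = p$, handled by the same bookkeeping in the next subsection.
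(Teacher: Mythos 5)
Your reduction is correct and agrees with the paper's at the outer level: both call $\O$ once and output $\sum_k a_k \bmod p$, so everything hinges on the congruence $\sum_k a_k \equiv \Per(X) \pmod p$. Where you genuinely diverge is in how that congruence is proved. The paper argues arithmetically: starting from $\sum_i (A_i-a_i)\zeta_p^i=0$ it takes $p$-th powers, uses that all binomial cross terms are divisible by $p$ and that $\zeta_p^p=1$ to obtain $\sum_i (A_i-a_i)^p=-ph$ with $h\in\mathbb Z[\zeta_p]\cap\mathbb Q=\mathbb Z$, and then finishes with Fermat's little theorem. You instead pin down the integer kernel of the evaluation map $\mathbb Z^p\to\mathbb Z[\zeta_p]$: since $\Phi_p(x)=1+x+\cdots+x^{p-1}$ is the minimal polynomial of $\zeta_p$ and the difference polynomial has degree at most $p-1$ with integer coefficients, the difference vector must be an integer multiple of $(1,\ldots,1)$, and summing shifts the total by $c\,\Phi_p(1)=cp$. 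Your route is more structural, avoids the Frobenius-type expansion and Fermat's little theorem entirely, and in fact yields the stronger conclusion that $a_k-A_k$ is one and the same integer $c$ for every $k$, not merely that the sums agree modulo $p$. One small caution about your closing remark: the statement that the kernel is exactly $\mathbb Z\cdot(1,\ldots,1)$ is special to prime $p$; for $m=p^k$ the kernel is the lattice of coefficient vectors of integer-polynomial multiples of $\Phi_{p^k}$ of degree less than $m$, so one must instead argue that evaluating any such multiple at $1$ gives a multiple of $\Phi_{p^k}(1)=p$ --- the ``same bookkeeping'' you allude to --- whereas the paper handles the prime-power case by iterating its Fermat argument together with \thm{finite-field}.
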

\begin{proof}
For a binary matrix $X$, let the output of $\O(X)$ be $(a_0,\ldots,a_{p-1})$.
Also, let $A_i$ be defined as in \eq{per-poly}.
By the correctness of $\O$, 
\begin{equation}\label{eq1}
    \Per_{\zeta_p}(X)=\sum_{i=0}^{p-1}A_i \zeta_p^i=\sum_{i=0}^{p-1}a_i \zeta_p^i.
\end{equation}

Note that it does not necessarily hold that for every $i$, $A_i=a_i$, but the polynomials are identical congruent to the minimal polynomial of $\zeta_p$.
In particular, for prime $p$, since $0=\sum_i (A_i-a_i)\zeta_p^i$, 
\begin{equation}
    \left(\sum_{i=0}^{p-1}(A_i-a_i)\zeta_p^i\right)^p=\sum_{i=0}^{p-1} (A_i-a_i)^p+ph=0
\end{equation}
where $h \in \mathbb{Z}[\omega]$.
Since the left side of the equality
\begin{align}
  \sum_{i=0}^{p-1} (A_i-a_i)^p = -ph
\end{align}
is an integer, $h\in\mathbb Z[\zeta_p]\cap \mathbb Q\subseteq\mathbb Z$.
This means that $p$ divides $\sum_i (A_i-a_i)^p$, and
\begin{align}\nonumber
    0 
  &\equiv \sum_{i=0}^{p-1} (A_i-a_i)^p  \\\nonumber
    &\equiv \sum_{i=0}^{p-1} A_i^p-\sum_{i=0}^{p-1} a_i^p \\
    &\equiv \sum_{i=0}^{p-1} A_i -a_i  \pmod p.
\end{align}
The last equality holds from Fermat's little theorem (\thm{f}). 

In summary, given access to an oracle $\O$ which exactly computes $\Per_{\zeta_p}(X)$ for binary $X\in\bit^{n\times n}$, the following algorithm computes $\Per(X)\bmod p$: 
\begin{itemize}
\item Run $(a_0,\ldots,a_{p-1})\gets\O(X)$.
\item Output $\sum_{i=0}^{p-1}a_i \bmod p$.
\end{itemize}
\end{proof}

By Toda's theorem \thm{toda}, we have the following corollary.
\begin{corollary}\label{prime}
For prime $p>3$, let $\O$ denote any oracle which computes $\Per_{\zeta_p}(X)$ exactly for binary matrix $X\in\bit^{n\times n}$. Then 
\begin{equation}
  \PH \subseteq \BPP^{\Mod_p\P}\subseteq  \BPP^{\O}. 
\end{equation}
\end{corollary}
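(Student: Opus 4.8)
The plan is to derive the statement by simply chaining the two containments already available in this section; no fresh argument is needed. The right-hand containment $\BPP^{\Mod_p\P}\subseteq\BPP^\O$ is literally the conclusion of \thm{mod-p}, so I would invoke it directly. To recall why it holds: on input $X\in\bit^{n\times n}$ a single call to $\O$ returns a tuple $(a_0,\ldots,a_{p-1})$ with $\sum_{i=0}^{p-1}a_i\equiv\Per(X)\pmod p$, and since computing $\Per(X)\bmod p$ is $\Mod_p\P$-complete by Valiant's \thm{vali}, this places $\Mod_p\P$ inside $\P^\O$; relativizing the standard inclusion then gives $\BPP^{\Mod_p\P}\subseteq\BPP^{\P^\O}=\BPP^\O$.

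For the left-hand containment I would apply the generalized Toda theorem \thm{toda} with the counting class $A=\Mod_p\P$. Since $p>3$ is in particular a prime exceeding $2$, the hypothesis of \thm{toda} is met, and we obtain $\PH\subseteq\BPP^{\Mod_p\P}$. Concatenating with the previous step yields $\PH\subseteq\BPP^{\Mod_p\P}\subseteq\BPP^\O$, which is exactly the assertion, and in particular an efficient simulation of $\O$ would collapse $\PH$.

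There is essentially no obstacle here: the corollary is a packaging of \thm{mod-p} and \thm{toda}. The only point I would flag for the reader is why the hypothesis reads $p>3$ rather than $p>2$. The value $p=3$ is not genuinely excluded — it is already covered by \thm{prime3} in the previous subsection — while $p=2$ corresponds to $\zeta_2=-1$, for which $\Per_{-1}=\det$ is polynomial-time computable and no such hardness can hold; so $p>3$ is the natural range for the conclusion of this subsection.
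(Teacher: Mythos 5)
Your proposal matches the paper's own derivation exactly: the paper obtains this corollary by combining \thm{mod-p} (which gives $\BPP^{\Mod_p\P}\subseteq\BPP^{\O}$ via the coefficient-sum reduction and Valiant's \thm{vali}) with the generalized Toda theorem \thm{toda} for the containment $\PH\subseteq\BPP^{\Mod_p\P}$. Your added remark on why the hypothesis reads $p>3$ (the case $p=3$ being handled separately by \thm{prime3}, and $p=2$ reducing to the determinant) is consistent with the paper and does not change the argument.
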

We can easily extend our result to a prime power $q=p^k$ for prime $p$.
\begin{theorem}\label{thm:mod-p}
  For prime $p$ and prime power $q=p^k$, let $\zeta_q$ be a primitive $q$-th root of unity and $\O$ be any oracle that on input a binary matrix $X\in\bit^{n\times n}$, outputs an exact representation of $\Per_{\zeta_q}(X)$.
  Then $\BPP^{\Mod_p\P}\subseteq \BPP^{\O}$.
\end{theorem}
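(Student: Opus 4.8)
The plan is to follow the proof of the prime case proved just above, replacing its appeal to Fermat's little theorem by the structural identity $\Phi_{p^k}(1)=p$. As in that argument, it suffices to give a deterministic polynomial-time algorithm making a single query to $\O$ that outputs $\Per(X)\bmod p$ for binary $X\in\bit^{n\times n}$: this problem is $\Mod_p\P$-hard by Valiant (\thm{vali}, whose $\Mod_p\P$-completeness already holds for $\bit$-matrices), so such a reduction yields $\Mod_p\P\subseteq\P^\O\subseteq\BPP^\O$ and hence $\BPP^{\Mod_p\P}\subseteq\BPP^\O$. Together with \thm{toda} this also upgrades, exactly as in the corollary to the prime case, to $\PH\subseteq\BPP^{\Mod_p\P}\subseteq\BPP^\O$.

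First I would record the bookkeeping. Substituting $z=1$ in \eq{per-polynomial}, $\Per_{\zeta_q}(X)=\sum_{\ell=0}^{\binom n2}c_\ell\,\zeta_q^{\ell}$ with $c_\ell\in\mathbb Z_{\ge 0}$ and $\sum_\ell c_\ell=\Per(X)$ (the ordinary permanent, obtained by setting $z=1$). Let $(a_0,\dots,a_D)$ be the tuple returned by $\O(X)$; after padding the shorter of the two coefficient lists with zeros, both $P(x):=\sum_\ell c_\ell x^\ell$ and $Q(x):=\sum_i a_i x^i$ are integer polynomials with $P(\zeta_q)=Q(\zeta_q)=\Per_{\zeta_q}(X)$, the second equality by correctness of $\O$. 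Hence $\zeta_q$ is a root of $R(x):=P(x)-Q(x)\in\mathbb Z[x]$.

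Next I would use that $\Phi_q$, being irreducible over $\mathbb Q$, is the minimal polynomial of $\zeta_q$, and is monic with integer coefficients (\dfn{cyclo}); therefore $\Phi_q\mid R$ in $\mathbb Q[x]$, and by Gauss's lemma $R=\Phi_q\cdot g$ for some $g\in\mathbb Z[x]$. Evaluating at $x=1$ and invoking $\Phi_{p^k}(1)=p$ (recorded in \sec{field}) gives $R(1)=p\,g(1)$, i.e. $\sum_i a_i\equiv\sum_\ell c_\ell=\Per(X)\pmod p$. So the algorithm ``run $(a_i)\gets\O(X)$; output $\sum_i a_i\bmod p$'' is the desired reduction. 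An alternative that mirrors the prime case even more closely is to raise $R(\zeta_q)=0$ to the $p^k$-th power in $\mathbb Z[\zeta_q]$: since the Frobenius map is a ring endomorphism of $\mathbb Z[\zeta_q]/p\mathbb Z[\zeta_q]$ and $\zeta_q^{p^k}=1$, one gets $\sum_i(c_i-a_i)^{p^k}\equiv 0$ modulo $p\mathbb Z[\zeta_q]$, hence modulo $p$ because $p\mathbb Z[\zeta_q]\cap\mathbb Z=p\mathbb Z$, and then \thm{f} in the form $a^{p^k}\equiv a\pmod p$ gives the same congruence.

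The only delicate point — and the reason the hypothesis is a prime power rather than an arbitrary modulus $m$ — is that $\O$ may return a \emph{non-canonical} representation, so $\sum_i a_i$ need not equal $\Per(X)$ on the nose; it is precisely $\Phi_{p^k}(1)=p$ (in contrast to $\Phi_m(1)=1$ for non-prime-power $m$) that pins the discrepancy down to a multiple of $p$. Everything else is routine, so I do not anticipate a genuine obstacle beyond getting this reduction-of-representations step right.
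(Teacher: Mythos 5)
Your reduction is correct, and your primary derivation of the key congruence is a genuinely different route from the one the paper actually writes down. The paper's proof of this theorem raises $\sum_i (A_i-a_i)\zeta_q^i=0$ to the $q$-th power, uses that $p$ divides the binomial coefficients $\binom{p}{r}$ (\thm{finite-field}) so that the cross terms land in $p\mathbb Z[\zeta_q]$, concludes the resulting rational number is an integer multiple of $p$ via $\mathbb Z[\zeta_q]\cap\mathbb Q=\mathbb Z$, and finishes with Fermat's little theorem in the form $a^{q}\equiv a\pmod p$ --- exactly your ``alternative'' Frobenius argument. Your main route instead observes that $R(x)=P(x)-Q(x)\in\mathbb Z[x]$ vanishes at $\zeta_q$, hence is divisible by the monic minimal polynomial $\Phi_q$ with integer quotient (Gauss's lemma), and evaluates at $x=1$ using $\Phi_{p^k}(1)=p$; this is in fact the argument the paper's introduction sketches but does not carry out in Section~3.2. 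The two proofs buy slightly different things: yours isolates the prime-power hypothesis cleanly in the single identity $\Phi_{p^k}(1)=p$ (and makes transparent why $\Phi_m(1)=1$ kills the approach for other $m$), while the paper's avoids polynomial division and Gauss's lemma, working entirely inside $\mathbb Z[\zeta_q]$ with Frobenius and Fermat. Your bookkeeping (padding the coefficient tuples, one oracle query, summing mod $p$, invoking Valiant for hardness of the binary permanent mod $p$, and Toda for the $\PH$ consequence) matches the paper's.
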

\begin{proof}
The idea basically follows from the proof of \thm{mod-p}.
For $\zeta_{q}=e^{2 \pi i a/q}$ and $\gcd(a,p)=1$, 
we have
\begin{equation}
    \left(\sum_{i=0}^{q-1}(A_i-a_i)\zeta_{q}^i \right)^{q}=\sum_{i=0}^{q-1} (A_i-a_i)^{q}+ph=0
\end{equation}
where $h \in \mathbb{Z}[\omega]\cap\mathbb Q\subseteq \mathbb Z$.
We then use Fermat's little theorem (\thm{f}) which imples that $a^{q}\equiv a(\bmod p)$, and
\begin{equation}
    \sum_{i=0}^{q-1} (A_i-a_i)+ph=0.
\end{equation}
So then we have (using finite field property \thm{finite-field})
\begin{equation}
    \Per(X) \equiv \sum_{i=0}^{q-1}A_i\equiv\sum_{i=0}^{q-1}a_i \pmod p,
\end{equation}
which finish the proof.
\end{proof}

We end this section by summarizing the classical hardness of $\Per_{z}(X)$ for such particular $z$.
\begin{theorem}\label{thm:classical hard}
For odd prime $p$ and prime power $q=p^k$, let $\zeta_q$ be a primitive $q$-th root of unity.
Exactly computing $\zeta_q$-permanent for binary matrices is not in $\PH$; otherwise, the polynomial hierarchy collapses.
More precisely, for any oracles $\O$ which computes the $\zeta_q$-permanent of binary matrices, the following containments hold:
\begin{align*}
    \PH \subseteq    \BPP^{\Mod_p\P}\subseteq \BPP^{\O}.
\end{align*}
\end{theorem}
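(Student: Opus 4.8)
The plan is to assemble \thm{classical hard} from three ingredients already established: the generalized Toda theorem (\thm{toda}), Valiant's $\Mod_p\P$-completeness of the permanent modulo an odd prime (\thm{vali}), and the prime-power reduction of \thm{mod-p} that extracts $\Per(X)\bmod p$ from any exact representation of $\Per_{\zeta_q}(X)$. The left containment is immediate: since $p$ is an odd prime, \thm{toda} applied with $A=\Mod_p\P$ gives $\PH\subseteq\BPP^{\Mod_p\P}$, with nothing further to check.

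For the right containment the strategy is to reduce the $\Mod_p\P$-complete problem ``is $\Per(X)\bmod p$ nonzero?'' (\thm{vali}, valid since $p>2$) to a single call to $\O$ in deterministic polynomial time, which places it in $\BPP^\O$; chaining then gives $\PH\subseteq\BPP^{\Mod_p\P}\subseteq\BPP^\O$. On input a binary $X\in\bit^{n\times n}$, the reduction runs $\O(X)$ to obtain a tuple $(a_0,\ldots,a_{q-1})$ with $\Per_{\zeta_q}(X)=\sum_i a_i\zeta_q^i$ and outputs $\sum_i a_i\bmod p$. Correctness is exactly \thm{mod-p}: writing $\Per_{\zeta_q}(X)=\sum_i A_i\zeta_q^i$ with $A_i\in\mathbb Z$ as in \eq{per-poly}, one has $\sum_i A_i=\Per(X)$ identically, and the two tuples differ by a relation $\sum_i(A_i-a_i)\zeta_q^i=0$; raising this to the $q$-th power and reducing modulo $p$ — using \thm{finite-field} iterated $k$ times, together with the fact that the complementary term lies in $\mathbb Z[\zeta_q]\cap\mathbb Q=\mathbb Z$ by \thm{az} and \thm{alg} — gives $\sum_i(A_i-a_i)^q\equiv 0\pmod p$, whence Fermat's little theorem with exponent $q=p^k$ (\thm{f}) yields $\sum_i(A_i-a_i)\equiv 0\pmod p$, i.e. $\sum_i a_i\equiv\Per(X)\pmod p$.

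The one step I would handle with care — and the only place where the prime-power hypothesis is genuinely used — is the robustness of the reduction to the ambiguity in the output of $\O$ permitted by \dfn{exact-representation}: two valid coefficient tuples for the same algebraic integer need not have the same integer coordinate-sum, only the same sum modulo $p$, and this congruence is precisely the shadow of the identity $\Phi_q(1)=p$, made effective by the $q$-th-power/Fermat argument above. Everything else is routine: the reduction is polynomial time with a single oracle query, the hypothesis on $\O$ is used only through the defining relation for a representation so no assumption on which representative is returned is needed, and the oddness of $p$ enters solely to invoke \thm{vali} and \thm{toda}.
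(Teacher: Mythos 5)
Your proposal is correct and follows essentially the same route as the paper: the first containment is Toda's theorem (\thm{toda}) with $A=\Mod_p\P$, and the second is exactly the paper's \thm{mod-p} for prime powers, i.e.\ Valiant's $\Mod_p\P$-completeness of $\Per(X)\bmod p$ (\thm{vali}) combined with the $q$-th-power/Fermat argument (via $\mathbb Z[\zeta_q]\cap\mathbb Q=\mathbb Z$) showing that the coefficient sum of any valid representation equals $\Per(X)\bmod p$. Your remark that the representation's non-uniqueness only preserves the coefficient sum modulo $p$ (reflecting $\Phi_q(1)=p$) is precisely the point the paper's reduction addresses, so nothing is missing.
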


Unfortunately, our method does not generalize to a primitive $m$-th root of unity for $m$ divisible by more than one prime. 
While our result does not rule out the existence of an efficient solver for non-prime powers, it is interesting to see whether there exists a polynomial-time algorithm for any of these points. Since it is not so relevant to our result, we leave the complexity of computing general composite numbers as future work. 

For $p=2$, we notice that our method with oracle which computes $\zeta_{2^k}$-permanent implies one can compute $\Per(X) \pmod 2$. 
However, since $\Per(X) \equiv\det(X) (\bmod 2),$ we do not have the hardness results for computing $\zeta_{2^k}$-permanet Indeed, it is known that permanent modulo $2^k$ admits an efficient algorithm \cite{datta2021parallel,curticapean2015parameterizing}.
Any hardness results for $\zeta_{2^k}$-permanent cannot be implied by the hardness of permanent modulo $2^k$.

Unfortunately, our method cannot be used to establish hardness for rational numbers with a non-prime-power denominator. 
\subsection{Hardness of Rational Phases}
\paragraph{}

Recall that for every $z\in\mathbb T$, the $z$-permanent of a binary matrix is always an integral polynomial in $z$ of degree at most $\binom{n}{2}$.
For a primitive $m$-th root of unity $z=\zeta_m$, we can further reduce the degree by taking the congruence to the minimal polynomial of the degree of $\zeta_m$, when $\phi(m)<\binom{n}{2}$.
In the previous sections, we have applied these facts to reduce the problem of computing $1$-permanent modulo $m$ to $\zeta_m$-permanent for prime power $m$.

While we do not have a general result for every non-prime-power rational phases, in this section, we consider a special case where $\phi(m)\geq\binom{n}{2}$.
In this case, since the monomials $1,\zeta_m,\zeta_m^2,\ldots,\zeta_m^{\binom{n}{2}}$ are linearly indepenent, $1$-permanent can be directly obtained from $\zeta_m$-permanent.

\begin{theorem}\label{thm:gen-q}
Let $\O$ denote any oracle which input are binary matrix $X$ and $m:\mathbb{N}\to\mathbb N$ such that $\phi(m(n))\geq\binom{n}{2}+1$, it outputs $\Per_{\zeta_m}(X)$ exactly. Then $\P^{\# \P} \subseteq \P^{\mathcal{O}}$. 
\end{theorem}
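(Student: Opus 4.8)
The plan is to reduce exact computation of the integer-valued permanent $\Per(X)$ to a single query to $\O$ on the same matrix $X$ together with an auxiliary choice of $m=m(n)$ with $\phi(m(n))\ge \binom{n}{2}+1$; since $\Per$ is $\sharpP$-hard and the reduction is polynomial-time (in fact a single oracle call plus postprocessing), this gives $\P^{\sharpP}\subseteq\P^{\O}$. The key point is that by \eq{per-polynomial}, for a binary matrix $X$ we have $\Per_{\zeta_m}(X)=\sum_{\ell=0}^{\binom{n}{2}} A_\ell\,\zeta_m^\ell$ with $A_\ell:=\sum_{\sigma:\ell(\sigma)=\ell}\prod_i X_{i,\sigma(i)}\in\mathbb Z_{\ge 0}$, and crucially $\Per(X)=\sum_{\ell=0}^{\binom{n}{2}} A_\ell$. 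So it suffices to recover the individual coefficients $A_0,\ldots,A_{\binom{n}{2}}$ from the value $\Per_{\zeta_m}(X)$ and sum them.

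First I would fix a polynomial-time computable $m=m(n)$ with $\phi(m(n))\ge\binom{n}{2}+1$; this exists because, by the lower bound on Euler's totient function quoted in the preliminaries, $\phi(\Omega(n^c))>n^2$ for, say, $c=3$ and large $n$, so one may take $m(n)$ to be any integer in a suitable range, located by brute-force search over a polynomially long interval (computing $\phi$ via trial factorization of numbers of size $\poly(n)$). Next I would recall that $1,\zeta_m,\zeta_m^2,\ldots,\zeta_m^{\phi(m)-1}$ form a $\mathbb Q$-basis of the cyclotomic field $\mathbb Q(\zeta_m)$, i.e.\ the minimal polynomial $\Phi_m$ of $\zeta_m$ has degree $\phi(m)$; since $\binom{n}{2}\le\phi(m)-1$, the monomials $1,\zeta_m,\ldots,\zeta_m^{\binom{n}{2}}$ are $\mathbb Q$-linearly independent. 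Hence the expansion $\Per_{\zeta_m}(X)=\sum_{\ell\le\binom{n}{2}}A_\ell\zeta_m^\ell$ has \emph{unique} integer coefficients among polynomials of degree $\le\binom{n}{2}$, and in particular whatever representation $\O$ returns (whether as a tuple of coefficients in the $\deg\Phi_m-1$ basis, or as an element of $\mathbb Z[\zeta_m]$ in some fixed normal form) determines the $A_\ell$ unambiguously. Reading off $A_0,\ldots,A_{\binom{n}{2}}$ and outputting $\sum_\ell A_\ell=\Per(X)$ then finishes the reduction, and Toda's theorem is not even needed — we get $\P^{\sharpP}\subseteq\P^{\O}$ directly, which is the claimed statement (and by \thm{toda} also collapses $\PH$).

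The one subtlety, and the step I would be most careful about, is matching the \emph{output format} of $\O$ to the extraction of the $A_\ell$'s. Under the paper's convention (cf.\ \dfn{exact-representation} and the promised general \dfn{general-n}), $\O$ returns a tuple of integers $(b_0,\ldots,b_{\phi(m)-1})$ with $\Per_{\zeta_m}(X)=\sum_{j<\phi(m)}b_j\zeta_m^j$; because the $\zeta_m^j$ for $j<\phi(m)$ are a basis, comparing with $\sum_{\ell\le\binom{n}{2}}A_\ell\zeta_m^\ell$ (padding with zeros up to index $\phi(m)-1$) forces $b_j=A_j$ for $j\le\binom{n}{2}$ and $b_j=0$ otherwise, so no linear-algebra inversion is needed. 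If instead one only assumes $\O$ returns \emph{some} integer polynomial of degree $\le\binom{n}{2}$ (or $\le\deg\Phi_m-1$) representing the value, the same uniqueness argument applies after at most one reduction modulo $\Phi_m$, which is a polynomial-size integer computation since by \thm{cyclotomic-coefficient-bound} the coefficients of $\Phi_m$ have $O(m\log m)$ bits. Thus the ``hard part'' is really just bookkeeping: verifying that the chosen $m(n)$ is polynomially bounded and efficiently findable, and that the coefficient extraction is format-robust; the mathematical content is entirely the linear independence of $\{\zeta_m^\ell\}_{\ell\le\binom{n}{2}}$, guaranteed by $\phi(m(n))\ge\binom{n}{2}+1$, combined with Valiant's $\sharpP$-hardness of $\Per$ on $\{0,1\}$-matrices.
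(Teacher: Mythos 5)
Your proposal is correct and follows essentially the same route as the paper's proof: since $\phi(m(n))\geq\binom{n}{2}+1$ makes $1,\zeta_m,\ldots,\zeta_m^{\binom{n}{2}}$ linearly independent over $\mathbb Q$, the coefficients returned by $\O$ must coincide with the $A_\ell$, and summing them recovers $\Per(X)$, which is $\sharpP$-hard by Valiant. Your additional care about efficiently selecting $m(n)$ and about the oracle's output format is harmless bookkeeping beyond what the paper spells out, but the mathematical core is identical.
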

\begin{proof}
Let $d:=\binom{n}{2}$.
Computing $\O(m,X)$ yields a tuple of coefficients $(a_0,\ldots,a_d)$ satisfying 
\begin{equation}
    \Per_{\zeta_m}(X)=\sum_{i=0}^{d}A_i\zeta_m^{i}=\sum_{i=0}^{d}a_i\zeta_m^{i}.
\end{equation}
where $A_i=\sum_{\sigma \in \mathbb S_N, \ell(\sigma)= i} \left(\prod_{j=1}^n X_{j,\sigma(j)}\right)$.
Since the monimials $1,\zeta_m,\ldots,\zeta_m^{d}$ are linearly dependent for $\phi(m)\geq d+1$, 
$A_i=a_i$ for every $0\leq i\leq d$.
This implies that $\sum_{i=0}^d a_i=\Per_{m}(X)$.
\end{proof}
However, for cases with non-prime-power $m=O(1)$, neither of the methods seems to work.
We leave the hardness of primitive $m$-th roots of unity for fix non-prime-power $m$ as an open question.

\section{Hardness of Approximating $\zeta_m$-permanent}\label{sec:app1}

In \sec{classical}, we have shown that for prime power $m=p^k$ for odd prime $p$, exactly computing $\Per_{\zeta_m}(X)$ is as hard as computing $\Per(X)\bmod p$.
The task is known to be $\Mod_p\P$-hard.
In this section, we show that approximating $\Per_{\zeta_m}(X)$ for integer $m$ and binary matrices is as hard as exactly computing it. 

In \sec{amplification}, we show that approximating $|\Per_{\zeta_m}(X)|^2$ to multiplicative error within $g=\poly(n)$ is as hard as approximating $\Per_{\zeta_m}(X)$ to multiplicative error $2^{-\poly(n)}$, up to a polynomial multiplicative overhead.
In \sec{stod}, we show that with an $\NP$ oracle, exactly computing $\Per_{\zeta_m}(X)$ is as hard as approximating it to error $2^{-\poly(n)}$.
Combining the results in these two sections, we conclude that an efficient approximation algorithm would also imply a collapse of the polynomial hierarchy.

\subsection{Reducing the Error}\label{sec:amplification}

In this section, we show that for a prime number $m$ and an $m$-th primitive root of unity $\zeta_m$, approximating $|\Per_{\zeta_m}(X)|^2$ to multiplicative error $g=n^{O(1)}$ is as hard as approximating $\Per_{\zeta_m}(X)$ to multiplicative error $1+2^{-n^{O(1)}}$.

First we prove the following lemma which will be useful later.
\begin{lemma}\label{lem:error-propagation}
  For integer $n\geq 1$ and $\delta\leq \frac{1}{2n}$, let $\alpha_1,\ldots,\alpha_n$ be complex numbers and $\tilde\alpha_1,\ldots,\tilde\alpha_n$ satisfying for $i\in[n]$, 
  \begin{align}\label{eq:alpha-i}
    |\alpha_i-\tilde\alpha_i| \leq \delta |\alpha_i|.
  \end{align}
  Then it holds that 
  \begin{align}
    |\alpha_1\ldots\alpha_n - \tilde\alpha_1\ldots\tilde\alpha_n|
    \leq 2n\delta |\alpha_1\ldots\alpha_n|.
  \end{align}
\end{lemma}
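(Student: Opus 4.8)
The plan is to prove the bound by induction on $n$, tracking the multiplicative error carefully. For $n=1$, the claim is immediate from \eq{alpha-i} since $\delta\le 2\delta$. For the inductive step, I would write $\alpha_1\ldots\alpha_n - \tilde\alpha_1\ldots\tilde\alpha_n = (\alpha_1\ldots\alpha_{n-1})(\alpha_n-\tilde\alpha_n) + (\alpha_1\ldots\alpha_{n-1} - \tilde\alpha_1\ldots\tilde\alpha_{n-1})\tilde\alpha_n$, apply the triangle inequality, and bound each term. The first term contributes at most $\delta|\alpha_1\ldots\alpha_n|$ directly from \eq{alpha-i}. The second term is handled by the inductive hypothesis, which gives $|\alpha_1\ldots\alpha_{n-1} - \tilde\alpha_1\ldots\tilde\alpha_{n-1}| \le 2(n-1)\delta|\alpha_1\ldots\alpha_{n-1}|$, multiplied by $|\tilde\alpha_n| \le (1+\delta)|\alpha_n|$.

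Combining, the total error is at most $\bigl(\delta + 2(n-1)\delta(1+\delta)\bigr)|\alpha_1\ldots\alpha_n|$, so it suffices to check that $\delta + 2(n-1)\delta(1+\delta) \le 2n\delta$, i.e.\ $1 + 2(n-1)(1+\delta) \le 2n$, i.e.\ $2(n-1)\delta \le 1$. This last inequality holds precisely because $\delta \le \frac{1}{2n} \le \frac{1}{2(n-1)}$, which is where the hypothesis $\delta \le \frac{1}{2n}$ gets used. One subtlety to be careful about: the inductive hypothesis requires the error parameter to satisfy $\delta \le \frac{1}{2(n-1)}$, which is implied by $\delta \le \frac{1}{2n}$, so the induction goes through cleanly with the same $\delta$.

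I do not anticipate a genuine obstacle here — this is a routine error-propagation estimate. The only mild care needed is to avoid a crude bound like $(1+\delta)^n - 1$ which would give an $e$-factor rather than the clean $2n\delta$; the telescoping/inductive approach above avoids that. An alternative one-line route is to note $|\tilde\alpha_i| \le (1+\delta)|\alpha_i|$ and expand $\prod\tilde\alpha_i - \prod\alpha_i$ as a sum of $2^n - 1$ terms, but bounding that sum still amounts to the same estimate and is messier, so I would present the induction. I would also remark that the constant $2$ is not optimal but is convenient and entirely sufficient for the later application in \sec{amplification}.
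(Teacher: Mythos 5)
Your proof is correct and follows essentially the same route as the paper's: induction on $n$ with the decomposition $\alpha_1\cdots\alpha_n-\tilde\alpha_1\cdots\tilde\alpha_n=(\alpha_1\cdots\alpha_{n-1})(\alpha_n-\tilde\alpha_n)+(\alpha_1\cdots\alpha_{n-1}-\tilde\alpha_1\cdots\tilde\alpha_{n-1})\tilde\alpha_n$, the bound $|\tilde\alpha_n|\le(1+\delta)|\alpha_n|$, and the observation that $2(n-1)\delta\le 1$ (equivalently the paper's $2k\delta^2\le\delta$) closes the induction under $\delta\le\frac{1}{2n}$. No gaps; your handling of the $\delta$-threshold in the inductive hypothesis matches what the paper does implicitly.
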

\begin{proof}
  We prove the lemma by induction.
  For $n=1$, the lemma obviously hold.
  Suppose that the lemma holds for $n=k$.
  Then for $n=k+1$, let $A:=\alpha_1\ldots\alpha_k$ and $\tilde A:=\tilde\alpha_1\ldots\tilde\alpha_k$.
  By the inductive hypothesis, $|\tilde A-A|\leq 2k\delta|A|$.
  Then 
  \begin{align}\nonumber
    \frac{|\alpha_{k+1}A-\tilde\alpha_{k+1}\tilde A|}{|\alpha_{k+1}A|}
    &\leq \frac{|\alpha_{k+1}-\tilde\alpha_{k+1}|}{|\alpha_{k+1}|}
    + \frac{|\tilde\alpha_{k+1}|}{|\alpha_k|}\frac{|A-\tilde A|}{|A|} \\\nonumber
    &\leq \delta + (1+\delta) 2k\delta \\\nonumber
    &\leq (2k+1)\delta + 2k\delta^2 \\
    &\leq (2k+2)\delta.
  \end{align}
  The first inequality holds by \eq{alpha-i} and $|\tilde\alpha_{k+1}|\leq |\alpha_{k+1}-\tilde\alpha_{k+1}|+|\alpha_{k+1}|\leq (1+\delta)|\alpha_{k+1}|$.
  The last inequality holds since $2k\delta^2\leq 2n\delta^2\leq \delta$.
\end{proof}

We can decrease the error of our approximation algorithm from a multiplicative factor $g=n^{O(1)}$ to $1+2^{-n^{O(1)}}$ using a polynomial overhead.

\begin{theorem}\label{thm:we to wa}
  For $g\in[1,\poly(n)]$, prime number $m$, and a primitive $m$-th root of unity $\zeta_m$, if there is an algorithm $\O$ which on input $X\in\bit^{n\times n}$, approximates $|\Per_{\zeta_m}(X)|^2$ to within a multiplicative factor of $g$, 
  then there exists an algorithm which outputs an esatimate $\tilde P$ of $\Per_{\zeta_m}(X)$ satisfying
  \begin{align}
    \frac{|\Per_{\zeta_m}(X)-\tilde P|}{|\Per_{\zeta_m}(X)|} \leq \epsilon
  \end{align}
 using $O(mn^2g^4\log n+ n g^4\log g + ng^4\log(1/\epsilon))$ queries to $\O$.
\end{theorem}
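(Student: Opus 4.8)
The goal is an amplification: turn a multiplicative-$g$ approximation of $|\Per_{\zeta_m}(X)|^2$ into a very-high-precision multiplicative approximation of the complex number $\Per_{\zeta_m}(X)$ itself. I would mirror the Aaronson–Arkhipov amplification strategy, adapted to the $z$-permanent. The key structural fact available to us is the Expansion Theorem \eq{expan}: $\Per_z(X)=\sum_{k=1}^n X_{jk}z^{k-1}\Per_z(X^{jk})$, which expresses $\Per_{\zeta_m}(X)$ as a $\mathbb Z[\zeta_m]$-linear combination of $z$-permanents of $(n-1)\times(n-1)$ minors. Running this recursively down to small base cases gives a way to reconstruct the phase of $\Per_{\zeta_m}(X)$ from magnitudes, provided we can control the magnitudes of the sub-permanents appearing along the way. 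So the first step is to use the magnitude oracle $\O$ — which only returns $|\cdot|^2$ up to factor $g$ — together with padding/embedding tricks (adding rows and columns, or taking direct sums so that $\Per_z$ of the block is a product, using property 2 in the list of $z$-permanent properties) to boost the approximation: if $X\oplus Y$ is block-triangular then $\Per_z(X\oplus Y)=\Per_z(X)\Per_z(Y)$, so querying $\O$ on $t$-fold direct sums yields an estimate of $|\Per_{\zeta_m}(X)|^{2t}$ to factor $g$, hence of $|\Per_{\zeta_m}(X)|^2$ to factor $g^{1/t}\to 1$; choosing $t=\Theta(g^4\log(1/\epsilon))$ or so drives the multiplicative error on the magnitude down to $1+2^{-\poly(n)}$. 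This accounts for the $ng^4\log(1/\epsilon)$ and $ng^4\log g$ terms.

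**Recovering the phase.** Knowing $|\Per_{\zeta_m}(X)|$ to high precision is not enough; we need the actual complex value. Here I would use the linearity in a single entry: fix a row $j$ and entries of $X$, and consider $X$ with the $(j,k)$ entry replaced by a variable $x$. Then $\Per_{\zeta_m}(X(x)) = x\,\zeta_m^{k-1}\Per_{\zeta_m}(X^{jk}) + (\text{terms not involving }x)$ is affine in $x$, i.e. $\Per_{\zeta_m}(X(x))=ax+b$ with $a,b\in\mathbb Z[\zeta_m]$. Evaluating $|\Per_{\zeta_m}(X(x))|^2=|a|^2|x|^2+2\Re(\bar a b\,\bar x)+|b|^2$ at $O(1)$ well-chosen integer (or small-denominator cyclotomic) values of $x$ lets us solve a small linear system for $|a|^2$, $|b|^2$, and $\Re(\bar ab)$, $\Im(\bar ab)$, hence recover $\Per_{\zeta_m}(X)=b$ up to a global phase that is then pinned down by chasing through an explicit small base case. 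Iterating this entry-by-entry (or along the Expansion-Theorem recursion) is where the $mn^2 g^4\log n$ overhead comes from: there are $O(n^2)$ entries, each requiring $O(1)$ calls at precision $g^4\log n$ after amplification, and the factor $m$ enters because the cyclotomic structure of $\mathbb Z[\zeta_m]$ forces us to track $\phi(m)=O(m)$ coordinates, and \lem{error-propagation} is invoked to bound how multiplicative errors compound across the $n$ multiplications in a product of sub-permanents — each relative error $\delta$ blows up to at most $2n\delta$, which is why the per-step target precision must be $1/\poly(n)$ tighter than the final one.

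**The main obstacle.** The delicate point — and where I expect to spend the most care — is \textbf{controlling division by small magnitudes}. The multiplicative-error model is only meaningful when $|\Per_{\zeta_m}(X)|$ and all the intermediate sub-permanents are bounded away from $0$; but a generic minor could have $z$-permanent exactly $0$, making the relative error meaningless and the linear-system-solve ill-conditioned. The fix is to pad cleverly so that the relevant quantities are provably nonzero and not too small: e.g. add a dummy row/column that contributes a fixed nonzero factor, or work with $X+cI$ or a similar perturbation whose $z$-permanent is a nonzero polynomial in $c$, then interpolate. One must check that the padded matrices are still binary (or reduce to the binary case), that the bit-size stays polynomial, and — crucially — that a lower bound like $|\Per_{\zeta_m}(X')|\ge 2^{-\poly(n)}$ holds, which can be extracted from the Liouville-type bound \thm{liouville}: since $\Per_{\zeta_m}(X')$ is a nonzero algebraic integer in $\mathbb Z[\zeta_m]$, it satisfies $|\Per_{\zeta_m}(X')|\ge (|\overline{\cdot}|+1)^{-(\phi(m)-1)}=2^{-\poly(n)}$, using that the conjugates are bounded by $n!$ in modulus. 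Once this nonvanishing floor is in hand, the error bookkeeping via \lem{error-propagation} is routine, and collecting the query counts across the $O(n^2)$ entries, the $O(\log n)$ amplification rounds, and the $g^4$ cost of each round gives the stated bound $O(mn^2g^4\log n + ng^4\log g + ng^4\log(1/\epsilon))$.
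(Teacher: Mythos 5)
Your overall architecture (expanding along minors, using a Liouville-type floor to rule out tiny nonzero sub-permanents, and invoking \lem{error-propagation} to control compounding) is close in spirit to the paper's proof, but there is a genuine quantitative gap in the step that is supposed to supply precision. The direct-sum amplification cannot do what you claim: querying $\O$ on the $t$-fold block-diagonal matrix gives $|\Per_{\zeta_m}(X)|^{2t}$ up to a multiplicative factor $g$ (in fact up to $g(tn)$, since the input is now $tn\times tn$), hence $|\Per_{\zeta_m}(X)|^{2}$ up to a factor $g^{1/t}\approx 1+\tfrac{\ln g}{t}$. So with $t=\Theta(g^4\log(1/\epsilon))$ you only reach a $1+O(1/\poly)$ factor, and to reach the $1+2^{-\poly(n)}$ accuracy you invoke --- or, more to the point, a relative error $\epsilon$ with only a $\log(1/\epsilon)$ dependence in the query count, which is essential since \cor{mul-ad} applies the theorem with $\epsilon=1/(n!\,2^{n^2})$ --- you would need $t=\Omega(2^{\poly(n)})$ copies, i.e.\ exponentially large query matrices. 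This breaks your second stage as well: the linear system you solve for $|a|^2$, $|b|^2$, $\Re(\bar a b)$, $\Im(\bar a b)$ needs magnitude values far more accurate than a factor of $g$, so without a working precision-amplification mechanism the recovered ratio, and hence $\tilde P$, cannot meet the stated error bound with the stated number of queries.

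The paper closes exactly this gap with iterative refinement rather than one-shot powering plus interpolation: writing $X^{[t]}$ for $X$ with $X^{[t]}_{11}=X_{11}-t$, it maintains an estimate $\alpha_i$ of the ratio $\Per_{\zeta_m}(X)/\Per_{\zeta_m}(Y)$ (where $Y$ is a minor with nonzero $z$-permanent, testable via $\O$), and in each round evaluates $\O(X^{[t]})$ on a grid of $O(g^4)$ points in a disk of radius $\beta_i$ around $\alpha_i$, keeping the minimizer; this provably halves $\O(X^{[\alpha_i]})$ per round, so the error decreases geometrically in the number of query rounds and the $\log(1/\epsilon)$ dependence appears. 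The Liouville bound (\thm{liouville}), giving $|\Per_{\zeta_m}(\cdot)|\geq (2\cdot n!+1)^{-(m-2)}$ for nonzero values, is used to bound the number of halving rounds by $O(mn\log n+\log g+\log(1/\epsilon))$ --- that is where the factor $m$ enters, not from tracking $\phi(m)$ coordinates of $\mathbb Z[\zeta_m]$. If you replace your direct-sum amplification and one-shot linear solve with such a geometrically converging search (or any mechanism whose precision improves exponentially in the number of queries), the remainder of your outline --- recursion through minors, nonvanishing via Liouville, and \lem{error-propagation} for the telescoping product --- can be carried through.
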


\begin{proof} 
  Let $z=\zeta_m$.
  We describe the algorithm $\A^\O$ as follows.
  First $\A^\O$ decides if $|\Per_z(X)|^2=0$.
  If $|\Per_z(X)|^2\neq 0$, then $\O$ returns a value 
  \begin{align}
  p\in [g^{-1}|\Per_z(X)|^2, g |\Per_z(X)|^2]
  \end{align}
  for $g>1$.
  This implies that $p>0$.
  Otherwise, $\O(X)=0$.
  This implies that there is an algorithm which decides if $\Per_z(X)=0$.
  Furthermore, if $p=0$, $\A$ outputs $0$.

  It suffices to consider the case $|\Per_z(X)|^2\neq 0$.
  Since $X$ is binary-valued, we can write
  \begin{align}
    |\Per_z(X)|^2 = a_0 +a_1 z+\ldots + a_{m-1}z^{m-1},
  \end{align}
  where each $a_i$ is a non-negative integer,
  provided that $z$ is a primitive $m$-th root of unity.
  
  The algorithm $\A^\O$ finds a sequence of matrices $X_0=X,X_1,\ldots,X_{n-1}$ such that each $X_i\in\mathbb C^{(n-i)\times (n-i)}$ and $X_{i+1}$ is a submatrix of $X_i$ obtained by deleting the first column and one of the rows $k$ determined as follows: 
  starting from $k=1$, repeat incrementing $k$ until the submatrix $W$ obtained by deleting the $k$-th row and the first column satisfies that $|\Per_z(W)|^2\neq 0$.
  Finally set $X_{i+1}=W$.
  Since $\Per_z(X)\neq 0$, it is guarantee that there exists $W$ such that $|\Per_z(W)|\neq 0$.
  Computing the sequence of matrices takes $O(n^2)$ queries.

  The algorithm inductively approximates $\frac{\Per_z(X_{k-1})}{\Per_z(X_k)}$ to error $O(\epsilon/n)$ using $O(mng^4\log n+g^4\log g+g^4\log(1/\epsilon))$ queries to $\O$ from $k=n-1$ to $0$ and the telescoping product
  \begin{align}
    \Per_z(X) = 
    \frac{\Per_z(X_0)}{\Per_z(X_1)}\cdot
    \frac{\Per_z(X_1)}{\Per_z(X_2)}
    \cdots \frac{\Per_z(X_{n-2})}{\Per_z(X_{n-1})}\cdot \Per_z(X_{n-1}).
  \end{align}
  Computing the base case $\Per_z(X_{n-1})$ is trivial. 
  Applying \lem{error-propagation}, this implies that an approximation to error $\epsilon$ takes $O(mn^2g^4\log n + g^4n\log g)$ queries.

  In the following proof, we will only consider the case from $Y:=X_1$ to $X_0$, and the same argument applies to $k=2,\ldots,n-1$.
  For simplicity, we only prove the case where $Y$ is obtained from $X$ by deleting the first row and the first column.
  Other cases (i.e., deleting the $i$-th row of $X$ for $i>1$) can be applied using the same argument.

  We consider the quantity $\Per_z(X)-\alpha\Per_z(Y)$.
  Note that for a primitive root of unity $z$, both $\Per_z(X)$ and $\Per_z(Y)$ are algebraic integers.
  This implies that the ratio $\alpha=\frac{\Per_z(X)}{\Per_z(Y)}$ is an algebraic number.
  To see there is a lower bound between two algebraic numbers, we apply a variant of the Liouville's approximation theorem, stated in \thm{liouville}.
  
Since $\Per_z(X)$ and $\Per_z(Y)$ are both algebraic integers, their minimal polynomials are monic.
Furthermore, for $\Per_z(X)=\mu$, $\max|\sigma(\mu)|\leq \Per(X)\leq n!$, and similarly the same upper bound holds for $Y$.
Setting $a=0$ and $b=1$, $|\Per_z(Y)|\geq 1/M$ and $\Per_z(X)\geq 1/M$, where $M:=(2n!+1)^{m-2}$, and
\begin{align}
  \frac{1}{n!M}\leq \frac{|\Per_z(X)|}{|\Per_z(Y)|} \leq n!M.
\end{align}

  The quantity $\Per_z(X)-t\Per_z(Y)=\Per_z(X^{[t]})$, where $X_{ij}^{[t]}=X_{ij}-t\cdot\Id[i=1]\Id[j=1]$, 
  so its norm square can be approximated to multiplicative error $g$ using one call to $\O$.
  To find an estimate of $\alpha=\frac{\Per_z(X)}{\Per_z(Y)}$, we recursively find complex numbers $\alpha_0,\ldots,\alpha_\ell$ such that the following invariant holds:
  \begin{align}\label{eq:invariant}
    \O(X^{[\alpha_{i+1}]}) \leq 
    \frac{1}{2} \O(X^{[\alpha_i]}) 
  \end{align}
  for every $i$.
  To see how close $\alpha_i$ is to $\alpha$, 
  \begin{align}\nonumber
    |\alpha-\alpha_i| &=\frac{|\Per_z(X)-\alpha_i\Per_z(Y)|}{|\Per_z(Y)|} \\\nonumber
    &\leq \frac{\sqrt{g\O(X^{[\alpha_i]})}}{|\Per_z(Y)|} \\
    &\leq g\sqrt{\frac{\O(X^{[\alpha_i]})}{\O(Y)}}
  \end{align}
  Also let
  \begin{align}
    \beta_i := g\sqrt{\frac{\O(X^{[\alpha_i]})}{\O(Y)}}.
  \end{align}
  Setting the initial guess $\alpha_0=0$, 
  \begin{align}\nonumber
    \O(X^{[\alpha_0]})
    &\leq g |\Per_z(X)|^2 \\\nonumber
    &\leq g|\Per(X)|^2 \\
    &= g(n!)^2.
    \end{align}
    Recall that $\O(Y)\geq g^{-1}|\Per_z(Y)|^2\geq \frac{1}{gM^2}$.
    Then we have 
    \begin{align}\nonumber
      |\alpha-\alpha_i| 
      &\leq g\sqrt{\frac{\O(X^{[\alpha_i]})}{\O(Y)}} \\\nonumber
      &\leq g\sqrt{\frac{2^{-i}\O(X^{[\alpha_0]})}{\O(Y)}} \\\nonumber
      &\leq g \sqrt{2^{-i}\cdot g(n!)^2\cdot g M^2} \\
      &\leq g^2 \cdot n! \cdot M \cdot 2^{-i/2}. 
    \end{align}

    Our goal is to get $|\alpha-\alpha_\ell|\leq \frac{\epsilon}{2n\cdot n! M}$ since it implies that $|\alpha-\alpha_\ell|\leq \frac{\epsilon|\alpha|}{2n}$.
    Thus it suffices to set $2^{\ell/2}> g^2 \cdot (n!)^2\cdot n \cdot M^2\cdot (1/\epsilon)$ and $\ell=O(mn\log n+\log g+\log(1/\epsilon))$.

  Now we give the algorithm that finds an $\alpha_{i+1}$ that satisfies \eq{invariant} from $\alpha_i$.
  By definition, $|\alpha-\alpha_i|\leq \beta_i$.
  We consider the disk
  \begin{align}
    S_i := \{t:|t-\alpha_i|\leq \beta_i\},
  \end{align}
  and a grid of $O(L^2)$ points inside the disk for integer $L$ determined later:
  \begin{align}
    L_i := \left\{\alpha_i+(j-1)\frac{\beta_i}{L}+ \sqrt{-1}(k-1)\frac{\beta_i}{L} \in S_i: |j|,|k|\in[L]\right\}.
  \end{align}
  The algorithm computes $\O(X^{[t]})$ for every $t\in L_i\cap S_i$ and outputs $t$ for which $\O(X^{[t]})$ is minimized
(with ties broken arbitrarily).
  
  By definition, $\alpha\in S_i$, and thus there exists $t\in L_i$ such that $|t-\alpha|\leq \frac{\beta_i}{\sqrt 2L}$.
  For such $t$,
  \begin{align}\nonumber
    \O(X^{[t]})
    &\leq g|\Per_z(X)-t\Per_z(Y)|^2 \\\nonumber
    &\leq g |\alpha-t|^2 |\Per_z(Y)|^2 \\\nonumber
    &\leq g \cdot \frac{\beta_i^2}{2L^2} \cdot |\Per_z(Y)|^2 \\\nonumber
    &\leq g \cdot \frac{\beta_i^2}{2L^2} \cdot g\O(Y) \\
    &= \frac{g^4}{2L^2} \O(X^{[\alpha_i]}).
  \end{align}
  Thus for $L=g^{2}$, the ratio $\frac{\O(X^{[t]})}{\O(X^{[\alpha_i]})}$ is at most $1/2$.
\end{proof}

As a corollary, we recover a result of Aaronson and Arkhipov for $1$-permanent (with slightly worse query complexity) by \thm{we to wa}, but the same theorem can apply to a primitive root of unity $z\in\mathbb T$.
\begin{corollary}[{cf. \cite[Theorem~28]{aaronson2011computational}}]
For $g\in[1,\poly(n)]$, if there is an algorithm $\O$ which on input $X\in\bit^{n\times n}$ approximates $|\Per(X)|^2$ to within multiplicative factor of $g$, 
then there is an algorithm which exactly computes $\Per(X)$ using $O(mn^2g^4\log n+ng^4\log g)$ queries to $\O$.
\end{corollary}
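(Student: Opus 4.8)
The plan is to obtain this statement as an immediate consequence of \thm{we to wa} applied to the trivial root of unity $z=1$, for which $\Per_1(X)=\Per(X)$, combined with the elementary fact that for a binary matrix $X$ the value $\Per(X)$ is a non-negative integer with $\Per(X)\le n!$.

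First I would re-run the construction in the proof of \thm{we to wa} with $z=1$ and the given oracle $\O$ for $|\Per(X)|^2=|\Per_1(X)|^2$. Every step goes through unchanged, and in fact simplifies: since $z=1$, all of $\Per_1(X_k)$, $\Per_1(X^{[t]})$ and their ratios are real integers or rationals, so the Liouville-type separation used there — namely the quantity $M=(2n!+1)^{m-2}$ — may be replaced by $1$. (This is also why primality of $m$, used in \thm{we to wa} only to write $\phi(m)=m-1$, plays no role here; one may simply think of the instance as ``$m=1$''.) The output is an algorithm that, given $\epsilon>0$, produces $\tilde P$ with $|\Per(X)-\tilde P|\le \epsilon\,|\Per(X)|$, and that correctly outputs $0$ when $\Per(X)=0$, using $O(mn^2g^4\log n+ng^4\log g+ng^4\log(1/\epsilon))$ queries — indeed $O(n^2g^4\log n+ng^4\log g+ng^4\log(1/\epsilon))$ once $M$ is taken to be $1$.

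Second, I would set $\epsilon:=\tfrac{1}{3n!}$. Because $\Per(X)\le n!$, this forces $|\Per(X)-\tilde P|<\tfrac12$, so rounding $\tilde P$ to the nearest integer recovers $\Per(X)$ exactly. With this choice $\log(1/\epsilon)=\log(3\,n!)=O(n\log n)$, so the term $ng^4\log(1/\epsilon)$ is $O(n^2g^4\log n)$ and is absorbed into $O(mn^2g^4\log n)$; the total query count is therefore $O(mn^2g^4\log n+ng^4\log g)$ as claimed (and $O(n^2g^4\log n+ng^4\log g)$ after the simplification above).

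There is no real obstacle here: essentially all of the work lies in \thm{we to wa}. The only points needing care are (i) checking that the $z=1$ instance is genuinely covered by that argument — it is, and it only becomes easier, since the algebraic-number bookkeeping collapses to integer arithmetic — and (ii) the rounding step, which relies on the a priori bounds $0\le\Per(X)\le n!$ that hold precisely because $X$ is $\{0,1\}$-valued.
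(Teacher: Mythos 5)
Your proposal is correct and follows essentially the same route as the paper: the paper's proof is simply to invoke \thm{we to wa} with $\epsilon = 1/n!$ and use integrality of $\Per(X)$ for binary $X$, which is exactly your argument (with the slightly more careful choice $\epsilon = 1/(3n!)$ to make the rounding explicit). Your additional check that the $z=1$ instance is genuinely covered by, and in fact simplifies, the argument of \thm{we to wa} is a reasonable extra precaution but not a departure from the paper's approach.
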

\begin{proof}
  Since $\Per(X)$ is an integer, we apply \thm{we to wa} with $\epsilon=1/n!$.
  The algorithm takes $O(mn^2g^4\log n + ng^4\log g)$ queries.
\end{proof}
Furthermore, obtaining an approximation to additive error within $2^{-n^2}$ takes $\poly(n)$ queries.
\begin{corollary}\label{cor:mul-ad}
For $g\in[1,\poly(n)]$, if there is an algorithm $\O$ which on input $X\in\bit^{n\times n}$, approximates $|\Per_{\zeta_{m}}(X)|^2$ to within a multiplicative factor of $g$, then there exists an algorithm which outputs an esatimate $\tilde P$ of $\Per_{\zeta_{m}}(X)$ satisfying
  \begin{align}
    |\Per_{\zeta_{m}}(X)-\tilde P| \leq 2^{-n^2}
  \end{align}
using $O(mn^2g^4 \log n+ng^4 \log n+g^4 n^3)$ queries to $\O$.
 \end{corollary}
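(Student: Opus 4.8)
The plan is to obtain this as an immediate consequence of \thm{we to wa} by picking the relative-error parameter small enough that the multiplicative guarantee becomes the desired additive one. First I would apply \thm{we to wa} with a parameter $\epsilon$ to be fixed shortly: from the multiplicative-approximation oracle $\O$ for $|\Per_{\zeta_m}(X)|^2$ one obtains an algorithm that outputs an estimate $\tilde P$ with $|\Per_{\zeta_m}(X)-\tilde P|\le \epsilon\,|\Per_{\zeta_m}(X)|$ and makes $O(mn^2g^4\log n + ng^4\log g + ng^4\log(1/\epsilon))$ queries to $\O$.

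Second I would bound $|\Per_{\zeta_m}(X)|$ crudely from above. For a binary matrix $X$ and $|\zeta_m|=1$, the triangle inequality applied to \eqref{def:z} gives $|\Per_{\zeta_m}(X)|\le \sum_{\sigma\in S_n}\prod_i X_{i,\sigma(i)}\le n!$, since every summand is $0$ or $1$. Choosing $\epsilon:=2^{-n^2}/n!$ therefore forces $\epsilon\,|\Per_{\zeta_m}(X)|\le 2^{-n^2}$, which is exactly the required additive error. The degenerate case is harmless: if $\Per_{\zeta_m}(X)=0$, then the algorithm of \thm{we to wa} detects $|\Per_{\zeta_m}(X)|^2=0$ from a single query and outputs $\tilde P=0$, which already satisfies $|\Per_{\zeta_m}(X)-\tilde P|=0\le 2^{-n^2}$.

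Finally I would account for the query count under this choice of $\epsilon$. We have $\log(1/\epsilon)=n^2+\log(n!)=O(n^2)$, so the term $ng^4\log(1/\epsilon)$ is $O(g^4n^3)$; and since $g=\poly(n)$ we have $\log g=O(\log n)$, so $ng^4\log g=O(ng^4\log n)$. Adding the remaining fixed cost $O(mn^2g^4\log n)$ yields the claimed total $O(mn^2g^4\log n+ng^4\log n+g^4n^3)$.

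I do not expect a genuine obstacle here; the only steps that require (routine) care are the bound $|\Per_{\zeta_m}(X)|\le n!$ and verifying that the $\log(1/\epsilon)$ contribution stays within $O(g^4n^3)$, which is automatic because $\epsilon=2^{-\Theta(n^2)}$.
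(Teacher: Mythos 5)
Your proposal is correct and follows essentially the same route as the paper's proof: invoke \thm{we to wa} with multiplicative error $\epsilon = 2^{-n^2}/n!$, use the bound $|\Per_{\zeta_m}(X)|\leq n!$ to convert multiplicative into additive error, and note that $\log(1/\epsilon)=O(n^2)$ gives the stated query count. The extra details you supply (the triangle-inequality bound, the zero-permanent case, and the $\log g = O(\log n)$ conversion) are routine and consistent with the paper's argument.
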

 \begin{proof}
To approximate to additive error within $2^{-n^2}$, it suffices to approximate to multiplicative error $\epsilon=\frac{1}{n!2^{n^2}}$ since the $|\Per_{\zeta_{m}}(X)|$ norm is at most $n!$, and then by the \thm{we to wa} and it takes $O(mn^2g^4 \log n+ng^4 \log n+ng^4 \log(\frac{1}{\epsilon}))=O(mn^2g^4 \log n+ng^4 \log n+g^4 n^3)$ queries.
\end{proof}

Here we see that the proof of approximate hardness relies on the fact that $z$-permanent of given matrix $X$ can be computed from the $z$-permanent of its submatrix (see equation \eq{expan}). 
Other matrix functions such as immanants do not have this property.

\subsection{A Search-to-Decision Reduction}\label{sec:stod}

In the previous section, we have shown how to approximate $\Per_{\zeta_m}(X)$ to multiplicative error $2^{-n^{O(1)}}$ for an $m$-th primitive root of unity $\zeta_m$ using an oracle that approximates $|\Per_{\zeta_m}(X)|^2$ to multiplicative error $g=n^{O(1)}$.
In this section, we show how to search for a representation of $\Per_{\zeta_m}(X)$ from the approximation (see \dfn{exact-representation}). 

Recall that by \eq{alpha-poly}, for binary matrix $X$, $\Per_{\zeta_m}(X)$ is an element in the ring $\mathbb Z[\zeta_m]$ and can be written as a polynomial in $\zeta_m$.
Since $\Phi_m(\zeta_m)=0$ and $\deg\Phi_m=\phi(m)$, we can represent any algebraic integer in $\mathbb Z[\zeta_m]$ as a polynomial of degree at most $\phi(m)-1$.
In fact, such a representation is unique. A more detailed background can be seen \dfn{cyclo}.
\begin{definition}[{Representation of $\mathbb Z[\zeta_m]$}]\label{dfn:general-n}
For integer $m\geq 1$, the representation of an algebraic integer $\alpha\in\mathbb Z[\zeta_m]$ is a tuple $(a_0,\ldots,a_{\phi(m)-1})$ such that 
\begin{align}
    \alpha = \sum_{i=0}^{\phi(m)-1} a_i \zeta_m^i.
\end{align}
As a polynomial in $\zeta_m$, the representation is the remainder of $\alpha$ divided by the cyclotomic polynomial $\Phi_m$.
A representation is said to be bounded in $[A,B]$ for integers $A,B$ if for every $0\leq i\leq \phi(m)-1$, $A\leq a_i \leq B$.
\end{definition}
To extract a representation from an approximation, we rely on the following lemma which states that the coefficients $a_i$ can be represented using $n^{O(1)}$ bits.
\begin{theorem}[Bounds on the coefficients of cyclotomic polynomials \cite{vaughan1975bounds}]\label{thm:cyclotomic-coefficient-bound-2}
For integer $m\geq 1$, let $\Phi_m(x)$ be the $m$-th cyclotomic polynomial.
The maximum absolute value of the coefficients of $\Phi_m(x)$ is at most 
$e^{\frac{1}{2}d(m)\log(m)}$,
where $d(n)$ is the divisor function which count the number of divisors of $m$ (including $1$ and $m$).\footnote{For instance $d(6)=4$.} 
\end{theorem}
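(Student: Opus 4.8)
The plan is to observe that this statement is, word for word, the bound already recorded in \thm{cyclotomic-coefficient-bound} (with the same attribution \cite{vaughan1975bounds}); it is repeated in \sec{stod} only because it is precisely the inequality invoked in the search-to-decision reduction there. So the shortest legitimate proof is a one-line pointer to \thm{cyclotomic-coefficient-bound}. For completeness I would also sketch the standard analytic derivation of the bound, which is essentially Vaughan's argument.

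First I would dispose of $m=1$ ($\Phi_1(x)=x-1$, trivial) and, for $m>1$, start from the Möbius product formula $\Phi_m(x)=\prod_{d\mid m}(1-x^d)^{\mu(m/d)}$. Taking a formal logarithm and expanding $\log(1-x^d)=-\sum_{k\ge 1}x^{dk}/k$, the coefficient $c_j$ of $x^j$ in $\log\Phi_m(x)$ is a signed sum over divisors $d$ of $\gcd(j,m)$, so since $|\mu|\le 1$ we get $|c_j|\le \frac{1}{j}\sum_{d\mid\gcd(j,m)}d$. Next I would bound the $\ell_1$-norm of these coefficients: using $\deg\Phi_m=\phi(m)$, summing over $1\le j\le\phi(m)$ and swapping the order of summation gives
\[
\sum_{j\ge 1}|c_j|\;\le\;\sum_{d\mid m}\ \sum_{1\le i\le \phi(m)/d}\frac{1}{i}\;\le\;\sum_{d\mid m}\bigl(1+\log\tfrac{\phi(m)}{d}\bigr)\;\le\;\tfrac12 d(m)\log m + d(m),
\]
where the last step uses the divisor pairing $\sum_{d\mid m}\log d=\tfrac12 d(m)\log m$ together with $\phi(m)\le m$. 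Finally I would exponentiate: writing $g(x)=\sum_j|c_j|x^j$, the power series $\exp(\log\Phi_m)$ is coefficientwise dominated in absolute value by $\exp(g)$, whose coefficients are nonnegative and sum to $\exp(g(1))=\exp(\sum_j|c_j|)$; hence every coefficient of $\Phi_m$ has absolute value at most $\exp(\sum_j|c_j|)$.

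The hard part is purely quantitative: the crude exponentiation estimate above yields the exponent $\tfrac12 d(m)\log m + O(d(m))$ rather than the clean $\tfrac12 d(m)\log m$ stated, so pinning down the exact constant requires the sharper estimates of \cite{vaughan1975bounds} (a more careful treatment of the lower-order divisor sums and of the leading term). Since the theorem is exactly what \cite{vaughan1975bounds} proves, and the only consequence we use downstream is the weak conclusion $\log\|\Phi_m\|_\infty=\poly(m)$ — in fact $O(m\log m)$, as noted after \thm{cyclotomic-coefficient-bound} via $d(m)\le m$ — this gap in the constant is immaterial for our purposes, and I would simply cite \thm{cyclotomic-coefficient-bound}.
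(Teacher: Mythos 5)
Your proposal matches the paper exactly: the statement is a verbatim restatement of \thm{cyclotomic-coefficient-bound}, and the paper gives no proof of its own, simply citing \cite{vaughan1975bounds}, so pointing back to that theorem and reference is precisely the intended justification. Your supplementary M\"obius-product sketch is sound as far as it goes, and you correctly flag that it only yields the exponent $\tfrac12 d(m)\log m + O(d(m))$ rather than the stated constant, which is immaterial since every downstream use needs only $\log\max_i|a_i| = O(m\log m)$.
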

For $m=p_1^{a_1}\ldots, p_k^{a_k}$, $d(m)= (a_1+1)\ldots (a_k+1)$.
Thus we can get the trivial bound $d(m)\leq m$.
By \thm{cyclotomic-coefficient-bound-2}, we can represent $\phi_{m}(x)$ using at most only $O(m\log m)$ bits.
Then the following lemma shows that every element of the representation of $\Per_{\zeta_m}(X)$ can be represented using $\poly(n)$ bits.
\begin{lemma}[{\cite[Lemma 2]{bini1986polynomial}}]\label{lem:quotient}
Let $s(x)=\sum_{i=0}^{m}s_ix^i, t(x)=\sum_{i=0}^{n}t_ix^i,$ be two polynomials of degrees $m\geq n$. 
Also let the quotient $q(x)=\sum_{i=0}^{m-n} q_i x^i$ and the remainder $r$ be such that $s(x)=q(x) t(x) + r(x)$.
Then
\begin{equation}
    \sum_{i=0}^{m-n}|q_i| \leq \left(1+\frac{N}{t_n}\right)^{m-n}\sum_{i=n}^{m}\frac{|s_i|}{t_n},
\end{equation}
where $N=\max(|t_n|,|t_{n-1}|,...,|t_{2n-m}|)$ and $t_g=0$ if $g<0.$
\end{lemma}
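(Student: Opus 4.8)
The plan is to prove the bound directly from polynomial long division, tracking the growth of the quotient coefficients through a one-sided recurrence; this is essentially the argument of Bini and Pan, which we recall for completeness. Write $d := m-n$, and assume without loss of generality that $t_n>0$ (replacing $t$ by $-t$ negates $q$ and $r$ but changes neither side of the inequality, and it shows that the ``$t_n$'' in the statement should in general be read as $|t_n|$). Since $\deg r \le n-1$, matching the coefficient of $x^{\,n+d-\ell}$ on the two sides of $s(x)=q(x)t(x)+r(x)$ for $\ell=0,1,\dots,d$ gives
\[
 q_{d-\ell}\,t_n \;=\; s_{n+d-\ell}\;-\;\sum_{a=d-\ell+1}^{d} q_a\, t_{\,n+d-\ell-a},
\]
where for every term in the sum the index $b=n+d-\ell-a$ satisfies $2n-m=n-d\le b\le n$, so that $|t_b|\le N$ (using the convention $t_g=0$ for $g<0$, which covers the case $d>n$). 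The $\ell=0$ instance is simply $q_d=s_m/t_n$.

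First I would convert this identity, via the triangle inequality and $|t_b|\le N$, into the scalar recurrence
\[
 |q_{d-\ell}| \;\le\; \tfrac{1}{t_n}\Big(\,|s_{n+d-\ell}| \;+\; N\sum_{\ell'=0}^{\ell-1} |q_{d-\ell'}|\,\Big),\qquad \ell=0,\dots,d.
\]
Setting $B_\ell := \sum_{\ell'=0}^{\ell}|q_{d-\ell'}|$ and $B_{-1}:=0$, and adding $B_{\ell-1}$ to both sides, this becomes the linear one-step bound $B_\ell \le \big(1+\tfrac{N}{t_n}\big)B_{\ell-1} + \tfrac{1}{t_n}|s_{n+d-\ell}|$. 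Unrolling over $\ell=0,\dots,d$ and re-indexing by $i=n+d-\ell$ gives
\[
 \sum_{i=0}^{d}|q_i| = B_d \;\le\; \sum_{i=n}^{m}\Big(1+\tfrac{N}{t_n}\Big)^{i-n}\frac{|s_i|}{t_n} \;\le\; \Big(1+\tfrac{N}{t_n}\Big)^{m-n}\sum_{i=n}^{m}\frac{|s_i|}{t_n},
\]
which is exactly the claimed inequality.

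The one point that needs care — and the reason the bound comes out as a clean $(1+N/t_n)^{m-n}$ rather than carrying an extra factor polynomial in $m-n$ — is to avoid bounding the $t$-coefficients term by term in a way that reintroduces a count of how many of them occur: one should keep the already-determined quotient coefficients aggregated in the partial sum $B_{\ell-1}$ before invoking $|t_b|\le N$, and then the contribution of the diagonal term $q_{d-\ell}t_n$ is absorbed for free because $N\ge|t_n|$ forces $1+N/t_n\ge 2$. The remaining obligations are the index bookkeeping in the coefficient-matching step — that for $0\le\ell\le d$ the degrees $n+d-\ell$ all exceed $\deg r$ (so $r$ never enters the recurrence) and that the $t$-indices never drop below $2n-m$ — both immediate from $\deg r\le n-1$ and $0\le a\le d$; everything else is routine.
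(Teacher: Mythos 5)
Your proof is correct. Note that the paper does not prove this lemma at all --- it is quoted verbatim from Bini (Lemma 2 of \cite{bini1986polynomial}) and used as a black box --- so there is nothing internal to compare against; what you have written is a valid self-contained derivation by the standard long-division recurrence. I checked the bookkeeping: for $0\le \ell\le m-n$ the matched coefficient has degree $n+d-\ell\ge n>\deg r$, so the remainder never enters; the $t$-indices in the convolution lie in $[2n-m,\,n-1]$ (with $t_g=0$ for $g<0$), so the bound $|t_b|\le N$ is legitimate; and unrolling $B_\ell\le(1+N/t_n)B_{\ell-1}+|s_{n+d-\ell}|/t_n$ from $B_{-1}=0$ gives $\sum_i (1+N/t_n)^{i-n}|s_i|/t_n$, which is dominated by the stated right-hand side since $1+N/t_n\ge 1$. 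Your side remark that the statement's $t_n$ should really be read as $|t_n|$ (or one normalizes the leading coefficient) is apt but immaterial here, since in the paper's only application $t=\Phi_m$ is monic, so $t_n=1$.
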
 
For our purpose, since each $A_i$ satisfies 
\begin{align}
    A_\ell = \sum_{\sigma:\ell(\sigma)=\ell} \prod_{i=1}^n X_{i,\sigma(i)} \leq n!, 
\end{align}
by \lem{quotient}, we have the following corollary.
\begin{corollary}\label{cor:per-rep}
  For integer $m\geq 1$ and binary $X$, let $(a_0,\ldots,a_{\phi(m)-1})$ be the representation of $\Per_{\zeta_m}(X)$.
  Then 
  \begin{align}
      \max_{0\leq i\leq \phi(m)-1} \log |a_i| \leq O(m^2\log m +n \log n).
  \end{align}
\end{corollary}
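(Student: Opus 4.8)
The plan is to express $\Per_{\zeta_m}(X)$ as the evaluation at $\zeta_m$ of an explicit integer polynomial with small coefficients, reduce that polynomial modulo $x^m-1$ so that its degree drops below $m$, and then bound the remainder upon division by the cyclotomic polynomial $\Phi_m$ using \lem{quotient} together with the coefficient bound \thm{cyclotomic-coefficient-bound-2}.

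Concretely, by \eq{per-polynomial} we have $\Per_{\zeta_m}(X)=s(\zeta_m)$ for $s(x)=\sum_{\ell=0}^{\binom n2}A_\ell x^\ell$, where each $A_\ell=\sum_{\sigma:\ell(\sigma)=\ell}\prod_i X_{i,\sigma(i)}$ is a non-negative integer and $\sum_\ell A_\ell=\Per(X)\le n!$. Since $\zeta_m^m=1$, I would replace $s$ by its reduction $\tilde s(x):=\sum_{i=0}^{m-1}B_i x^i$ modulo $x^m-1$, where $B_i:=\sum_{\ell:\,\ell\equiv i\ (\mathrm{mod}\ m)}A_\ell$; then $\tilde s(\zeta_m)=\Per_{\zeta_m}(X)$, the $B_i$ are non-negative integers, and $\sum_i B_i=\Per(X)\le n!$, so each $B_i\le n!$. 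Because $\Phi_m\mid x^m-1$, dividing $s$ by $\Phi_m$ and dividing $\tilde s$ by $\Phi_m$ produce the same remainder, which by \dfn{general-n} is exactly the representation $(a_0,\dots,a_{\phi(m)-1})$.

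If $\deg\tilde s\le m-1<\phi(m)$ there is nothing to divide and $(a_i)=(B_i)$, so $\log|a_i|\le\log n!\le n\log n$. Otherwise write $\tilde s(x)=q(x)\Phi_m(x)+r(x)$ with $r(x)=\sum_{i=0}^{\phi(m)-1}a_i x^i$, and apply \lem{quotient} with dividend $\tilde s$ (degree $\le m-1$) and divisor $\Phi_m$ (degree $\phi(m)$, leading coefficient $1$): writing $N$ for the largest absolute value of a coefficient of $\Phi_m$, \thm{cyclotomic-coefficient-bound-2} together with $d(m)\le m$ gives $N\le e^{\frac12 d(m)\log m}\le e^{\frac12 m\log m}$, and the lemma yields $\sum_j|q_j|\le(1+N)^{m-1-\phi(m)}\sum_{i\ge\phi(m)}|B_i|\le(1+N)^m\,n!$. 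Comparing coefficients in $r=\tilde s-q\,\Phi_m$ then gives $|a_i|\le|B_i|+N\sum_j|q_j|\le n!+N(1+N)^m n!\le n!\,(1+N)^{m+1}$, so $\log|a_i|\le\log n!+(m+1)\log(1+N)\le n\log n+(m+1)\bigl(1+\tfrac12 m\log m\bigr)=O(m^2\log m+n\log n)$.

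The only real subtlety — and the step that yields the claimed $m^2$ rather than an $n^2m$ — is the reduction modulo $x^m-1$ performed before invoking \lem{quotient}: dividing the original degree-$\binom n2$ polynomial $s$ by $\Phi_m$ directly would force an exponent $\binom n2-\phi(m)=O(n^2)$ on $(1+N)$, hence an $O(n^2 m\log m)$ bound, whereas after the reduction the exponent is $m-1-\phi(m)\le m$. Some minor bookkeeping remains (the quantity "$N$'' appearing in \lem{quotient} only involves the coefficients of $\Phi_m$ from degree $\phi(m)$ down to degree $2\phi(m)-(m-1)$, which is at most the global maximum, and one should separately note the degenerate case $\deg\tilde s<\deg\Phi_m$ where no division occurs), but these are routine.
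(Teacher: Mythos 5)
Your proof is correct and follows essentially the same route as the paper: reduce the degree-$\binom{n}{2}$ polynomial to one of degree at most $m-1$ with coefficients bounded by $n!$, divide by $\Phi_m$, and combine \lem{quotient} with the cyclotomic coefficient bound \thm{cyclotomic-coefficient-bound-2} to bound the remainder. The only difference is that you make explicit the reduction modulo $x^m-1$ (and why the remainder is unchanged), a step the paper performs implicitly by writing $s(x)=\sum_{\ell=0}^{m-1}A_\ell\zeta_m^\ell$; this is a welcome clarification rather than a new idea.
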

\begin{proof}
 Let 
 \begin{align}
     s(x) := \sum_{\ell=0}^{m-1} A_\ell \zeta_m^\ell.
 \end{align}
 By \thm{cyclotomic-coefficient-bound-2}, the coefficients of the cyclotomic polynomial is at most $M=2^{O(m\log m)}$.
 By \lem{quotient} and the fact that $\Phi_m$ is monic, the summation of the coefficients of the quotient satisfies that 
 \begin{align}\nonumber
     \sum_{i=0}^{m-1-\phi(m)} |q_i|
     &\leq \left(1+M\right)^{m-\phi(m)-1} \cdot n! \cdot (m-\phi(m)-1) \\
     &\leq 2^{O((m-\phi(m))\cdot m\log m + n\log n)}.
 \end{align}
 The representation is the remainder of $s$ devided by $\Phi_m$.
 Let $r(x)=\sum_{i=0}^{\phi(m)-1} a_i x^i= s(x)-q(x)\Phi_m(x)$.
 We have 
 \begin{align}\nonumber
     |a_i| 
     &\leq \left| s_i - \sum_{j=0}^i q_j t_{i-j} \right| \\\nonumber
     &\leq \max_i |s_i| + (\phi(m)-1)\cdot \max_i |q_i| \cdot \max_i |t_i| \\
     &\leq n! + (\phi(m)-1) \cdot 2^{O((m-\phi(m))\cdot m\log m + n\log n)} \cdot M.
 \end{align}
 Thus 
 \begin{align}
     \max_i \log |a_i| \leq O((m-\phi(m))\cdot m\log m +n \log n).
 \end{align}
 Since $\phi(m)=\Omega(\frac{m}{\log\log m})$, $m-\phi(m)=\Theta(m)$.
\end{proof}

Let $A$ denote the maximum of the coefficients in the representation of $\Per_{\zeta_m}(X)$.
By \cor{per-rep}, $\log A=O(m^2\log m+n\log n)$, and thus the coefficients can be represented using $\poly(n)$ bits, provided $m=\poly(n)$.

By Liouville's approximation theorem (\thm{liouville}), the distance of two arbitrary algebraic integers whose representation is bounded in $[-A,A]$ are at least $\delta=2^{-n^{O(1)}}$ in the complex plane.

\begin{corollary}\label{cor:distance-bound}
  For integer $A$, let $\alpha,\beta$ be two algebraic integers in $\mathbb Z[\zeta_m]$ whose representation is bounded in $[-A,A]$.
  Then it holds that $|\alpha-\beta|\geq \frac{1}{(2A\cdot\phi(m)+1)^{\phi(m)-1}}$.
\end{corollary}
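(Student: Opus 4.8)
The plan is to read off $\alpha-\beta$ as a \emph{nonzero} element of $\mathbb Z[\zeta_m]$ and then invoke the algebraic‑integer version of Liouville's bound recorded just after \thm{liouville}; the only real content is controlling the Galois conjugates of $\alpha-\beta$. We may assume $\alpha\neq\beta$ (this is the case implicitly intended, since the right‑hand side is positive while $|\alpha-\beta|=0$ otherwise). Set $\gamma:=\alpha-\beta$. Since $\mathbb Z[\zeta_m]$ is a ring, $\gamma\in\mathbb Z[\zeta_m]$ and $\gamma\neq 0$. Writing the representations of $\alpha$ and $\beta$ as in \dfn{general-n} and subtracting coefficientwise gives a polynomial in $\zeta_m$ of degree at most $\phi(m)-1$, which is therefore the representation of $\gamma$: so $\gamma=\sum_{i=0}^{\phi(m)-1}c_i\zeta_m^i$ with $|c_i|\le 2A$ for every $i$.

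Next I would bound the conjugates of $\gamma$. Because $\mathbb Q(\zeta_m)/\mathbb Q$ is Galois, every root of the minimal polynomial of $\gamma$ over $\mathbb Q$ is of the form $\sigma(\gamma)$ for an automorphism $\sigma$ of $\mathbb Q(\zeta_m)$, and each such $\sigma$ sends $\zeta_m$ to $\zeta_m^{a}$ for some $a$ with $\gcd(a,m)=1$. Since $\zeta_m^{a}$ still lies on the unit circle, $\bigl|\sigma(\gamma)\bigr|=\bigl|\sum_i c_i\zeta_m^{ai}\bigr|\le\sum_i|c_i|\le 2A\,\phi(m)$, so $|\bar\gamma|:=\max_\sigma|\sigma(\gamma)|\le 2A\,\phi(m)$.

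Finally, $\gamma$ is a nonzero algebraic integer, hence its minimal polynomial is monic of some degree $d\le\phi(m)$. Applying the consequence of \thm{liouville} with $a=0$, $b=1$ (which are coprime) and leading coefficient $1$ yields $|\gamma|\ge \frac{1}{(|\bar\gamma|+1)^{\,d-1}}$. Since $|\bar\gamma|+1\ge 1$ and $d-1\le\phi(m)-1$, replacing the exponent $d-1$ by $\phi(m)-1$ only shrinks the bound, and then replacing $|\bar\gamma|$ by the larger quantity $2A\,\phi(m)$ shrinks it again; combining,
\begin{align}
  |\alpha-\beta| \;=\; |\gamma| \;\ge\; \frac{1}{(|\bar\gamma|+1)^{\phi(m)-1}} \;\ge\; \frac{1}{(2A\,\phi(m)+1)^{\phi(m)-1}},
\end{align}
which is the claim. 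The only points needing a little care are that the conjugates of $\gamma$ really are Galois images (so that each has modulus at most $\sum_i|c_i|$, using $|\zeta_m^{a}|=1$), and that passing from the true degree $d$ of the minimal polynomial to $\phi(m)$ weakens rather than strengthens the estimate; everything else is bookkeeping.
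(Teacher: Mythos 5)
Your proposal is correct and follows essentially the same route as the paper: form $\gamma=\alpha-\beta\in\mathbb Z[\zeta_m]$ with coefficients bounded by $2A$, bound its conjugates, and apply the algebraic-integer consequence of \thm{liouville} with $(a,b)=(0,1)$. You are in fact somewhat more careful than the paper's proof, which glosses over the Galois-conjugate bound (writing $Ad$ where the clean estimate is $2A\,\phi(m)$) and the passage from the true minimal-polynomial degree $d$ to the exponent $\phi(m)-1$, both of which you justify explicitly, along with the implicit assumption $\alpha\neq\beta$.
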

\begin{proof}
  Since $\mathbb Z[\zeta_m]$ is an additive group, $\alpha-\beta\in\mathbb Z[\zeta_m]$ whose minimal polynomial is monic and whose representation is bounded in $[-2A,2A]$.
  By \thm{liouville} with $(a,b)=(0,1)$, $d\leq [\mathbb Q(\zeta_m):\mathbb Q]=\phi(m)$, and $|\overline{\alpha-\beta}|\leq Ad$. 
  These facts give
  \begin{align}
      |\alpha-\beta|
      \geq \frac{1}{(2A\cdot\phi(m)+1)^{\phi(m)-1}}.
  \end{align}
\end{proof}

Now let $M:=(2A\phi(m)+1)^{\phi(m)-1}$.
If we already know a complex number $\tilde\alpha$ that is $\delta$-close to an algebraic integer $\alpha$ for $\delta\leq\frac{1}{4M}$, then for every other algebraic integer $\beta\neq\alpha$ whose representation is also bounded in $[-A,A]$, $|\beta-\tilde\alpha|>\frac{3}{4M}$.
Given the observation, we define the following problem.

\begin{problem}[Close algebraic integer problem ($\CAIP$)]\label{prob:alg}
The close algebraic integer problem $(\CAIP)$ is defined as follows:
given $(\alpha,L_0,\ldots,L_{\phi(m)-1},T)$, where $\alpha\in\mathbb C$, $L_0,\ldots,L_{\phi(m)-1},M$ 
are non-negative integers (all represented using $n^{O(1)}$ bits), determine if
\begin{itemize}
\item there is an algebraic integer $\alpha'\in S$ such that $|\alpha-\alpha'|\leq 1/T$, or

\item every $\alpha'\in S$ satisfies $|\alpha-\alpha'|\geq 2/T$,

\end{itemize}
where the set $S$ consists of algebraic integers in $\mathbb Z[\zeta_m]$ whose representation $(a_0,\ldots,a_{\phi(m)-1})$ satisfies $0\leq a_i \leq L_i$ for $0\leq i\leq \phi(m)-1$.
It is promised that the given instance is in one of the cases.
\end{problem}

Indeed, $\CAIP$ can be solved using a non-deterministic Turing machine (NTM).
\begin{theorem}\label{thm:caip-np}
$\CAIP\in\NP$.
\end{theorem}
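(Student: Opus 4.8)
The plan is to exhibit a polynomial-time verifier with a polynomially-bounded witness. The natural witness is the representation of a candidate algebraic integer $\alpha' \in S$, namely the tuple $(a_0, \ldots, a_{\phi(m)-1})$ with $0 \le a_i \le L_i$. By hypothesis each $L_i$ is given using $n^{O(1)}$ bits, so the witness has polynomial length, and writing down $\alpha' = \sum_{i=0}^{\phi(m)-1} a_i \zeta_m^i$ is immediate. The verifier then must check the single numerical inequality $|\alpha - \alpha'| \le 1/T$. On a \textsc{yes}-instance, the promise guarantees such an $\alpha'$ exists, so some accepting path exists; on a \textsc{no}-instance, every $\alpha' \in S$ has $|\alpha - \alpha'| \ge 2/T > 1/T$, so no path accepts. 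Thus correctness reduces entirely to showing the inequality test is computable in deterministic polynomial time.

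The main step is therefore the numerical verification. First I would argue that $\zeta_m = e^{2\pi i/m}$ (or whichever primitive root is fixed) can be approximated to any additive precision $2^{-\poly(n)}$ in polynomial time, e.g.\ via a rapidly converging series for the exponential, and likewise the input $\alpha \in \mathbb{C}$ is given to $n^{O(1)}$ bits of precision. Since the coefficients $a_i$ are bounded by $2^{n^{O(1)}}$ (the $L_i$ have $n^{O(1)}$ bits) and $\phi(m) \le m = \poly(n)$, the value $\alpha' = \sum a_i \zeta_m^i$ has modulus at most $2^{n^{O(1)}}$, so computing it to additive precision $2^{-\poly(n)}$ requires only $\poly(n)$-bit arithmetic on $\poly(n)$ terms — polynomial time. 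Hence $|\alpha - \alpha'|^2$ can be computed to precision $2^{-\poly(n)}$.

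To convert this approximate computation into an exact decision, I would invoke the separation bound already established in the excerpt. By \cor{distance-bound} (with $A = \max_i L_i$), any two \emph{distinct} algebraic integers in $\mathbb{Z}[\zeta_m]$ with representations bounded in $[-A,A]$ differ by at least $1/M$ where $M = (2A\phi(m)+1)^{\phi(m)-1} = 2^{n^{O(1)}}$; the same reasoning bounds $|\alpha' - \alpha''|$ from below for distinct candidates, and also bounds how close the real quantity $|\alpha-\alpha'|$ can be to the threshold values $1/T$ and $2/T$ when combined with the promise gap. Concretely, because the instance is promised to fall in one of the two cases — $|\alpha - \alpha'| \le 1/T$ for some $\alpha'$, or $|\alpha - \alpha'| \ge 2/T$ for all $\alpha'$ — it suffices to estimate $|\alpha - \alpha'|$ to additive error, say, $1/(2T) = 2^{-\poly(n)}$ and compare against $\tfrac{3}{2T}$: in the first case a good $\alpha'$ gives an estimate below $\tfrac{3}{2T}$, in the second every $\alpha'$ gives an estimate above $\tfrac{3}{2T}$. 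Since $T$ has $n^{O(1)}$ bits, $1/(2T)$ is $2^{-\poly(n)}$, which the polynomial-precision arithmetic above achieves.

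The step I expect to be the real (if mild) obstacle is being careful that the promise structure of $\CAIP$ is genuinely used: without the promise gap between $1/T$ and $2/T$, deciding $|\alpha - \alpha'| \le 1/T$ exactly would require certifying an equality case, which approximate arithmetic cannot do. With the gap, a single $2^{-\poly(n)}$-accurate estimate settles which side we are on, so the verifier is clean. A secondary bookkeeping point is to confirm all the parameters — $\phi(m)$, the cyclotomic coefficient bounds feeding into $A$, and the bit-lengths of $L_i, T$ — are simultaneously $\poly(n)$ under the standing assumption $m = \poly(n)$, which follows from \thm{cyclotomic-coefficient-bound-2} and \cor{per-rep}. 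Putting these together: guess the tuple $(a_0,\ldots,a_{\phi(m)-1})$, check $0 \le a_i \le L_i$, compute $\alpha'$ and $|\alpha-\alpha'|^2$ to precision $2^{-\poly(n)}$, accept iff the estimate is below $(3/(2T))^2$. This is a polynomial-time nondeterministic procedure, so $\CAIP \in \NP$.
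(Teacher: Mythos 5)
Your proposal is correct and follows essentially the same route as the paper's proof: guess the coefficient tuple as the witness, evaluate it using a $2^{-\poly(n)}$-precision approximation of $\zeta_m$, and use the promise gap between $1/T$ and $2/T$ to decide with a threshold at $3/(2T)$. Your appeal to \cor{distance-bound} is superfluous (the promise gap alone carries the argument, as you yourself note), but it does no harm.
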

\begin{proof}
We can use sufficiently many bits (but still polynomial) to represent $\zeta_m$ as a complex number $\xi$ such that   \begin{align}
  |\zeta_m-\xi| = o\left(\frac{1}{T\cdot \phi(m)^2 \cdot \max_i L_i}\right).
  \end{align}
  For every $\beta=b_0+b_1\zeta_m+\ldots+b_{\phi(m)-1}\zeta_{m}^{\phi(m)-1}$  approximated by $\tilde\beta=b_0+b_1\xi+\ldots+b_{\phi(m)-1}\xi^{\phi(m)-1}$  satisfying $|\beta-\tilde\beta|=o(1/T)$ for every $i$ and integer $b_i\in[L_i]$.

If the NTM non-deterministically finds $b_0,\ldots,b_{\phi(m)-1}$ satisfying $|\alpha-\tilde\beta|\leq 1.5/T$, then $|\alpha-\beta|\leq 1/T$.
Otherwise, every $\beta$ satisfies $|\alpha-\beta|\geq 2/T$.
\end{proof}

The problem $\CAIP$ only consider the case where each coefficient $b_i\in[0,L_i]$. 
To extend it to general cases where the $i$-th element is in $[U_i,L_i]$, we can shift $\alpha$ to $\beta=\alpha-U_i \zeta_m^i$ and solves $\CAIP$ with $\beta$ and the $i$-th paremeter being $L_i-U_i$.
Thus it suffices to only consider the former setting.

Given an oracle for $\CAIP$, we can search for a satisfying algebraic integer.

\begin{theorem}\label{thm:search-caip}
  Let $\O$ be an oracle which solves $\CAIP$. 
  Then there exists an algorithm which makes $O(\log(L_1\ldots L_{\phi(m)-1}))$ queries to $\O$ and finds a satisfying algebraic integer, or returns $\bot$ if there is not.
\end{theorem}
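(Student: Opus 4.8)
The plan is a standard search-to-decision reduction: binary-search for the coefficients $(a_0,\ldots,a_{\phi(m)-1})$ of the target algebraic integer one coordinate at a time, using the $\CAIP$ oracle $\O$ as a comparison primitive. First I would make a single query to $\O$ on the given instance; if it answers ``no'' then no algebraic integer in $S$ lies within $1/T$ of $\alpha$, and the algorithm returns $\bot$. Otherwise there is a (unique, see below) target $\alpha^\star\in S$ with $|\alpha-\alpha^\star|\le 1/T$, and I process the coordinates $i=0,1,\ldots,\phi(m)-1$ in order: for coordinate $i$ I maintain an interval $[\mathrm{lo}_i,\mathrm{hi}_i]$, initially $[0,L_i]$, that is guaranteed to contain $a^\star_i$; at each step, with $\mathrm{mid}=\lfloor(\mathrm{lo}_i+\mathrm{hi}_i)/2\rfloor$, I query $\O$ on the instance that pins coordinates $0,\ldots,i-1$ to the already-found values, restricts coordinate $i$ to $[\mathrm{lo}_i,\mathrm{mid}]$, and leaves coordinates $j>i$ free in $[0,L_j]$, setting $\mathrm{hi}_i\leftarrow\mathrm{mid}$ on answer ``yes'' and $\mathrm{lo}_i\leftarrow\mathrm{mid}+1$ on answer ``no'', until the interval collapses to $a^\star_i$. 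Each such box-restricted query is turned into a bona fide $\CAIP$ instance (all of whose coefficient lower bounds are $0$) by the translation trick noted after \thm{caip-np}: replace $\alpha$ by $\alpha$ minus the fixed element $\sum_j U_j\zeta_m^j$, where $U_j$ is the current lower bound on coordinate $j$, and pass the correspondingly shortened upper bounds.

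Correctness rests on two points. First, under the standing assumption used throughout this section (cf.\ the discussion preceding \prob{alg}) that $T$ is large enough that $2/T$ falls below the Liouville separation scale of \cor{distance-bound} --- concretely $T\ge 4M$ with $M=(2A\phi(m)+1)^{\phi(m)-1}$ and $A=\max_j L_j$, the size bound of \cor{per-rep} --- there is at most one $\alpha^\star\in S$ with $|\alpha-\alpha^\star|\le 1/T$ (any two such would be $\le 2/T<1/M$ apart, contradicting \cor{distance-bound}), and every other algebraic integer of $S$ is at distance $\ge 1/M-1/T\ge 3/T>2/T$ from $\alpha$ by the triangle inequality; in particular no element of $S$ has distance strictly between $1/T$ and $2/T$ from $\alpha$. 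Second, this is inherited by every sub-box $S'$ encountered: all elements of $S'$ are elements of $S$, and the translation applied only shifts everything by a fixed element of $\mathbb Z[\zeta_m]$ whose coefficients are bounded by $A$, so the same $M$ applies and the shifted instance again meets the $\CAIP$ promise --- its ``yes'' case holding exactly when $\alpha^\star\in S'$ and its ``no'' case otherwise. Hence every query is a legal $\CAIP$ instance and its answer is precisely the predicate ``$\alpha^\star$ lies in the queried half'', so the binary search is always correctly guided and converges to the representation of $\alpha^\star$.

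For the query count, coordinate $i$ uses $O(\log L_i)$ queries, for a total of $O\left(\sum_{i=0}^{\phi(m)-1}\log L_i\right)=O\left(\log(L_1\cdots L_{\phi(m)-1})\right)$, plus the one initial feasibility query. The binary search itself is routine; the main obstacle is the correctness bookkeeping of the second paragraph --- ensuring that no sub-box ever contains an algebraic integer at distance strictly between $1/T$ and $2/T$ from the (suitably translated) $\alpha$, so that $\O$ is never queried outside its promise. This is exactly where the algebraic-number-theoretic input is indispensable: it is the Liouville-type lower bound of \cor{distance-bound}, together with the choice of $T$, that excludes such intermediate distances and lets the generic search-to-decision template go through.
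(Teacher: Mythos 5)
Your proposal is correct and follows essentially the same route as the paper: a coordinate-by-coordinate binary search driven by the $\CAIP$ oracle, with the shift trick reducing each restricted box to an instance with zero lower bounds, for a total of $O\left(\sum_i \log L_i\right)=O\left(\log(L_1\cdots L_{\phi(m)-1})\right)$ queries. Your second paragraph is in fact more careful than the paper's own proof, which silently assumes the sub-queries stay within the $\CAIP$ promise; your explicit use of \cor{distance-bound} and the choice of $T$ (matching the discussion preceding \prob{alg}) to rule out elements at distance strictly between $1/T$ and $2/T$ closes that bookkeeping gap.
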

\begin{proof}
  The algorithm $\A$ is basically a binary search for each coefficient.

  First $\A$ runs $b\gets\O(\alpha,L_0,\ldots,L_{\phi(m)-1},T)$; if the output $b=0$, then $\A$ returns $\bot$.
  Otherwise, there must be a satisfying algebraic integer $\alpha'$.
  Then $\A$ runs the following steps:
  \begin{enumerate}
    \item Set $\beta\gets\alpha$. 
    \item For $j=\phi(m)-1,\ldots, 0$, 
  \begin{enumerate}
  \item Set $W\gets 0$ and $R\gets \lfloor L_j/2\rfloor$.
  \item Repeat the following steps until $R=0$:
    \begin{enumerate}
    \item Compute $b_i\gets\O(\beta,L_0,\ldots, L_{j-1}, R, 0,\ldots, 0 ,T)$.
    \item If $b=0$, set $\beta\gets\beta-(R+1) \xi^{j}$ and $W\gets W+R+1$.\label{stp:run}
    \item Set $R\gets\lfloor R/2\rfloor$.
    \end{enumerate}
  \item Set $a_{j}\gets W$.
  \end{enumerate}
  \item Output $(a_0,\ldots,a_{\phi(m)-1})$.
  \end{enumerate}
  We prove the algorithm is correct for $j=\phi(m)-1$; for other coefficients, the correctness analysis applies similarly.
  First, the algorithm must terminate since in each iteration $R$ gets decreased by a factor of two. 
  For the correctness, we note that $\beta + W z^{\phi(m)-1}=\alpha$ holds after each iteration.
  In each step, let a satisfying algebraic integer denote $\beta'$, written as
  \begin{align}
    \beta' = c_{\phi(m)-1} z^{\phi(m)-1}+a_{\phi(m)-2}z^{\phi(m)-2}+\ldots + a_0
    \end{align}
    before \stp{run} is run. 
    It suffices to show that after \stp{run}, the leading coefficient is no more than $R$ by induction.
    For the base case, since the leading coefficient of $\alpha'$ is no more than $L_{\phi(m)-1}$; the assertion holds.
    If $b=0$, then $R< c_{\phi(m)-1}\leq 2R+1$, and $c_{\phi(m)-1}-(R+1)\geq 0$; otherwise, there exists a satisfying $\beta'$ such that $c_{\phi(m)-1}\leq R$.
    Thus after \stp{run}, there is a satisfying algebraic integer whose leading coefficient is no more than $R$.
\end{proof}

\section{Average-case Hardness}\label{sec:wa}

In the context of average-case hardness, Aaronson and Arkhipov~\cite{aaronson2011computational} considered the hardness of $1$-permanent for i.i.d. Gaussian matrices.
More recently, Haferkamp, Hangleiter, Eisert, and Gluza \cite{haferkamp2020contracting} considered the average-case hardness of the truncated uniform distribution.

In this section, we extend their results both to $z$-permanent and to a wide range of distributions, provided that the distribution satisfies a property called the \emph{strong autocorrelation property}.

Our idea follows from Aaronson and Arkhipov \cite{aaronson2011computational}: 
If one can compute a low-degree polynomial for sufficiently many points, then there is an algorithm that recovers the whole polynomial with high probability. 
More precisely, we use the following algorithm for noisy polynomial interpolation. 
\begin{theorem}[{Berlekamp-Welch algorithm \cite{bakaspolynomial}}]\label{thm:bwa} 
Let $q$ be a univariate polynomial of degree $d$ over any field
$\mathbb{F}$. Suppose we are given m pairs of $\mathbb{F}$ elements $(x_1, y_1),...,(x_m, y_m)$ (with the $x_i's$ all distinct), and are
promised that $y_i = q(x_i)$ for more than $(m+d)/2$ values of i. Then there is a deterministic algorithm to
reconstruct $q$ using $\poly(n,d)$ field operations. 
\end{theorem}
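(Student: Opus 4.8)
The plan is to use the classical error-locator (``key equation'') argument underlying the Berlekamp--Welch algorithm. Set $e := \lfloor (m-d-1)/2\rfloor$. The promise that $y_i = q(x_i)$ for more than $(m+d)/2$ of the indices means that the bad set $B := \{i : y_i \neq q(x_i)\}$ has $|B| \le e$, while the good set $G := \{i : y_i = q(x_i)\}$ satisfies $|G| = m - |B| > e + d$ (equivalently $|G| > (m+d)/2 > e+d$).

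First I would phrase the key equation as a linear system. Introduce unknown coefficients for a polynomial $E$ required to be monic of degree exactly $e$ (an affine constraint on its coefficients) and for a polynomial $N$ of degree at most $e + d$, and impose the $m$ linear constraints $N(x_i) = y_i\, E(x_i)$ for $i = 1,\dots,m$. This is an affine system in $2e + d + 1$ scalar unknowns, which can be assembled and solved (returning any feasible point) by Gaussian elimination in $\poly(m,d)$ field operations.

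Next I would verify feasibility, so that Gaussian elimination actually returns a solution. Let $E_{\star}(x) := \prod_{i \in B}(x - x_i)$ be the true error locator, of degree $|B| \le e$; pick any monic polynomial $P$ of degree $e - |B|$, and set $E := E_{\star}P$ and $N := E\,q$. Then $E$ is monic of degree $e$, $\deg N \le e + d$, and for every $i$ we have $N(x_i) = E(x_i)\,y_i$: trivially when $i \in G$ since $y_i = q(x_i)$, and because $E(x_i) = 0$ when $i \in B$. Hence the system is feasible, and Gaussian elimination returns some $(\tilde N, \tilde E)$ with $\tilde E$ monic of degree $e$ (in particular $\tilde E \not\equiv 0$).

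Finally I would recover $q$ as $\tilde N/\tilde E$. Consider $R(x) := \tilde N(x) - q(x)\,\tilde E(x)$, whose degree is at most $e + d$. For every $i \in G$ one has $\tilde N(x_i) = y_i\tilde E(x_i) = q(x_i)\tilde E(x_i)$, so $R(x_i) = 0$; since the $x_i$ are distinct and $|G| > e + d \ge \deg R$, the polynomial $R$ has more roots than its degree and so vanishes identically. Thus $\tilde E \mid \tilde N$ and $q = \tilde N/\tilde E$, obtained by one polynomial division, again in $\poly(m,d)$ operations (which is $\poly(n,d)$ whenever $m = \poly(n)$). The one point needing care is the calibration of $e$: it must be large enough to host the true locator $E_{\star}$ (guaranteed by $|B| \le e$) yet small enough that $e + d$ stays below the number of good indices (guaranteed by $|G| > (m+d)/2 > e+d$) — exactly the window the hypothesis $|G| > (m+d)/2$ opens, which is also why the algorithm needs no parameter to be guessed.
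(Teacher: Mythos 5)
Your argument is correct and complete: the calibration $e=\lfloor (m-d-1)/2\rfloor$ does satisfy both $|B|\le e$ and $|G|>e+d$, the feasibility witness $(E_\star P,\,E_\star P\,q)$ is valid, and the root-counting step forces $\tilde N = q\tilde E$, so the division recovers $q$. The paper itself does not prove this statement --- it imports the Berlekamp--Welch theorem by citation --- and your write-up is exactly the standard key-equation proof from the coding-theory literature; the only cosmetic point is that the paper's stated cost ``$\poly(n,d)$'' should really read $\poly(m,d)$, which your remark about $m=\poly(n)$ already addresses.
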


Now we define the following property, under which a random self-reducibility using the Berlekamp-Welch algorithm can be applied.
\begin{definition}[Strong autocorrelation property]\label{dfn:autocor}
Consider a random variable $X$ with a real distribution $\mathcal{F}$ with mean $0$ and variance $1.$ We denote it as $X \sim \mathcal{F}_{\mathbb{R}}(0,1).$ We define $\mathcal{F}_{\mathbb{R}}(\mu,\sigma^2)$ with mean $\mu$ and variance $\sigma^2$ by shifting and scale the random variable $X \to \mu+\sigma^2 X.$ We pick $\epsilon<\frac{1}{2}.$ We can define complex random variable $Z=X+iY$,$X \sim \mathcal{F}_{\mathbb{R}}(\mu_1,1/2), Y \sim \mathcal{F}_{\mathbb{R}}(\mu_2,1/2)$ and $X,Y$ are independent. We write $Z \sim \mathcal{F}_{\mathbb{C}}(\mu_1+i \mu_2,1).$  Now we consider the following three distributions
\begin{align}
X & = \F_{\mathbb{C}}(0,1)^{n} \\ 
\mathcal{D}_{1}  &  =\F_{\mathbb{C}}\left(0,\left(  1-\varepsilon\right)
^{2}\right)^{n},\\
\mathcal{D}_{2}  &  =\prod_{i=1}^{n}\mathcal{F}_{\mathbb{C}}\left(  v_{i},1\right)
\end{align}
for some vector  $v\in\mathbb{C}^{n}$. 
A distribution is said to satisfy the strong autocorrelation property \dfn{autocor} if there exists constant $M_1, M_2$ independent of $n, \epsilon$, such that
\begin{align}
\left\Vert \mathcal{D}_{1}-X\right\Vert  &  \leq2n M_1\varepsilon,\\
\left\Vert \mathcal{D}_{2}-X\right\Vert  &  \leq \sqrt{2n}M_2 \left\Vert
v\right\Vert_{2}\,.
\end{align}
\end{definition}
When $\F$ is complex Gaussian or truncated uniform distribution both satisfy this criterion from direct calculation \cite{aaronson2011computational,haferkamp2020contracting}.
In \app{strong}, we give a sufficient condition under which almost all well-behaved $\F$ satisfies the such criterion.

Now we are ready to show that, as permanent, $z$-permanent also has random self-reducibility. 
\begin{theorem}[Random self-reducibility of  $z$-permanent]\label{thm:random} 
Assume that there is no classical efficient algorithm for computing $z$-permanent 
and for all $\{0,1\}$-matrices.
For all $\delta\geq1/\poly(n) $ the following problem is classically hard: Given an $n\times n$ matrix $X$ drawn from
$\mathcal{F}_{\mathbb{C}}(0,1)^{n \times n}$ which satisfies the strong autocorrelation property, output $\Per_z\left( X\right)
$ with probability at least ${3}/{4}+\delta$ over $\mathcal{F}_{\mathbb{C}}(0,1)^{n \times n}$.
\end{theorem}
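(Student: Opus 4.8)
The plan is to mimic the Aaronson--Arkhipov random self-reducibility argument for $1$-permanent, adapting it so that (i) the target function is $\Per_z$ rather than $\Per$, and (ii) the ambient distribution is any $\F_{\mathbb C}(0,1)^{n\times n}$ satisfying the strong autocorrelation property of \dfn{autocor}. The key structural facts I would use are that $\Per_z(X)$ is, by \eq{per-polynomial}, a polynomial in the entries of $X$ of total degree exactly $n$, and that a worst-case $\{0,1\}$-matrix can be hidden inside a low-degree curve whose other points look (statistically) like samples from the average-case distribution.

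\textbf{Step 1: Hiding a worst-case instance on a line.} Fix an arbitrary $\{0,1\}$-matrix $Y\in\bit^{n\times n}$ for which we want $\Per_z(Y)$. Draw $X\sim\F_{\mathbb C}(0,1)^{n\times n}$ and, for a formal parameter $t$, form the matrix-valued affine curve $X(t):=X + t\,(Y-X)$, so that $X(0)=X$ and $X(1)=Y$. Define $p(t):=\Per_z(X(t))$. By \eq{per-polynomial} each entry of $X(t)$ is affine in $t$, and $\Per_z$ is a degree-$n$ polynomial in those entries, so $p$ is a univariate polynomial in $t$ of degree at most $d=n$. Pick $m=\Theta(n)$ distinct nonzero evaluation points $t_1,\dots,t_m$ (say $t_i = \varepsilon\cdot i$ for a tiny $\varepsilon=1/\poly(n)$), and query the assumed average-case oracle on $X(t_1),\dots,X(t_m)$.

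\textbf{Step 2: Controlling the marginal distribution of each query.} This is where \dfn{autocor} enters and is, I expect, the main obstacle. For each fixed $t_i$, $X(t_i)=(1-t_i)X + t_i Y$ is a Gaussian (w.r.t.\ $\F$) matrix with shifted mean $t_i Y$ and scaled variance $(1-t_i)^2$; I must show its distribution is within total variation $o(1/m)$ of the true average-case distribution $\F_{\mathbb C}(0,1)^{n\times n}$, so that the oracle's failure probability on $X(t_i)$ is at most $1/4 - \delta/2$ plus a negligible correction. This is exactly what the two bounds $\|\D_1 - X\|\le 2nM_1\varepsilon$ and $\|\D_2 - X\|\le \sqrt{2n}M_2\|v\|_2$ in \dfn{autocor} are for: the first absorbs the variance rescaling $(1-t_i)^2$ (with $\varepsilon\approx t_i = \Theta(1/\poly(n))$), and the second absorbs the mean shift $v = t_i Y$ (with $\|v\|_2 \le t_i\cdot n = O(1/\poly(n))$ once $t_i$ is small enough), by the triangle inequality for total variation distance. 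Choosing $\varepsilon$ a sufficiently large negative power of $n$ makes the total deviation, summed over all $m$ queries by a union bound, smaller than $\delta/2$.

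\textbf{Step 3: Error-correcting the polynomial and reading off the answer.} By Step 2 and the assumed success probability $3/4+\delta$, in expectation at least a $3/4 + \delta/2$ fraction of the returned values $y_i$ equal $p(t_i)$; by a Markov/Chernoff argument this holds for more than $(m+d)/2$ of the indices with constant probability (amplifiable by repetition). Then \thm{bwa} (Berlekamp--Welch) reconstructs $p$ exactly from the pairs $(t_i,y_i)$ in $\poly(n)$ field operations over $\mathbb C$ (or, to stay with exact arithmetic, over the field $\mathbb Q(\zeta_m)$, since all $\Per_z$ values of $\{0,1\}$-matrices on the rational curve lie in $\mathbb Z[\zeta_m]$-related fields; a brief remark on working in a number field suffices). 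Finally evaluate $p(1)=\Per_z(Y)$. Since $Y$ was an arbitrary $\{0,1\}$-matrix, this contradicts the worst-case hardness hypothesis, proving the theorem. I would close with a remark that the constant $3/4$ can be pushed toward $1/2$ by standard amplification, and that the argument is oblivious to which $z$ on the unit circle is used, so it applies uniformly to all $z$-permanents.
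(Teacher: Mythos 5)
Your overall strategy is exactly the paper's: hide the worst-case $\{0,1\}$-matrix $Y$ at $t=1$ on the affine curve $X(t)=(1-t)X+tY$ through a random matrix $X$, observe that $\Per_z(X(t))$ is a degree-$\le n$ polynomial in $t$, use the strong autocorrelation property of \dfn{autocor} to bound the total variation distance of each query point from $\F_{\mathbb C}(0,1)^{n\times n}$, reconstruct via Berlekamp--Welch (\thm{bwa}), evaluate at $t=1$, and amplify. However, there is a genuine quantitative gap in your parameter choice: you take $m=\Theta(n)$ evaluation points, while the degree is $d=n$, so Berlekamp--Welch needs agreement on more than $(m+d)/2$ indices, i.e.\ a correct fraction exceeding $\frac{1}{2}\left(1+\frac{d}{m}\right)$, which is $1/2$ plus a \emph{constant} when $m=\Theta(n)$. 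But the only concentration tool available is Markov on the number of failures --- Chernoff is not applicable because all queries $X(t_1),\dots,X(t_m)$ are built from the same random $X$ and are heavily correlated --- and Markov only yields a correct fraction of about $\frac{1}{2}+\frac{\delta}{2}$ with probability about $\frac{1}{2}+\frac{\delta}{2}$. Since the theorem must handle $\delta$ as small as $1/\poly(n)$, a $\left(\frac{1}{2}+\frac{\delta}{2}\right)$-fraction of $\Theta(n)$ points does not exceed $(m+n)/2$; alternatively, demanding a constant fraction above $3/4$ correct leaves (via Markov) a success probability of only $O(\delta)<1/2$, which cannot be amplified by simple majority voting because there is no way to verify which repetitions succeeded.

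The fix is the paper's choice $L=\lceil n/\delta\rceil$ evaluation points (with $\varepsilon=\delta/((4n^2M_1+4n^2M_2)L)$ so that every query is within total variation $\delta/2$ of $\F_{\mathbb C}(0,1)^{n\times n}$): then $\left(\frac{1}{2}+\frac{\delta}{2}\right)L>\frac{L+n}{2}$, the reverse Markov inequality gives per-run success probability at least $\frac{1}{2}+\frac{\delta}{2}$, strictly above $1/2$, and $O(1/\delta^2)$ repetitions with majority vote complete the argument. With that correction (and dropping the appeal to Chernoff), your proof matches the paper's.
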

\begin{proof}
Let $M=\left(  x_{ij}\right)  \in\left\{  0,1\right\}  ^{n\times n}$ be an
arbitrary $\{0,1\}$-matrix. We show how to compute $\Per_z
\left(  M \right) $ in probabilistic polynomial time, given access to an
oracle $\mathcal{O}$ such that
\begin{equation}
\Pr_{X\sim \mathcal{F}_{\mathbb{C}}(0,1)^{n\times n} }\left[  \mathcal{O}\left( X\right)
=\Per_z\left(  X\right)  \right]  \geq\frac{3}{4}+\delta\,.
\end{equation}
We choose a matrix $Y\in\mathbb{C}^{n\times n}$ from the distribution $\mathcal{F}_{\mathbb{C}}(0,1)^{n\times n}$ and define
\begin{equation}
X\left(  t\right)  :=\left(  1-t\right)  Y+tM, q\left(  t\right)  :=\Per_z\left(  X\left(  t\right)  \right)
\end{equation}
so that $X\left(  0\right)  =Y$ and $X\left(  1\right)  =M $, and $q\left(  t\right) $ is a univariate polynomial in $t$ of degree at
most $n$ and $q\left(  1\right)  =\Per_z\left(  X\left(
1\right)  \right)  =\Per_z\left(  M\right)  $.
Now let 
\[
L:=\left\lceil \frac{n}{\delta}\right\rceil\quad\text{and}\quad \varepsilon:=
\frac{\delta}{\left(  4n^{2}M_1+4n^2M_2\right)  L}\,.
\]
 For each $\ell\in\left[  L\right]
$ call the oracle $\mathcal{O}$ on input matrix $X\left(  \varepsilon
\ell\right)  $. Then, using the Berlekamp-Welch algorithm (\thm{bwa}), attempt to find a degree-$n$ polynomial $q^{\prime}
:\mathbb{C}\rightarrow\mathbb{C}$ such that
\begin{equation}
q^{\prime}\left(  \varepsilon\ell\right)  =\mathcal{O}\left(  X\left(
\varepsilon\ell\right)  \right)
\end{equation}
for at least a ${3}/{4}+\delta$ fraction of $\ell\in\left[  L\right]  $.
\ If no such $q^{\prime}$ is found, then fail; otherwise, output $q^{\prime
}\left(  1\right) $ as the\ guessed value of $\Per_z\left(
M\right)  $.
Now we show that the above algorithm succeeds (that is, outputs $q^{\prime
}\left(  1\right)  =\Per_z\left(  M\right)  $) with probability
at least ${1}/{2}+ {\delta}/{2}$ over $Y$. Provided that holds, it is
clear that the success probability can be boosted to (say) $2/3$ by simply
repeating the algorithm $O\left(  1/\delta^{2}\right) $ times with different
choices of $Y$ and then outputting the majority result.
To prove the claim, note that for each $\ell\in\left[  L\right] $ one can
think of the matrix $X\left(  \varepsilon\ell\right) $ as having been drawn
from the distribution
\begin{equation}
\mathcal{D}_{\ell}:=\prod_{i,j=1}^{n}\F_{\mathbb{C}}\left(  \varepsilon\ell
a_{ij},\left(  1-\varepsilon\ell\right)  ^{2}\right)\,.
\end{equation}
Let
\begin{equation}
\mathcal{D}_{\ell}^{\prime}:=\prod_{i,j=1}^{n}\F_{\mathbb{C}}\left(
\varepsilon\ell a_{ij},1\right).
\end{equation}
Then by the triangle inequality together with \dfn{autocor} ($X \sim  \F_{\mathbb{C}}(0,1)^{n\times n}$),
\begin{align}\nonumber
\left\Vert \mathcal{D}_{\ell}-X\right\Vert  &
\leq\left\Vert \mathcal{D}_{\ell}-\mathcal{D}_{\ell}^{\prime}\right\Vert
+\left\Vert \mathcal{D}_{\ell}^{\prime}-X\right\Vert \\\nonumber
&  \leq2n^{2}M_1\varepsilon\ell+\sqrt{2n^2}M_2\sqrt{n^{2}\left(  \varepsilon\ell\right)^{2}
}\\\nonumber
&  \leq\left(  2n^{2}M_1+2n^2 M_2\right)  \varepsilon l \leq (2n^2 M_1+ 2n^2 M_2) \varepsilon L\\
&  \leq\frac{\delta}{2}\,.
\end{align}
We have
\begin{align}\nonumber
\Pr\left[  \mathcal{O}\left(  X\left(  \varepsilon\ell\right)  \right)
=q\left(  \varepsilon\ell\right)  \right]   &  \geq\frac{3}{4}+\delta
-\left\Vert \mathcal{D}_{\ell}-X\right\Vert \\
&  \geq\frac{3}{4}+\frac{\delta}{2}\,.
\end{align}
Now let $S$ be the set of all $\ell\in\left[  L\right] $ such that
$\mathcal{O}\left(  X\left(  \varepsilon\ell\right)  \right)  =q\left(
\varepsilon\ell\right)  $. Then by the reverse Markov's inequality \cite{eisenberg2001generalization},
\begin{equation}
\Pr\left[  \left\vert S\right\vert \geq\left(  \frac{1}{2}+\frac{\delta}
{2}\right)  L\right]  \geq1-\frac{\frac{1}{4}-\frac{\delta}{2}}{\frac{1}
{2}-\frac{\delta}{2}}\geq\frac{1}{2}+\frac{\delta}{2}\,.
\end{equation}
So we then just run the above algorithm $O(\frac{1}{\delta^2})$ times to amplify the success probability. After that, with high probability, we have
\[
\left\vert S\right\vert \geq\left(  \frac{1}{2}
+\frac{\delta}{2}\right)  L\,.
\]
Then by \thm{bwa}, the
Berlekamp-Welch algorithm succeeds; It will output polynomial
$q^{\prime}$ will be equal to $q$. This proves the claim.
\end{proof}

\section{Average-case Hardness of Approximation}\label{sec:van}

Since Aaronson and Arkhipov's seminal work on Boson Sampling \cite{aaronson2011computational} appeared,  approximating $1$-permanent in the average case has been investigated for particular distributions \cite{aaronson2011computational,haferkamp2020contracting}. 
In particular, Aaronson and Arkhipov \cite{aaronson2011computational} conjectured a problem called \textsc{Gaussian Permanent Estimation} (also denoted $\GPE_{\times}$) is  $\sharpP$-hard. 
The problem $\GPE_\times$ asks to compute $\Per(X)$ to multiplicative error $\epsilon$ with probability at least $1-\delta$ over matrix $X$ whose each element is sampled independently from the normalized Gaussian distribution. 
For quantum supremacy from Boson Sampling, the ultimate goal was to prove the additive version of $\GPE_\times$, denoted $\GPE_\pm$, is also hard.
Based on another conjecture, called the Permanent Anti-Concentration Conjecture (PACC), one can deduce that $\GPE_\pm$ is as hard as $\GPE_\times$. In the following, we denote $\mathcal{G}=\N_{\mathbb{C}}(0,1)^{n\times n}$ the distribution of an $n \times n$ i.i.d. Gaussian matrix with zero mean and unit variance. 
\begin{problem}[{$\GPE_\times$}]
  Given as input $X\sim\N_{\mathbb{C}}(0,1)^{n\times n}$ of i.i.d. Gaussians (complex normal distribution) and $\epsilon,\delta>0$, estimate $\Per(X)$ to within error $\pm\epsilon|\Per(X)|$ to probability at least $1-\delta$ over $X$.
\end{problem}

\begin{problem}[{$\GPE_{\pm}$}]
  Given as input $X\sim\N_{\mathbb{C}}(0,1)^{n\times n}$ of i.i.d. Gaussians (complex normal distribution) and $\epsilon,\delta>0$, estimate $\Per(X)$ additively (i.e., to within error $\pm \epsilon \sqrt{n!}$), with
probability at least $1-\delta$ over $X$
\end{problem}

In this section, we consider the hardness of $z$-permanent for matrices sampled from a general distribution $\H$ whose each entry is i.i.d. sampled. 
We first define the multiplicative and the additive versions of \textsc{Permanent Estimation} (PE) for distribution $\H$.
\begin{problem}[$\PE_{\H,z,\times}$]
Given as input $X\sim \H$ and $\epsilon,\delta>0$, estimate $\Per(X)$ to within error $\pm\epsilon|\Per(X)|$ to probability at least $1-\delta$ over $X$. 
\end{problem}
For convenience, $\PE_{\H,1,\times}$ is also denoted $\PE_{\H,\times}$. 
Similarly, we can define an additive version of the above problem. 
\begin{problem}[$\PE_{\H,z,\pm}$]
Given as input $X\sim \H$ and $\epsilon,\delta>0$, estimate $\Per_z(X)$ to within error $\pm\epsilon g(n)$ to probability at least $1-\delta$ over $X$.
For convenience, $\PE_{\H,1,\pm}$ is also denoted $\PE_{\H,\pm}$. 
\end{problem}
From the definition, we have $\GPE_{\times}=\PE_{\mathcal G,\times}$ and $\GPE_{\pm}=\PE_{\mathcal G,\pm}$ with $g=\sqrt{n!}$.
Let $d=\binom{n}{2}$
For $r\in(0,\frac{1}{d+1})$, let $S_r:=\{z = e^{i2\pi r}\zeta_{d+1}^{i-1}:i\in[d+1]\}$, where $\zeta_{k}:=e^{i2\pi/k}$ is a primitive $k$-th root of unity. 
Let $\O$ be an oracle, given $X \sim \H$, outputs the approximations of $z$-permanent of $d+1$ points, i.e.,
$\Per_{z_1}(X),\ldots,\Per_{z_{d+1}}(X)$ for $z_1,\ldots,z_{d+1} \in S_r$.
More formally, we define $\O_\times$ and $\O_{\pm}$ to be oracles that approximate to multiplicative and additive errors, respectively:
\begin{definition}[$\O_{\times}, \O_{\pm}$]\label{dfn:Oracle} 
The oracles $\O_\times$ and $\O_{\pm}$ are defined as follows:
\begin{itemize}
\item 
Given $X \sim \H,$ $\O_{\times}(X,\epsilon,\delta)$ is the oracle that outputs a vector $y=(y_1,\ldots,y_{d+1})^\top$ 
such that for every $i \in [d+1]$, $|y_i-\Per_{z_i}(X)|\leq \epsilon |\Per_{z_i}(X)|$ with probability at least $1-\delta$ over choices of $X$. 

\item 
Given $X \sim \H$ and efficiently computable function $g$, $\O_{\pm}(X,\epsilon,\delta)$ is the oracle that outputs a vector $y=(y_1,\ldots,y_{d+1})^\top$ such that for every $i \in [d+1], $ we have: $|y_i-\Per_{z_i}(X)|\leq \epsilon g(n)$ with probability at least $1-\delta$ over choices of $X$.
\end{itemize}
\end{definition}

The rest of this section is devoted to showing that if $\O_\times$ can be classically simulated, then $\PE_{\H,\times}$ can be solved efficiently, and similarly holds for $\O_\pm$ and $\PE_{\H,\pm}$ in place of $\O_\times$ and $\PE_{\H,\pm}$.
The proof is similar to Lipton's proof \cite{lipton1991new} for the average hardness of permanent over a finite field $\mathbb{F}_{p}$ for $p>n$. 
In our case, we can approximate $\Per(X)$ given an approximation of $\Omega(n^2)$ points on the unit circle.

First, we show the following two lemmas for polynomial interpolation. One is for multiplicative error and one is for additive error.
\begin{lemma}\label{lem:additive-reduction}
  For integer $d\geq 1$, let $f$ be a degree-$d$ complex-valued polynomial and $\A$ be an algorithm that approximates $f$ to additive error $\epsilon$ on points $x_1,\ldots,x_{d+1}$. 
There exists an algorithm which, on input $x\in\mathbb{T}$, makes $d+1$ calls to $\A$ and outputs an estimate of $f(x)$ to additive error $(d+1)\|V^{-1}\|\epsilon$, where $V=V(x_1,\ldots,x_{d+1})$ is the Vandermonde matrix with nodes $x_1,\ldots,x_{d+1}$. 
\end{lemma}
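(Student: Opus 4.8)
The plan is to recover the coefficient vector of $f$ from the $d+1$ noisy evaluations by inverting the Vandermonde system at the nodes $x_1,\ldots,x_{d+1}$, and then to propagate the resulting coefficient error to the target point $x$, using crucially that $|x|=1$. Concretely, the algorithm queries $\A$ at $x_1,\ldots,x_{d+1}$ to obtain values $y_1,\ldots,y_{d+1}$ with $|y_i - f(x_i)|\le\epsilon$, forms $y=(y_1,\ldots,y_{d+1})^\top$, solves the linear system $Vc=y$ for $\tilde c := V^{-1}y$ (the nodes are distinct, so $V$ is invertible), and outputs $\tilde f(x):=\sum_{j=0}^{d}\tilde c_j x^j = v(x)^\top\tilde c$, where $v(x):=(1,x,\ldots,x^d)^\top$.

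For the analysis, let $c^\ast=(c_0^\ast,\ldots,c_d^\ast)^\top$ be the true coefficient vector of $f$ and $\mathbf f:=(f(x_1),\ldots,f(x_{d+1}))^\top$, so that $Vc^\ast=\mathbf f$ and $\|\mathbf f - y\|_2 \le \sqrt{d+1}\,\epsilon$. Then $\|\tilde c - c^\ast\|_2 = \|V^{-1}(y-\mathbf f)\|_2 \le \|V^{-1}\|\cdot\sqrt{d+1}\,\epsilon$. Since $f(x)=v(x)^\top c^\ast$, we get $|\tilde f(x)-f(x)| = |v(x)^\top(\tilde c - c^\ast)| \le \|v(x)\|_2\,\|\tilde c - c^\ast\|_2$; and because $x\in\mathbb T$ every entry of $v(x)$ has modulus $1$, so $\|v(x)\|_2=\sqrt{d+1}$. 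Combining the two bounds yields $|\tilde f(x)-f(x)|\le (d+1)\,\|V^{-1}\|\,\epsilon$, which is the claimed estimate. This is a routine interpolation-stability argument with no genuinely hard step.

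The only point requiring care is the conditioning factor $\|V^{-1}\|$, which is why the bound is stated in terms of it rather than an absolute constant: a worst-case choice of nodes makes $\|V^{-1}\|$ blow up. In the intended application the nodes are taken to be a rotated copy of the $(d+1)$-th roots of unity (this is exactly the set $S_r$ used later), for which $V$ is, up to the rotation, a scaled discrete Fourier transform matrix with $\|V^{-1}\| = 1/\sqrt{d+1}$, so the error is amplified only by a factor $\sqrt{d+1}$. Finally, if $\A$ is only guaranteed to be within $\epsilon$ with probability $1-\delta$ on each input, one applies the above on the event that all $d+1$ queried values are accurate, which holds with probability at least $1-(d+1)\delta$ by a union bound; we record the deterministic statement here and invoke it with this caveat in the subsequent average-case reductions.
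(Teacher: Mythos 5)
Your proposal is correct and follows essentially the same argument as the paper: query $\A$ at the $d+1$ nodes, invert the Vandermonde system to estimate the coefficients, and bound $|v(x)^\top(\tilde c - c)|$ via Cauchy--Schwarz and the operator norm of $V^{-1}$, using $\|v(x)\|_2=\sqrt{d+1}$ for $x\in\mathbb T$ and $\|\tilde y - y\|_2\le\sqrt{d+1}\,\epsilon$. Your write-up is in fact slightly cleaner than the paper's (which contains a minor notational slip in the displayed chain of inequalities), and the closing remarks about the roots-of-unity nodes and the union bound over failure probabilities correctly anticipate how the lemma is used later.
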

\begin{proof}
  Let $\tilde y_1,\ldots,\tilde y_{d+1}$ be the estimate of $\A$ on $x_1,\ldots,x_{d+1}$ respectively.
  Let $\tilde y = (\tilde y_1,\ldots,\tilde y_{d+1})^\top$ and $x=(x_1,\ldots,x_{d+1})^\top$ be the associated column vectors.
  Since for each $i\in[d+1]$, $|\tilde y_i-y_i|\leq \epsilon$, $\|\tilde y-y\|\leq\epsilon\sqrt{d+1}$.

  Let $c=(c_0,\ldots,c_d)^\top$ be the vector of the coefficients of $f(x)=\sum_{i=0}^d c_ix^i$.
  Solving the linear equation $V\tilde c=\tilde y$ yields an estimate of the coefficients.
  For $x\in\mathbb{T}$, let $v=(1,x,x^2,\ldots,x^d)$ be the vector of monomials evaluated on $x$.
  Our goal is to bound the distance 
  \begin{align}
    |v\cdot(\tilde c-c)| \leq \|(V^{-1})^\top v\|\|V(c-c')\|\leq \|v\|\|V^{-1}\|\|\tilde y - y\| \leq (d+1)\|V^{-1}\|\epsilon.
  \end{align}
  The first inequality holds by Cauchy-Schwarz inequality. The second holds by the definition of matrix norm. The last holds because $\|\tilde y-y\|\leq\epsilon$ and $\|v\|=\sqrt{d+1}$.
\end{proof}

If we restrict our problem to approximating $f(1)$ for polynomials with non-negative coefficients, we can do better.
In particular, we show that if $\A$ approximates $x_1,\ldots,x_{d+1}$ for polynomial $f$ with non-negative coefficients to multiplicative error $\epsilon$, then there is an algorithm that approximates $f(1)$ to the multiplicative error as above.
\begin{lemma}\label{lem:multiplicative-reduction}
  For integer $d\geq 1$, let $f$ be a degree-$d$ polynomial with non-negative coefficients and $\A$ be an algorithm that approximates $f$ to multiplicative error $\epsilon$ on $x_1,\ldots,x_{d+1}\in\mathbb T$.
  Then there exists an algorithm which approximates $f(1)$ to multiplicative error $(d+1)\|V^{-1}\|\epsilon$, where $V=V(x_1,\ldots,x_{d+1})$ is the Vandermonde matrix with nodes $x_1,\ldots,x_{d+1}$.
\end{lemma}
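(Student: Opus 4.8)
The plan is to mimic the structure of the additive case in \lem{additive-reduction}, but exploit non-negativity of the coefficients to convert an additive bound into a multiplicative one. As in the previous lemma, let $\tilde y_1,\dots,\tilde y_{d+1}$ be the outputs of $\A$ on the nodes $x_1,\dots,x_{d+1}\in\mathbb T$, and form the column vectors $\tilde y$ and $y=(f(x_1),\dots,f(x_{d+1}))^\top$. The multiplicative guarantee gives $|\tilde y_i-y_i|\le\epsilon|y_i|\le\epsilon|f|_1$ for each $i$, where $|f|_1=\sum_{j=0}^d c_j = f(1)$ is the sum of the (non-negative) coefficients, using $|f(x_i)|\le\sum_j c_j|x_i|^j = f(1)$ since $|x_i|=1$. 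Hence $\|\tilde y-y\|\le\epsilon\sqrt{d+1}\,f(1)$.

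Next I would recover an estimate of the coefficient vector $c=(c_0,\dots,c_d)^\top$ by solving $V\tilde c=\tilde y$, exactly as before, and then evaluate at $x=1$: the estimate of $f(1)$ is $\mathbf{1}^\top\tilde c$ where $\mathbf 1=(1,1,\dots,1)^\top$. The error is
\begin{align}
\left|\mathbf 1^\top(\tilde c-c)\right|
= \left|\mathbf 1^\top V^{-1}(\tilde y-y)\right|
\le \|V^{-1}\|\,\|\mathbf 1\|\,\|\tilde y-y\|
\le (d+1)\|V^{-1}\|\,\epsilon\, f(1),
\end{align}
using Cauchy–Schwarz, the operator-norm bound, $\|\mathbf 1\|=\sqrt{d+1}$, and the bound on $\|\tilde y-y\|$ from the previous paragraph. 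Dividing through by $f(1)$ (and handling the degenerate case $f(1)=0$ separately, where non-negativity forces $f\equiv 0$ and the approximation is trivially exact) gives a multiplicative error of $(d+1)\|V^{-1}\|\epsilon$, as claimed.

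The one place that needs care — and the main obstacle — is the step $|f(x_i)|\le f(1)$: this is exactly where non-negativity of the coefficients is used, and it is what lets the per-point additive slack $\epsilon|f(x_i)|$ be absorbed into a single global quantity $\epsilon f(1)$ that also controls the target value we are estimating. Without non-negativity one would only get $\|\tilde y - y\| \le \epsilon\sqrt{d+1}\max_i|f(x_i)|$, and $\max_i|f(x_i)|$ need not be comparable to $|f(1)|$, so the argument would collapse back to the weaker additive statement. Everything else is the same linear-algebra bookkeeping as in \lem{additive-reduction}, so I would keep the exposition parallel to that proof and simply flag the non-negativity input.
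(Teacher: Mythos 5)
Your proposal is correct and follows essentially the same route as the paper's proof: solve $V\tilde c=\tilde y$, use non-negativity of the coefficients to bound each $|f(x_i)|$ by $f(1)$ so that $\|\tilde y-y\|\le\epsilon\sqrt{d+1}\,f(1)$, and then combine Cauchy--Schwarz with the operator-norm bound $\|\tilde c-c\|\le\|V^{-1}\|\,\|\tilde y-y\|$ to get the $(d+1)\|V^{-1}\|\epsilon$ multiplicative error. Your explicit handling of the degenerate case $f(1)=0$ is a small addition the paper omits, but otherwise the arguments coincide.
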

\begin{proof}
  Let $\tilde y_1,\ldots,\tilde y_{d+1}$ be the estimate of $\A$ on $x_1,\ldots,x_{d+1}$ and $y=(\tilde y_1,\ldots,\tilde y_{d+1})^\top$.
  Solving the linear equation $V\tilde c=\tilde y$ yields $\tilde c=V^{-1}\tilde y$.
  Also let $\tilde f(z):=\sum_{i=0}^d \tilde c_i z^i$ be the approximated polynomial obtained by solving the linear system.

  Since $f$ has non-negative coefficients, $|f(x)|\leq f(1)=\langle 1^{d+1},c\rangle$ for every $x\in\mathbb T$.
  Since $\|\tilde y-y\|\leq \epsilon \|y\|\leq \epsilon \sqrt{d+1}\cdot f(1)$, 
  \begin{align}
    |\tilde f(1)-f(1)|
    = |\langle 1^{d+1},\tilde c-c\rangle|
    \leq \sqrt{d+1}\|\tilde c- c\|\leq (d+1)\|V^{-1}\| \cdot \epsilon \cdot |f(1)|.
    \end{align}
    The first inequality holds by Cauchy-Schwarz inequality. 
    Thus an algorithm which outputs $|\tilde c|_1$ as the approximation with multiplicative error $(d+1)\epsilon\|V^{-1}\|$.
\end{proof}

Now we are ready to prove \thm{GPE-z}. 
By \lem{additive-reduction} and \lem{multiplicative-reduction}, we separate them into two cases. 
\begin{theorem}\label{thm:GPE-z}
The following statements hold:
\begin{enumerate}
\item Suppose $\PE_{\H,\times}$ is $\sharpP$-hard and $\H$ is non-negative distribution, and the polynomial hierarchy is infinite, then $\O_{\times}$ does not admit a classical efficient algorithm, i.e., there is no simulation of $\O_\times$ running in time $\poly(n,\frac{1}{\epsilon},\frac{1}{\delta})$. 
 
\item Suppose $(\PE_{\H,\pm},g)$ is $\sharpP$-hard and the polynomial hierarchy is infinite, then $\O_{\pm}$ does not admit a classical efficient algorithm, i.e., there is no classical simulation of $\O_{\pm}$ running in time $\poly(n,\frac{1}{\epsilon},\frac{1}{\delta})$).
\end{enumerate}
\end{theorem}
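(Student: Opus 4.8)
The plan is to solve $\PE_{\H,\times}$ (respectively $\PE_{\H,\pm}$) using a single query to $\O_{\times}$ (respectively $\O_{\pm}$) followed by deterministic polynomial interpolation to the point $z=1$, and then to derive the stated collapse from the assumed $\sharpP$-hardness together with Toda's theorem (\thm{toda}). The starting point is that, by \eq{per-polynomial}, for fixed $X$ the map $z\mapsto\Per_z(X)$ is a univariate polynomial of degree at most $d=\binom{n}{2}$ whose coefficient of $z^{\ell}$ equals $A_\ell=\sum_{\sigma:\ell(\sigma)=\ell}\prod_iX_{i,\sigma(i)}$, so that $\Per_1(X)=\sum_\ell A_\ell=\Per(X)$; when $\H$ is a non-negative distribution every $A_\ell\ge 0$, i.e., this polynomial has non-negative coefficients. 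Since the nodes $z_1,\dots,z_{d+1}\in S_r$ are pairwise distinct, the Vandermonde matrix $V=V(z_1,\dots,z_{d+1})$ is invertible, so \lem{multiplicative-reduction} and \lem{additive-reduction} apply.

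For statement (1), on input $X\sim\H$ and target parameters $\epsilon',\delta'$ I would call $\O_{\times}(X,\epsilon,\delta')$ once with $\epsilon=\epsilon'/\sqrt{d+1}$, obtaining $y=(y_1,\dots,y_{d+1})$ with $|y_i-\Per_{z_i}(X)|\le\epsilon|\Per_{z_i}(X)|$ for all $i$, which holds with probability at least $1-\delta'$ over $X$. Feeding these values into \lem{multiplicative-reduction} applied to $f(z)=\Per_z(X)$ (degree $\le d$, non-negative coefficients) yields an estimate of $f(1)=\Per(X)$ to multiplicative error $(d+1)\|V^{-1}\|\epsilon$. The whole post-processing is deterministic given $y$, so this solves $\PE_{\H,\times}$ with confidence $1-\delta'$ as long as $(d+1)\|V^{-1}\|\epsilon\le\epsilon'$.

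The one non-routine ingredient is a polynomial bound on $\|V^{-1}\|$, which is exactly what the shape of $S_r$ provides. Writing $\omega:=e^{2\pi\sqrt{-1}\,r}$ so that $z_i=\omega\,\zeta_{d+1}^{\,i-1}$, one factors $V=FD$, where $F$ is the $(d+1)$-point discrete Fourier matrix $F_{ij}=\zeta_{d+1}^{(i-1)(j-1)}$ and $D=\diag(1,\omega,\dots,\omega^{d})$ is unitary. Since $F^{*}F=(d+1)\Id$, the matrix $F/\sqrt{d+1}$ is unitary, hence $\|V^{-1}\|=\|D^{-1}F^{-1}\|=\|F^{-1}\|=1/\sqrt{d+1}$ and $(d+1)\|V^{-1}\|\epsilon=\sqrt{d+1}\,\epsilon=\epsilon'$. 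Therefore a classical simulation of $\O_{\times}$ running in time $\poly(n,1/\epsilon,1/\delta)$ yields an algorithm for $\PE_{\H,\times}$ running in time $\poly(n,1/\epsilon',1/\delta')$ (note $1/\epsilon=O(n/\epsilon')$); since $\PE_{\H,\times}$ is $\sharpP$-hard, this would give $\P^{\sharpP}\subseteq\BPP$, and by Toda's theorem $\PH\subseteq\P^{\sharpP}\subseteq\BPP$, so the polynomial hierarchy collapses, contradicting the assumption that it is infinite.

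Statement (2) follows by the identical argument with \lem{additive-reduction} in place of \lem{multiplicative-reduction}; because that lemma holds for arbitrary complex-valued polynomials, no non-negativity of $\H$ is needed. Calling $\O_{\pm}(X,\epsilon,\delta')$ once with $\epsilon=\epsilon'/\sqrt{d+1}$ and interpolating $f(z)=\Per_z(X)$ to $z=1$ gives an estimate of $\Per(X)$ to additive error $(d+1)\|V^{-1}\|\,\epsilon\, g(n)=\sqrt{d+1}\,\epsilon\, g(n)=\epsilon'\, g(n)$, i.e., it solves $(\PE_{\H,\pm},g)$; the same $\sharpP$-hardness-plus-Toda step then produces the collapse. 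The only place any care is required is the conditioning estimate $\|V^{-1}\|=1/\sqrt{d+1}$ for the rotated-roots-of-unity nodes; the remaining steps are bookkeeping of error propagation through the interpolation and the choice $\epsilon=\epsilon'/\sqrt{d+1}$.
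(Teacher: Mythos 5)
Your proposal is correct and follows essentially the same route as the paper: one call to $\O_\times$ (resp.\ $\O_\pm$) on the rotated equally-spaced nodes $S_r$, inversion of the Vandermonde system, \lem{multiplicative-reduction} (using non-negativity of $\H$) resp.\ \lem{additive-reduction}, the bound $\|V^{-1}\|=(d+1)^{-1/2}$, and then the $\sharpP$-hardness assumption plus Toda's theorem to collapse $\PH$. The only difference is cosmetic: you justify the unitarity of $\frac{1}{\sqrt{d+1}}V$ explicitly via the factorization $V=FD$ with $F$ the DFT matrix and $D$ unitary, which the paper merely asserts.
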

\begin{proof}
For (1), we give non-constructive proof.
  Let $d=\binom{n}{2}$.
  For $r\in(0,\frac{1}{d+1})$, let $S_r:=\{z = e^{i2\pi r}\zeta_{d+1}^{i-1}:i\in[d+1]\}$, where $\zeta_{k}:=e^{i2\pi/k}$ is a primitive $k$-th root of unity.
  We show that the number of points $z\in S_r$ such that $\Per_z(X)$ is efficiently computable cannot be more than $d$.
  Suppose toward contradiction this is not the case.
  Then let $\O_{\times}$ be algorithms that runs in time $\poly(n,1/\epsilon,1/\delta)$ and approximates $\Per_{z_1}(X),\ldots,\Per_{z_{d+1}}(X)$ to precision $\epsilon$ with probability $1-\delta$, for $S_r=\{z_1,\ldots,z_{d+1}\}$.
  Applying the algorithm in the proof of \lem{multiplicative-reduction}, we apply the oracle $\O_{\times}(X,\epsilon,\delta)$ for $\H_\times$ by the following algorithm:
  \begin{enumerate}
  \item Get $y=(y_1,\ldots,y_{d+1})^\top\gets \O_{\times}(X,\epsilon(d+1)^{-1/2},\delta/(d+1))$;
  \item Solving the linear system $Vc=y$, where $V=V(z_1,\ldots,z_{d+1})$.
  \item Output the summation of elements of $c$ as the approximation of $\Per(X)$.
  \end{enumerate}
  By the assumption, for each $i\in[d+1], $with probability at least $1-\frac{\delta}{d+1}$, $|y_i-\Per_{z_i}(X)|\leq \frac{\epsilon}{\sqrt{d+1}}|\Per_{z_i}(X)|$.
  Since points in $S_r$ is equally spaced, $\frac{1}{\sqrt{d+1}}V$ is a unitary matrix, and so is $\sqrt{d+1}V^{-1}$. 
  This implies that the matrix norm $\|V^{-1}\|=(d+1)^{-1/2}$.
  By \lem{multiplicative-reduction} and union bound, the algorithm outputs an approximation of $\Per(X)$ to multiplicative error $\epsilon$ with probability at least $1-\delta$.
  By the assumption $\PE_{\H,\times}$ is hard, we conclude $\P^{\sharpP}\subseteq\BPP^{\O_{\times}}\subseteq\BPP$ and a collapse of the polynomial hierarchy.

For (2), we use the same $S_r$, then let $\O_{\pm}$ be algorithms that runs in time $\poly(n,1/\epsilon,1/\delta)$ and approximates $\Per_{z_1}(X),\ldots,\Per_{z_{d+1}}(X)$ to precision $\epsilon$ with probability $1-\delta$, respectively for $S_r=\{z_1,\ldots,z_{d+1}\}$.
  Applying the algorithm in the proof of \lem{additive-reduction}, we apply the oracle $\O_{\pm}(X,\epsilon,\delta)$ for $\H_\pm$ by the following algorithm:
  \begin{enumerate}
  \item Get $y=(y_1,\ldots,y_{d+1})^\top\gets \O_{\pm}(X,\epsilon  (d+1)^{-1/2},\delta/(d+1))$.
  \item Solving the linear system $Vc=y$, where $V=V(z_1,\ldots,z_{d+1})$.
  \item Output the summation of elements of $c$ as the approximation of $\Per(X)$.
  \end{enumerate}
  By the assumption, for each $i\in[d+1], $with probability at least $1-\frac{\delta}{d+1}$, $|y_i-\Per_{z_i}(X)|\leq \frac{\epsilon}{\sqrt{d+1}}g(n)$.
  Since points in $S_r$ is equally spaced, $\frac{1}{\sqrt{d+1}}V$ is a unitary matrix, and so is $\sqrt{d+1}V^{-1}$. 
  This implies that the matrix norm $\|V^{-1}\|=(d+1)^{-1/2}$.
  By \lem{additive-reduction} and a union bound, the algorithm outputs an approximation of $\Per(X)$ to additive error $\epsilon g(n)$ with probability at least $1-\delta$.
  By our assumption $(\PE_{\H,\times},g)$ $\sharpP$-hard, we conclude that $\P^{\sharpP}\subseteq\BPP^{\O_{\pm}}\subseteq\BPP$ and the collapse of the polynomial hierarchy.
\end{proof}

At first glance, it seems like our result does not work for an i.i.d. Gaussian matrix since it is not positive except with exponentially small probability. 
However, similarly as in \cite{aaronson2011computational}, we prove that $\PE_{\mathcal{G},z,\pm}$ with $g(n)=\sqrt{n!}$ can be reduced to $\PE_{\mathcal{G},z,\times}$.

\begin{lemma}\label{lem:pm-to-times}
$(\PE_{\mathcal{G},z,\pm})$ with $g=\sqrt{n!}$ is
polynomial-time reducible to $\PE_{\mathcal{G},z,\times}$.
\end{lemma}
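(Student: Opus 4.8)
The plan is to follow the Aaronson--Arkhipov reduction showing that additive estimation of the permanent is no harder than multiplicative estimation, the only new ingredient being a second-moment computation for $\Per_z$ in place of $\Per$. The first step is to compute $\mathbb{E}_{X\sim\mathcal{G}}\big[|\Per_z(X)|^2\big]$. Expanding, $|\Per_z(X)|^2 = \sum_{\sigma,\tau\in S_n} z^{\ell(\sigma)}\overline{z}^{\ell(\tau)}\prod_{i=1}^n X_{i,\sigma(i)}\overline{X_{i,\tau(i)}}$, and since the entries of $X$ are i.i.d.\ standard complex Gaussians (so $\mathbb{E}[X]=\mathbb{E}[X^2]=0$ and $\mathbb{E}[|X|^2]=1$), every cross term with $\sigma\neq\tau$ has expectation zero: there is a row $i$ with $\sigma(i)\neq\tau(i)$, and the unpaired factor $X_{i,\sigma(i)}$ (resp.\ $\overline{X_{i,\tau(i)}}$) forces the expectation to vanish. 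The diagonal terms $\sigma=\tau$ each contribute $|z|^{2\ell(\sigma)}=1$ because $|z|=1$, so $\mathbb{E}_{X\sim\mathcal{G}}\big[|\Per_z(X)|^2\big]=n!$, exactly as in the $z=1$ case.

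Given this, by Markov's inequality $\Pr_{X\sim\mathcal{G}}\big[|\Per_z(X)|^2 \geq (2/\delta)\,n!\big] \leq \delta/2$ for any $\delta\in(0,1)$. The reduction is then immediate: on input $(X,\epsilon,\delta)$ with $X\sim\mathcal{G}$, call the $\PE_{\mathcal{G},z,\times}$ oracle on $X$ with multiplicative-precision parameter $\epsilon':=\epsilon\sqrt{\delta/2}$ and failure parameter $\delta':=\delta/2$, and output its answer $y$ unchanged. Let $E_1$ be the event that the oracle succeeds, i.e.\ $|y-\Per_z(X)|\leq \epsilon'|\Per_z(X)|$, which holds with probability at least $1-\delta/2$, and let $E_2$ be the event $|\Per_z(X)|^2 < (2/\delta)n!$, which holds with probability at least $1-\delta/2$ by the Markov bound. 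By a union bound, $E_1\cap E_2$ holds with probability at least $1-\delta$, and on this event $|y-\Per_z(X)| \leq \epsilon'|\Per_z(X)| < \epsilon'\sqrt{2/\delta}\,\sqrt{n!} = \epsilon\sqrt{n!}$, which is exactly the guarantee demanded by $\PE_{\mathcal{G},z,\pm}$ with $g=\sqrt{n!}$. Since $1/\epsilon'=\poly(1/\epsilon,1/\delta)$ and $1/\delta'=2/\delta$, the single oracle call costs $\poly(n,1/\epsilon,1/\delta)$ time, so the reduction is polynomial-time.

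I do not expect any serious obstacle: the reduction is essentially a rescaling of the precision parameter, trading a factor coming from the typical magnitude $\sqrt{n!}$ of $|\Per_z(X)|$. The one place that deserves care is the second-moment computation, specifically verifying that the off-diagonal ($\sigma\neq\tau$) terms vanish; this is where circularity of the complex Gaussian ($\mathbb{E}[X^2]=0$) and the fact that the $n$ factors of $\prod_i X_{i,\sigma(i)}$ lie in distinct rows are both used. The remaining bookkeeping---splitting the failure budget $\delta$ between the oracle and the Markov tail, and choosing $\epsilon'$ so that $\epsilon'\sqrt{2/\delta}=\epsilon$---is routine.
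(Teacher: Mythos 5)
Your proposal is correct and follows essentially the same route as the paper: it invokes $\mathbb{E}_{X\sim\mathcal{G}}[|\Per_z(X)|^2]=n!$ (the paper's Lemma~\ref{lem:first-m}), applies Markov's inequality with $k=\sqrt{2/\delta}$ to bound $|\Per_z(X)|$ by $k\sqrt{n!}$ except with probability $\delta/2$, and rescales the multiplicative precision to $\epsilon'=\epsilon/k$ before a union bound. The only difference is that you verify the second-moment computation inline (the paper defers it to the appendix), which is a harmless addition.
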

\begin{proof}
  Suppose we have a polynomial-time algorithm $M$ outputs a good multiplicative approximation to $\Per_z(X)$—that is, a $w$ such that:
  \begin{equation}
      |w-\Per_z(X)| \leq \epsilon |\Per_z(X)|.
  \end{equation}
with probability at least $1-\delta$ over $X$ by Markov’s inequality ($\Exp_{X\sim\G}[|\Per_z(X)|^2] = n!,$ see \lem{first-m}):
 \begin{equation}
\Pr_{X\sim\G}\left[|w-\Per_{z}(X)|>k \sqrt{n!}\right]<\frac{1}{k^2}
  \end{equation}
by the union bound,
\begin{align*}
    \Pr_{X\sim\G}\left[|w-\Per_{z}(X)|>\epsilon k \sqrt{n!}\right] 
    &\leq \Pr_{X\sim\G}\Big[|w-\Per_z(X)|>\epsilon |\Per_z(X)|\Big]+ \Pr_{X\sim\G}\Big[\epsilon|\Per_{z}(X)|>\epsilon k \sqrt{n!}\Big]\nonumber \\
    &\leq \delta+\frac{1}{k^2}.
\end{align*}

Letting $\delta=\frac{\delta'}{2}, k=\sqrt{\frac{2}{\delta}}$ and $\epsilon=\frac{\epsilon'}{k}$, we have $\delta+\frac{1}{k^2}=\delta'$ which finishes the proof.
\end{proof}

We now introduce the conjecture that generalize the original permanent anti-concentration conjecture. 

\begin{conjecture}[$z$-permanent anti-concentration conjecture ($z$-PACC)]\label{conj:pacc}
  There exists a polynomial $p$ such that for positive integer $n$, real number $\delta>0$ and $z\in\mathbb T$, 
  \begin{align}
    \Pr_{X\sim\G}\left[ |\Per_z(X)|^2 \geq \frac{\sqrt{n!}}{p(n,1/\delta)}\right] \geq 1-\delta. 
  \end{align}
\end{conjecture}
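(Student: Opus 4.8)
The plan is to adapt the method Aaronson and Arkhipov developed for the original permanent anti-concentration conjecture to an arbitrary $z\in\mathbb T$, which also means that (as for the $z=1$ case) I would expect only a weak, inverse-polynomial-probability form to be provable unconditionally. The first step is to reduce \conj{pacc} to a pair of moment estimates for the random variable $|\Per_z(X)|^2$, followed by an anti-concentration-near-zero argument for the finer $1-\delta$ guarantee: by the Paley--Zygmund inequality, if $\Exp_{X\sim\G}[|\Per_z(X)|^2]$ is large while $\Exp_{X\sim\G}[|\Per_z(X)|^4]$ is not too large, then $|\Per_z(X)|^2$ cannot be much smaller than its mean except with non-negligible probability, and pushing the failure probability down to $\delta$ requires an extra idea.

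First I would record the second moment. Writing $|\Per_z(X)|^2=\sum_{\sigma,\tau\in S_n} z^{\ell(\sigma)-\ell(\tau)}\prod_i X_{i,\sigma(i)}\overline{X_{i,\tau(i)}}$ and using $\Exp[X_{i,j}\overline{X_{k,l}}]=\delta_{i,k}\,\delta_{j,l}$ for i.i.d.\ standard complex Gaussians, only the diagonal terms $\sigma=\tau$ survive, and each contributes $|z^{\ell(\sigma)}|^2=1$ since $|z|=1$; hence $\Exp_{X\sim\G}[|\Per_z(X)|^2]=n!$ uniformly in $z$, which is \lem{first-m}. Next I would bound the fourth moment. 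Expanding $|\Per_z(X)|^4$ over quadruples $(\sigma_1,\sigma_2,\tau_1,\tau_2)$ and applying Wick's theorem, such a quadruple contributes only when $\{\sigma_1(i),\sigma_2(i)\}=\{\tau_1(i),\tau_2(i)\}$ as multisets for every row $i$, with combinatorial weight $2^{\#\{i:\sigma_1(i)=\sigma_2(i)\}}$ times the unit-modulus phase $z^{\ell(\sigma_1)+\ell(\sigma_2)-\ell(\tau_1)-\ell(\tau_2)}$. Bounding the phases by $1$ termwise gives
\[
  \Exp_{X\sim\G}\bigl[|\Per_z(X)|^4\bigr]\ \le\ \Exp_{X\sim\G}\bigl[|\Per(X)|^4\bigr],
\]
so the $z$-permanent's fourth moment never exceeds that of the ordinary permanent; one would then either cite the known estimate for the right-hand side or redo the same combinatorics directly, aiming for a bound of the form $\poly(n)\cdot(n!)^2$ (ideally exhibiting cancellation among the $z$-phases).

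Combining these, Paley--Zygmund yields $\Pr_{X\sim\G}[\,|\Per_z(X)|^2\ge\theta\, n!\,]\ge (1-\theta)^2/\poly(n)$ for each $\theta\in(0,1)$, i.e.\ a weak version of the conjecture. To reach probability $1-\delta$, the natural route uses the expansion theorem \eq{expan}: conditioning on all entries outside the first row, $\Per_z(X)$ becomes a complex Gaussian whose variance is $V:=\sum_{k=1}^n|\Per_z(X^{1k})|^2$, so that $\Pr[\,|\Per_z(X)|^2\le s\mid V\,]=O(s/V)$, and it remains only to show $V\ge n!/\poly(n,1/\delta)$ with probability at least $1-\delta$. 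I expect this last point to be the main obstacle: $V$ is a sum of squared $z$-permanents of correlated $(n-1)\times(n-1)$ minors, so its anti-concentration is itself a recursive instance of \conj{pacc}, and closing the recursion (or replacing it by a direct argument in the spirit of Tao--Vu's work on permanents of random matrices) is precisely the difficulty that keeps the original conjecture open. Accordingly, I would present the weak version as a theorem and leave the full $1-\delta$ statement as a conjecture.
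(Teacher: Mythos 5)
This statement is a conjecture (the $z$-analogue of the Aaronson--Arkhipov PACC), so the paper offers no proof of it, and your conclusion that only a weak form is provable while the full $1-\delta$ statement must be left open is precisely the paper's position. Your sketch of that weak form --- second moment $\Exp_{X\sim\G}[|\Per_z(X)|^2]=n!$, fourth moment bounded termwise by the $z=1$ value $(n!)^2(n+1)$, then a second-moment (Paley--Zygmund/Chebyshev) argument --- is essentially identical to what the paper proves in \app{mom} as \thm{weak-z-pacc}, so you have taken the same approach as the paper on everything it actually establishes.
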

In particular, $z=-1$ is proved by Aaronson \cite{aaronson2011computational} and $z=1$ is standard conjecture that widely believed to be True.

\begin{corollary}
Assume that the $1$-PACC holds. 
Suppose $\GPE_{\times}$ is $\sharpP$-hard and the polynomial hierarchy is infinite, then $\O_{\times}$ does not admit a classical efficient algorithm, i.e., there is no simulation of $\O_{\times}$ running in time $\poly(n,\frac{1}{\epsilon},\frac{1}{\delta})$.
\end{corollary}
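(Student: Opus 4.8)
The plan is to reduce to the \emph{additive} case already handled by \thm{GPE-z}(2), taking $\H=\G=\N_{\mathbb C}(0,1)^{n\times n}$ throughout. Note that \thm{GPE-z}(1) cannot be invoked directly: its hypothesis that $\H$ be non-negative is exactly what \lem{multiplicative-reduction} exploits (for non-negative coefficients one has $|\Per_z(X)|\le\Per_1(X)=\Per(X)$), whereas a complex Gaussian $X$ has $\Per(X)\notin\mathbb R_{\ge 0}$ almost surely. Instead I will (i) convert an efficient classical simulation of $\O_\times$ into one of $\O_\pm$ for $\H=\G$ with $g(n)=\sqrt{n!}$, (ii) show $(\PE_{\G,\pm},\sqrt{n!})=\GPE_\pm$ is $\sharpP$-hard using the $1$-PACC, and (iii) apply \thm{GPE-z}(2).

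For (i) I would run the $(d+1)$-point analogue of \lem{pm-to-times}. Given $X\sim\G$, call the assumed simulation of $\O_\times(X,\epsilon',\delta')$ to obtain $y=(y_1,\ldots,y_{d+1})$ with $|y_i-\Per_{z_i}(X)|\le\epsilon'|\Per_{z_i}(X)|$ for all $i$ with probability $\ge 1-\delta'$. By \lem{first-m}, $\Exp_{X\sim\G}[|\Per_{z_i}(X)|^2]=n!$ for every $z_i\in\mathbb T$, so Markov's inequality and a union bound over the $d+1=\binom{n}{2}+1$ points give $\Pr_X[\exists i:|\Per_{z_i}(X)|>k\sqrt{n!}]\le (d+1)/k^2$. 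Choosing $k=\sqrt{2(d+1)/\delta}$, $\delta'=\delta/2$ and $\epsilon'=\epsilon/k$ yields, with probability $\ge 1-\delta$, $|y_i-\Per_{z_i}(X)|\le\epsilon\sqrt{n!}$ for all $i$; since $k$ and $d+1$ are $\poly(n,1/\delta)$, the running time stays $\poly(n,1/\epsilon,1/\delta)$. This is precisely an efficient simulation of $\O_\pm(X,\epsilon,\delta)$ with $g=\sqrt{n!}$.

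For (ii), since $\GPE_\times$ is assumed $\sharpP$-hard it suffices to give a Turing reduction $\GPE_\times\le_T\GPE_\pm$, which is the classical Aaronson--Arkhipov anti-concentration argument: from an additive solver producing $\tilde w$ with $|\tilde w-\Per(X)|\le\epsilon\sqrt{n!}$ (probability $1-\delta$), the $1$-PACC (\conj{pacc} with $z=1$) gives $|\Per(X)|\ge\sqrt{n!}/p(n,1/\delta)$ (probability $1-\delta$), whence on that event $|\tilde w-\Per(X)|\le\epsilon\,p(n,1/\delta)\,|\Per(X)|$, so rescaling $\epsilon\mapsto\epsilon/p(n,1/\delta)$ turns this into a multiplicative solver; hence $\P^{\sharpP}\subseteq\BPP^{\GPE_\times}\subseteq\BPP^{\GPE_\pm}$, i.e., $(\PE_{\G,\pm},\sqrt{n!})$ is $\sharpP$-hard. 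Combining: by \thm{GPE-z}(2) with $\H=\G$, $g=\sqrt{n!}$ (whose hypotheses --- $\GPE_\pm$ being $\sharpP$-hard and the polynomial hierarchy infinite --- now both hold), the oracle $\O_\pm$ has no efficient classical simulation, and then the contrapositive of (i) shows $\O_\times$ has none either.

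The step needing the most care is (i): although morally just \lem{pm-to-times}, one must check that simultaneously controlling all $d+1$ moduli $|\Per_{z_i}(X)|$ via Markov and a union bound costs only a polynomial factor in $1/\delta$, so that the composed algorithm still runs in $\poly(n,1/\epsilon,1/\delta)$. The underlying reason we cannot shortcut through \thm{GPE-z}(1) is precisely that the Gaussian ensemble is not non-negative, so the cleaner multiplicative interpolation of \lem{multiplicative-reduction} is unavailable and one must pay the $\sqrt{n!}$-normalization detour through the additive problem and the $1$-PACC.
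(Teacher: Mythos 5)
Your proposal is correct and follows essentially the same route as the paper's proof: turn the assumed simulation of $\O_\times$ into a simulation of $\O_\pm$ with $g=\sqrt{n!}$ via a $(d+1)$-point version of \lem{pm-to-times}, interpolate using the machinery of \thm{GPE-z}(2), and use the $1$-PACC to pass between $\GPE_\pm$ and $\GPE_\times$. The only differences are organizational --- you invoke the PACC up front to conclude $\GPE_\pm$ is $\sharpP$-hard and then cite \thm{GPE-z}(2), whereas the paper converts the resulting solver into a $\GPE_\times$ solver at the end --- together with your (welcome) explicit Markov/union-bound bookkeeping over the $d+1$ points, which the paper leaves implicit.
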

\begin{proof}
    Suppose $\PE_{\mathcal{G},z,\times}$ admit a classical efficient algorithm in $\poly(n,\frac{1}{\epsilon},\frac{1}{\delta})$.  
    Then one can solve $\PE_{\mathcal{G},z,\times}$ then solving $\GPE_{z,\times}$ for $d+1$ points. 
    By \lem{pm-to-times}, it solves $(\PE_{\mathcal{G},z,\pm})$ with $\sqrt{n!}$ for $d+1$ points which solves $\GPE_{\pm}$. Assuming $1$-PACC \cite{aaronson2011computational}, $\GPE_{\pm}$ is polynomial equivalent to $\GPE_{\times}.$ Hence there is classical polynomial solves $\GPE_{\times}$ which contradicts our assumption that polynomial hierarchy is infinite.
\end{proof}

Finally, we show that approximating $\Per_z(X)$ is as hard as approximating $\Per_{z^*}(X)$ for any distribution $X$ such that $X,X^*$ are identically distributed. A similar proof idea applies to $|\Per_z(X)|^2$ as well.
  \begin{lemma}
   For any distribution $X$ such that $X, X^*$ are identically distributed, if there is an algorithm running in time $\poly(n,1/\delta,1/\epsilon)$ for approximating $\Per_z(X)$ to within multiplicative (resp. additive) error $\epsilon$ with probability $1-\delta$, then there is an algorithm which approximates $\Per_{z^*}(X)$ to within multiplicative (resp. additive) error $\epsilon$ with probability $1-\delta$.
  \end{lemma}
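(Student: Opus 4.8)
The plan is to exploit the elementary fact that complex conjugation converts a $z$-permanent into a $z^*$-permanent of the conjugated matrix. Since the inversion number $\ell(\sigma)$ is a non-negative integer, for every matrix $Y\in\mathbb C^{n\times n}$ we have
\begin{align*}
  \overline{\Per_z(Y)}
  = \overline{\sum_{\sigma\in S_n} z^{\ell(\sigma)}\prod_i Y_{i,\sigma(i)}}
  = \sum_{\sigma\in S_n} (z^*)^{\ell(\sigma)}\prod_i \overline{Y_{i,\sigma(i)}}
  = \Per_{z^*}(Y^*).
\end{align*}
Applying this with $Y=X^*$ gives the key identity $\Per_{z^*}(X)=\overline{\Per_z(X^*)}$, and in particular $|\Per_{z^*}(X)|=|\Per_z(X^*)|$.

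First I would use this to build the reduction. Given the assumed algorithm $\A$ for $\Per_z$, define $\A'$ on input $X$ as follows: conjugate every entry of $X$ to form $X^*$, run $w\gets\A(X^*,\epsilon,\delta)$, and output $w^*$. Because $X$ and $X^*$ are identically distributed by hypothesis, feeding $X^*$ to $\A$ is the same as feeding a fresh sample from the distribution, so with probability at least $1-\delta$ the output satisfies $|w-\Per_z(X^*)|\le\epsilon|\Per_z(X^*)|$. Conjugation is an isometry of $\mathbb C$, hence
\begin{align*}
  |w^*-\Per_{z^*}(X)| = |w^*-\overline{\Per_z(X^*)}| = |w-\Per_z(X^*)| \le \epsilon|\Per_z(X^*)| = \epsilon|\Per_{z^*}(X)|,
\end{align*}
so $w^*$ is the desired multiplicative approximation. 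The overhead of $\A'$ over $\A$ is only the $O(n^2)$ cost of conjugating the matrix, so the running time remains $\poly(n,1/\delta,1/\epsilon)$.

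The additive case is handled identically: if $g$ is the (real) accuracy function, then $|w-\Per_z(X^*)|\le\epsilon g(n)$ implies $|w^*-\Per_{z^*}(X)|\le\epsilon g(n)$ since conjugation preserves absolute values. For the squared-modulus variant one notes $|\Per_{z^*}(X)|^2=|\Per_z(X^*)|^2$, so the same reduction (omitting the final conjugation of the output) turns an approximator of $|\Per_z(\cdot)|^2$ into one of $|\Per_{z^*}(\cdot)|^2$.

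There is essentially no hard step here; the proof is a symmetry argument. The only points that need care are: that $\ell(\sigma)$ is a real (indeed integer) exponent, so conjugation commutes through the sum; that conjugation preserves the metric used to measure error; and that the hypothesis ``$X$ and $X^*$ are identically distributed'' is exactly what is needed to transfer the success probability. If anything is a potential pitfall, it is ensuring this distributional hypothesis is both used and stated correctly --- it holds for, e.g., i.i.d. complex Gaussian entries or entries drawn from a symmetric-about-zero distribution, but would fail for a distribution that is not conjugation-invariant.
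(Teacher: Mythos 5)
Your proposal is correct and follows essentially the same route as the paper: both use the identity $\Per_{z^*}(X)=\overline{\Per_z(X^*)}$, run the assumed algorithm on $X^*$ (which is valid because $X$ and $X^*$ are identically distributed), conjugate the output, and observe that conjugation preserves absolute values so both the multiplicative and additive error bounds carry over. Your explicit expansion of the sum to justify the conjugation identity and your remark on the squared-modulus variant match what the paper states or uses implicitly.
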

  \begin{proof}
    By the assumption, let $\A(X,0^{1/\epsilon},0^{1/\delta})$ be an algorithm that approximates $\Per_z(X)$ to error $\epsilon$ with probability $1-\delta$. 

    Since $\Per_{z^*}(X)^* = \Per_z(X^*)$ and $X,X^*$ are identically distributed, for computing $\Per_{z^*}(X)$, the algorithm $\B$ runs $\A(X^*,0^{1/\epsilon},0^{1/\delta})$ to obtain an approximation $\tilde P$ of $\Per_z(X^*)$ with probability $1-\delta$ over $X^*$.
    Then since the absolute value of a complex number does not change by taking complex conjugate, 
    \begin{align}\nonumber
      \Pr_X[|\tilde P - \Per_{z}(X^*)|\leq \alpha(z,X^*) ] 
      &= \Pr_X[|\tilde P^* - \Per_{z^*}(X)|\leq \alpha(z^*,X) ] \\
      &\geq 1-\delta.
    \end{align}
    Here for multiplicative error 
    \begin{align}
    \alpha(z,X^*)=\epsilon |\Per_z(X^*)| = \epsilon |\Per_{z^*}(X)| = \alpha(z^*,X).
      \end{align}
    For additive error, 
    \begin{align}
      \alpha(z,X^*) 
      = \epsilon\sqrt{n!}
      = \alpha(z^*,X).
    \end{align}
  \end{proof}

\section{Open Problems and Discussion}\label{sec:final}

In this paper, we studied the computational complexity of quantum determinants, a $q$-deformation of matrix permanents. 
The $q$-permanent of a matrix $X$ is defined and it generalizes both determinant and permanent. It is shown that computing the $q$-permanent is $\mathsf{Mod}_p\mathsf{P}$-hard for a primitive $m$-th root of unity $q$ for odd prime power $m=p^k$. This result implies that an efficient algorithm for computing $q$-permanent would result in a collapse of the polynomial hierarchy. 
We also showed that an efficient approximation algorithm for $q$-permanent would also imply a collapse of the polynomial hierarchy. The hardness of computing $q$-permanent remains to hold for a wide range of distributions by random self-reducibility. The techniques developed for the hardness of permanent can be generalized to $z$-permanents for both worst and average cases.

Here we include a few interesting questions that are not solved in this paper. 
\begin{itemize}
  \item 
  It is known that there is a polynomial-time algorithm for determining whether $\Per(X)\neq 0$ for $\{0,1\}$-matrix $X$ by determining if there is a perfect matching in a bipartite graph whose adjacent matrix is $X$.  
  Is the property about $z$-permanent efficiently computable for $z\neq 1$, i.e., is there a polynomial-time algorithm for determining whether $\Per_z(X)\neq 0$? 
  
  \item We have shown that \prob{alg} is in $\NP$. 
  Does \prob{alg} have polynomial-time algorithm? 
  
  \item 
  In this work, we have discussed the computational complexity of $z$-permanent for $z\in\mathbb T$.
  It is natural to consider other subsets of the complex numbers.
  For example, what is the computational complexity of $z$-permanent for real $z$?
  
  \item As explained in \sec{prime-power}, the techniques we used for proving the hardness of $\zeta_m$-permanent can only work for the special case where $m$ is a prime power.
  Thus our result does not rule out the existence of an efficient algorithm for non-prime-power $m$. 
  Is there a polynomial-time algorithm computing $\zeta_m$-permanent where $m$ is not prime power?
  
  \item 
  First proposed by Aaronson and Arkhipov, the permanent anti-concentration conjecture is crucial for showing the equivalence of additive and multiplicative approximation of $1$-permanent for Gaussian matrices.

  Relation between the permanent anti-concentration conjecture and the $z$-permanent anti-concentration \conj{pacc}. Can one prove one from another or use polynomial interpolation to give multiple-to-one reduction?

  \item Higher moment of $|\Per_{z}(X)|^2$ when $X$ is drawn from complex Gaussian $\mathcal{G}$. 
\end{itemize}

\section*{Acknowledgement}
We thank Scott Aaronson, Mohammad Hafezi, and Dominik Hangleiter for useful discussions. EJ was supported by ARO W911NF-15-1-0397, National
Science Foundation QLCI grant OMA-2120757, AFOSRMURI FA9550-19-1-0399, Department of Energy QSA program, and the Simons Foundation.
SHH acknowledges the support from Simons Investigator in Computer Science award, award number 510817.

\bibliographystyle{alpha}
\bibliography{references}

\newcommand{\etalchar}[1]{$^{#1}$}
\begin{thebibliography}{GRASRA96}

\bibitem[AA11]{aaronson2011computational}
Scott Aaronson and Alex Arkhipov.
\newblock The computational complexity of linear optics.
\newblock In {\em Proceedings of the forty-third annual ACM symposium on Theory
  of computing}, pages 333--342, 2011.

\bibitem[Ben87]{benkart1987abstract}
Georgia Benkart.
\newblock Abstract algebra, by in herstein.
\newblock {\em The American Mathematical Monthly}, 94(8):804--806, 1987.

\bibitem[BKW19]{bjorklund2019generalized}
Andreas Bj{\"o}rklund, Petteri Kaski, and Ryan Williams.
\newblock Generalized kakeya sets for polynomial evaluation and faster
  computation of fermionants.
\newblock {\em Algorithmica}, 81(10):4010--4028, 2019.

\bibitem[BL94]{bapat1994inequalities}
RB~Bapat and AK~Lal.
\newblock Inequalities for the q-permanent.
\newblock {\em Linear Algebra and its Applications}, 197:397--409, 1994.

\bibitem[BLW17]{bozejko2017fock}
Marek Bo{\.z}ejko, Eugene Lytvynov, and Janusz Wysocza{\'n}ski.
\newblock Fock representations of q-deformed commutation relations.
\newblock {\em Journal of Mathematical Physics}, 58(7):073501, 2017.

\bibitem[BP86]{bini1986polynomial}
Dario Bini and Victor Pan.
\newblock Polynomial division and its computational complexity.
\newblock {\em Journal of Complexity}, 2(3):179--203, 1986.

\bibitem[Bro86]{broder1986hard}
Andrei~Z Broder.
\newblock How hard is it to marry at random?(on the approximation of the
  permanent).
\newblock In {\em Proceedings of the eighteenth annual ACM symposium on Theory
  of computing}, pages 50--58, 1986.

\bibitem[BS91]{bozejko1991example}
Marek Bo{\.z}ejko and Roland Speicher.
\newblock An example of a generalized brownian motion.
\newblock {\em Communications in Mathematical Physics}, 137(3):519--531, 1991.

\bibitem[B{\"u}r00]{burgisser2000computational}
Peter B{\"u}rgisser.
\newblock The computational complexity of immanants.
\newblock {\em SIAM Journal on Computing}, 30(3):1023--1040, 2000.

\bibitem[BV]{bakaspolynomial}
ODYSSEAS BAKAS and MARCO VITTURI.
\newblock The polynomial method lecture 4 applications in technology.

\bibitem[Cur21]{curticapean2021full}
Radu Curticapean.
\newblock A full complexity dichotomy for immanant families.
\newblock In {\em Proceedings of the 53rd Annual ACM SIGACT Symposium on Theory
  of Computing}, pages 1770--1783, 2021.

\bibitem[CX15]{curticapean2015parameterizing}
Radu Curticapean and Mingji Xia.
\newblock Parameterizing the permanent: Genus, apices, minors, evaluation mod
  2k.
\newblock In {\em 2015 IEEE 56th Annual Symposium on Foundations of Computer
  Science}, pages 994--1009. IEEE, 2015.

\bibitem[CX21]{curticapean2021parameterizing}
Radu Curticapean and Mingji Xia.
\newblock Parameterizing the permanent: Hardness for $ k\_8 $-minor-free
  graphs.
\newblock {\em arXiv preprint arXiv:2108.12879}, 2021.

\bibitem[DF10]{da2010mu}
CM~Da~Fonseca.
\newblock The $\mu$-permanent of a tridiagonal matrix, orthogonal polynomials,
  and chain sequences.
\newblock {\em Linear algebra and its applications}, 432(5):1258--1266, 2010.

\bibitem[dF18]{da2018mu}
Carlos~M da~Fonseca.
\newblock The $\mu$ -permanent revisited.
\newblock {\em arXiv preprint arXiv:1804.02231}, 2018.

\bibitem[DJ21]{datta2021parallel}
Samir Datta and Kishlaya Jaiswal.
\newblock Parallel polynomial permanent mod powers of 2 and shortest disjoint
  cycles.
\newblock {\em arXiv preprint arXiv:2106.00714}, 2021.

\bibitem[dRA13]{de2013determinant}
Nicolas de~Rugy-Altherre.
\newblock Determinant versus permanent: salvation via generalization? the
  algebraic complexity of the fermionant and the immanant.
\newblock {\em arXiv preprint arXiv:1309.2156}, 2013.

\bibitem[DS]{citation-0}
EDUARDO~MARQUES DE~SA.
\newblock On q-permanent expansions and a theorem on cycle surgery.

\bibitem[dS18]{de2018noncrossing}
Eduardo~Marques de~S{\'a}.
\newblock Noncrossing partitions, noncrossing graphs, and q-permanental
  equations.
\newblock {\em Linear Algebra and its Applications}, 541:36--53, 2018.

\bibitem[EG01]{eisenberg2001generalization}
Bennett Eisenberg and BK~Ghosh.
\newblock A generalization of markov's inequality.
\newblock {\em Statistics \& probability letters}, 53(1):59--65, 2001.

\bibitem[EM18]{eldar2018approximating}
Lior Eldar and Saeed Mehraban.
\newblock Approximating the permanent of a random matrix with vanishing mean,
  2018.

\bibitem[ER98]{etingof1998quantum}
Pavel Etingof and Vladimir Retakh.
\newblock Quantum determinants and quasideterminants.
\newblock {\em arXiv preprint math/9808065}, 1998.

\bibitem[FK06]{frohlich2006quantum}
J{\"u}rg Fr{\"o}hlich and Thomas Kerler.
\newblock {\em Quantum groups, quantum categories and quantum field theory}.
\newblock Springer, 2006.

\bibitem[GM04]{goldin2004fock}
Gerald~A Goldin and Shahn Majid.
\newblock On the fock space for nonrelativistic anyon fields and braided tensor
  products.
\newblock {\em Journal of mathematical physics}, 45(10):3770--3787, 2004.

\bibitem[GRASRA96]{gomez1996quantum}
C{\'e}sar G{\'o}mez, Mart{\'\i} Ruiz-Altaba, Germ{\'a}n Sierra, and Marti
  Ruiz-Altaba.
\newblock {\em Quantum groups in two-dimensional physics}, volume 139.
\newblock Cambridge University Press Cambridge, 1996.

\bibitem[Har85]{hartmann1985complexity}
Werner Hartmann.
\newblock On the complexity of immanants.
\newblock {\em Linear and Multilinear Algebra}, 18(2):127--140, 1985.

\bibitem[HHEG20]{haferkamp2020contracting}
Jonas Haferkamp, Dominik Hangleiter, Jens Eisert, and Marek Gluza.
\newblock Contracting projected entangled pair states is average-case hard.
\newblock {\em Physical Review Research}, 2(1):013010, 2020.

\bibitem[HJ12]{horn2012matrix}
Roger~A Horn and Charles~R Johnson.
\newblock {\em Matrix analysis}.
\newblock Cambridge university press, 2012.

\bibitem[Lal98]{lal1998inequalities}
AK~Lal.
\newblock Inequalities for the q-permanent. ii.
\newblock {\em Linear algebra and its applications}, 274(1-3):1--16, 1998.

\bibitem[Lan12]{lang2012cyclotomic}
Serge Lang.
\newblock {\em Cyclotomic fields I and II}, volume 121.
\newblock Springer Science \& Business Media, 2012.

\bibitem[Lan13]{lang2013algebraic}
Serge Lang.
\newblock {\em Algebraic number theory}, volume 110.
\newblock Springer Science \& Business Media, 2013.

\bibitem[Lip91]{lipton1991new}
Richard Lipton.
\newblock New directions in testing.
\newblock {\em Distributed computing and cryptography}, 2:191--202, 1991.

\bibitem[LW12]{levesque2012approximation}
Claude Levesque and Michel Waldschmidt.
\newblock Approximation of an algebraic number by products of rational numbers
  and units.
\newblock {\em Journal of the Australian Mathematical Society},
  93(1-2):121--131, 2012.

\bibitem[M{\etalchar{+}}97]{mahajan1997determinant}
Meena Mahajan et~al.
\newblock Determinant: Combinatorics, algorithms, and complexity.
\newblock {\em Chicago Journal of Theoretical Computer Science}, 1997(5), 1997.

\bibitem[Maj00]{majid2000foundations}
Shahn Majid.
\newblock {\em Foundations of quantum group theory}.
\newblock Cambridge university press, 2000.

\bibitem[MM11]{mertens2011complexity}
Stephan Mertens and Cristopher Moore.
\newblock The complexity of the fermionant, and immanants of constant width.
\newblock {\em arXiv preprint arXiv:1110.1821}, 2011.

\bibitem[Neu13]{neukirch2013algebraic}
J{\"u}rgen Neukirch.
\newblock {\em Algebraic number theory}, volume 322.
\newblock Springer Science \& Business Media, 2013.

\bibitem[Nez21]{nezami2021permanent}
Sepehr Nezami.
\newblock Permanent of random matrices from representation theory: moments,
  numerics, concentration, and comments on hardness of boson-sampling.
\newblock {\em arXiv preprint arXiv:2104.06423}, 2021.

\bibitem[PS90]{pasquier1990common}
V~Pasquier and H~Saleur.
\newblock Common structures between finite systems and conformal field theories
  through quantum groups.
\newblock {\em Nuclear Physics B}, 330(2-3):523--556, 1990.

\bibitem[R{\etalchar{+}}76]{rudin1976principles}
Walter Rudin et~al.
\newblock {\em Principles of mathematical analysis}, volume~3.
\newblock McGraw-hill New York, 1976.

\bibitem[RA13]{rugy2013determinant}
Nicolas~de Rugy-Altherre.
\newblock Determinant versus permanent: salvation via generalization?
\newblock In {\em Conference on Computability in Europe}, pages 87--96.
  Springer, 2013.

\bibitem[RS62]{rosser1962approximate}
J~Barkley Rosser and Lowell Schoenfeld.
\newblock Approximate formulas for some functions of prime numbers.
\newblock {\em Illinois Journal of Mathematics}, 6(1):64--94, 1962.

\bibitem[Spi20]{spivak2020immanants}
Dylan Spivak.
\newblock {\em Immanants and their applications in quantum optics}.
\newblock PhD thesis, 2020.

\bibitem[TAG93]{tagawa1993q}
Hiroyuki TAGAWA.
\newblock q-analogues of determinants and symmetric chain decompositions.
\newblock {\em Tokyo Journal of Mathematics}, 16(2):311--320, 1993.

\bibitem[TO92]{toda1992counting}
Seinosuke Toda and Mitsunori Ogiwara.
\newblock Counting classes are at least as hard as the polynomial-time
  hierarchy.
\newblock {\em SIAM Journal on Computing}, 21(2):316--328, 1992.

\bibitem[Tod91]{toda1991pp}
Seinosuke Toda.
\newblock Pp is as hard as the polynomial-time hierarchy.
\newblock {\em SIAM Journal on Computing}, 20(5):865--877, 1991.

\bibitem[Val79a]{valiant1979completeness}
Leslie~G Valiant.
\newblock Completeness classes in algebra.
\newblock In {\em Proceedings of the eleventh annual ACM symposium on Theory of
  computing}, pages 249--261, 1979.

\bibitem[Val79b]{valiant1979complexity}
Leslie~G Valiant.
\newblock The complexity of computing the permanent.
\newblock {\em Theoretical computer science}, 8(2):189--201, 1979.

\bibitem[Vau75]{vaughan1975bounds}
RC~Vaughan.
\newblock Bounds for the coefficients of cyclotomic polynomials.
\newblock {\em Michigan Mathematical Journal}, 21(4):289--295, 1975.

\bibitem[Wei98]{weiss1998algebraic}
Edwin Weiss.
\newblock {\em Algebraic number theory}.
\newblock Courier Corporation, 1998.

\bibitem[Wei04]{weisstein2004fermat}
Eric~W Weisstein.
\newblock Fermat's little theorem.
\newblock {\em https://mathworld. wolfram. com/}, 2004.

\bibitem[Yan91]{yang1991q}
Kung-Wei Yang.
\newblock q--determinants and permutations.
\newblock {\em The Fibonacci Quarterly}, 29:160--163, 1991.

\end{thebibliography}

\appendix

\section{Moments of $z$-permanents and Anti-Concentration Conjecture}\label{app:mom}

Anti-concentration not only plays an important rule in the Boson Sampling \cite{aaronson2011computational} but also has its own interests in random matrix theory \cite{eldar2018approximating,nezami2021permanent}. In this section, we discuss the moments of $z$-permanents drawn from i.i.d complex Gaussian and anti-concentration conjecture of $z$-permanent. 

First, we explicitly calculate the first moment of $|\Per_z(X)|^2$. Interestingly, the result does not depend on $z$.
\begin{lemma}\label{lem:first-m}
  For $z\in\mathbb T$, $\Exp_{X\sim\mathcal{G}}[|\Per_z(X)|^2] = n!$
\end{lemma}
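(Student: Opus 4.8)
The plan is to expand $|\Per_z(X)|^2$ as a double sum over pairs of permutations and evaluate the expectation term by term, exploiting independence of the matrix entries. Writing $\Per_z(X) = \sum_{\sigma\in S_n} z^{\ell(\sigma)}\prod_{i=1}^n X_{i,\sigma(i)}$, I would first record
\[
  |\Per_z(X)|^2 = \sum_{\sigma,\tau\in S_n} z^{\ell(\sigma)}\,\overline{z^{\ell(\tau)}}\;\prod_{i=1}^n X_{i,\sigma(i)}\prod_{j=1}^n \overline{X_{j,\tau(j)}},
\]
and then push the expectation inside this finite sum.

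For a fixed pair $(\sigma,\tau)$, since the entries $X_{ij}$ are independent, the expectation factors over the distinct entries: $\Exp\big[\prod_{(i,j)} X_{ij}^{\,a_{ij}}\,\overline{X_{ij}}^{\,b_{ij}}\big] = \prod_{(i,j)}\Exp\big[X_{ij}^{\,a_{ij}}\overline{X_{ij}}^{\,b_{ij}}\big]$, where $a_{ij}$ (resp. $b_{ij}$) counts how often $X_{ij}$ occurs in the unconjugated (resp. conjugated) product; because $\sigma$ and $\tau$ are permutations, each $a_{ij},b_{ij}\in\{0,1\}$. Using that $X_{ij}\sim\mathcal N_{\mathbb C}(0,1)$ is circularly symmetric — so in polar form $\Exp[X_{ij}^{\,a}\overline{X_{ij}}^{\,b}]$ vanishes unless $a=b$, in which case it equals $1$ — the product is nonzero only if $a_{ij}=b_{ij}$ for every $(i,j)$, i.e. the index families $\{(i,\sigma(i))\}_i$ and $\{(j,\tau(j))\}_j$ coincide; since both have all first coordinates distinct, this forces $\sigma=\tau$. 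On the diagonal $\sigma=\tau$ the phase is $z^{\ell(\sigma)}\overline{z^{\ell(\sigma)}}=|z|^{2\ell(\sigma)}=1$ as $z\in\mathbb T$, and the entry expectation is $\prod_{i=1}^n\Exp[|X_{i,\sigma(i)}|^2]=1$.

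Summing the surviving diagonal terms over all $\sigma\in S_n$ then yields $\Exp_{X\sim\mathcal G}[|\Per_z(X)|^2]=|S_n|=n!$, recovering the classical second-moment identity for the permanent, with the phases harmlessly cancelling because $|z|=1$. The argument is elementary; the only point needing a word of care is the Gaussian moment bookkeeping, namely that the cross terms with $\sigma\neq\tau$ contribute zero. This relies on circular symmetry of $\mathcal N_{\mathbb C}(0,1)$ — equivalently $\Exp[X_{ij}^2]=0$ together with $\Exp[|X_{ij}|^2]=1$, matching the convention of \dfn{autocor} in which the real and imaginary parts each have variance $1/2$ — and I do not anticipate any further obstacle.
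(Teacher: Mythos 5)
Your proof is correct and follows essentially the same route as the paper: expand $|\Per_z(X)|^2$ into a double sum over pairs of permutations, use independence and zero mean of the Gaussian entries to kill every off-diagonal term $\sigma\neq\tau$, and observe that on the diagonal the phase $|z|^{2\ell(\sigma)}=1$ and unit variance give exactly $n!$ surviving terms. The paper merely packages the entrywise computation as $\Exp[X_{i,\sigma(i)}X^*_{i,\alpha(i)}]=\Id[\sigma(i)=\alpha(i)]$, which is the same bookkeeping you carry out via the exponents $a_{ij},b_{ij}$.
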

\begin{proof}
  By direct calculation,
  \begin{align}\nonumber
    \Exp_{X\sim\G}[|\Per_z(X)|^2]
    &= \sum_{\sigma,\alpha} z^{\ell(\sigma)-\ell(\alpha)} \Exp_{X\sim\G}\left[\prod_{i}X_{i,\sigma(i)} X^*_{i,\alpha(i)} \right] \\\nonumber
    &= \sum_{\sigma,\alpha} z^{\ell(\sigma)-\ell(\alpha)} \prod_{i}\Id[\sigma(i)=\alpha(i)] \\\nonumber
    &= \sum_{\sigma,\alpha} z^{\ell(\sigma)-\ell(\alpha)} \cdot \Id[\sigma=\alpha] \\
    &= \sum_{\sigma} 1  = n!.
  \end{align}
\end{proof}
In fact, if we replace $\G$ with a random matrix whose each entry is independently sampled from any distribution of zero mean and unit variance, the second moment is $n!$.
Now we want to calculate $\Exp_{X\sim\G}[|\Per_z(X)|^4]$. 
We first recall the computation of $\Exp_{X\sim\G}[|\Per(X)|^4]$ in \cite{aaronson2011computational}.
\begin{lemma}
  For $z=1$, 
  $\Exp_{X\sim\G}[|\Per_z(X)|^4]=(n!)^2 (n+1)$.
\end{lemma}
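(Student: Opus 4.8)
\emph{Proof proposal.} The plan is to expand the fourth moment over four permutations and reduce the Gaussian expectation to a cycle-counting identity. Writing $\Per(X)=\sum_{\sigma\in S_n}\prod_i X_{i,\sigma(i)}$ and using $z=1$, one has
\[
\Exp_{X\sim\G}[|\Per(X)|^4]
=\sum_{\sigma_1,\sigma_2,\tau_1,\tau_2\in S_n}\ \prod_{i=1}^n
\Exp\left[X_{i,\sigma_1(i)}X_{i,\sigma_2(i)}\overline{X_{i,\tau_1(i)}}\,\overline{X_{i,\tau_2(i)}}\right].
\]
Since the entries are i.i.d., mean-zero and circularly symmetric with $\Exp[|X_{ij}|^2]=1$ and $\Exp[|X_{ij}|^4]=2$, the $i$-th factor vanishes unless the multiset $\{\sigma_1(i),\sigma_2(i)\}$ equals $\{\tau_1(i),\tau_2(i)\}$, in which case it equals $2$ when $\sigma_1(i)=\sigma_2(i)$ and $1$ otherwise. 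Hence only tuples meeting the multiset condition in every row survive, each contributing $2^{f(\sigma_1,\sigma_2)}$, where $f(\sigma_1,\sigma_2):=\#\{i:\sigma_1(i)=\sigma_2(i)\}$.

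The key combinatorial step is to count, for fixed $(\sigma_1,\sigma_2)$, the pairs $(\tau_1,\tau_2)$ meeting the multiset condition in every row. Put $\mu:=\sigma_1^{-1}\sigma_2$ with support $\mathrm{supp}(\mu)=\{i:\mu(i)\neq i\}$. On rows $i\notin\mathrm{supp}(\mu)$ we are forced to take $\tau_1(i)=\tau_2(i)=\sigma_1(i)$, while on rows $i\in\mathrm{supp}(\mu)$ we choose $(\tau_1(i),\tau_2(i))$ to be $(\sigma_1(i),\sigma_2(i))$ or the swap $(\sigma_2(i),\sigma_1(i))$. Writing $S:=\{i\in\mathrm{supp}(\mu):\tau_1(i)=\sigma_2(i)\}$ for the set of swapped rows, I would verify that $\tau_1$ is a bijection iff $\mu(S)=S$, i.e.\ iff $S$ is a union of cycles of $\mu$, and that in that case $\tau_2$ is automatically a bijection as well. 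Thus the admissible pairs are in bijection with the unions of nontrivial cycles of $\mu$, of which there are $2^{k(\mu)}$, where $k(\mu)$ is the number of cycles of $\mu$ of length at least $2$.

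Finally I would assemble the pieces. Since $f(\sigma_1,\sigma_2)$ is the number of fixed points of $\mu$, we get $2^{f(\sigma_1,\sigma_2)}\cdot 2^{k(\mu)}=2^{c(\mu)}$, where $c(\mu)$ is the total number of cycles of $\mu$. Summing over $(\tau_1,\tau_2)$ first, then over $\sigma_2$ via the substitution $\mu=\sigma_1^{-1}\sigma_2$, and then over the $n!$ choices of $\sigma_1$ gives
\[
\Exp_{X\sim\G}[|\Per(X)|^4]=n!\sum_{\mu\in S_n}2^{c(\mu)}.
\]
Using the standard identity $\sum_{\mu\in S_n}x^{c(\mu)}=x(x+1)\cdots(x+n-1)$ (the generating function of unsigned Stirling numbers of the first kind) at $x=2$ yields $\sum_{\mu\in S_n}2^{c(\mu)}=(n+1)!$, hence $\Exp_{X\sim\G}[|\Per(X)|^4]=n!\,(n+1)!=(n!)^2(n+1)$. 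The main obstacle is the bijection bookkeeping in the middle step: one must argue carefully that ``swap on $S$'' produces a genuine pair of permutations exactly when $S$ is $\mu$-invariant, and that distinct such $S$ yield distinct $(\tau_1,\tau_2)$; the Gaussian moment evaluations and the cycle-polynomial identity are routine.
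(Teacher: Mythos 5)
Your proposal is correct and follows essentially the same route as the paper's proof: expand the fourth moment over four permutations, use the complex-Gaussian moment values to reduce to the multiset matching condition row by row, count the admissible conjugated pair via subsets of nontrivial cycles of the relative permutation $\mu=\sigma_1^{-1}\sigma_2$ (your swap-set argument with $\mu(S)=S$ is just a more explicit rendering of the paper's parametrization $\alpha=\gamma_1^{b_1}\cdots\gamma_m^{b_m}\tau$), and finish with $\sum_{\mu\in S_n}2^{c(\mu)}=(n+1)!$. No gaps; if anything, your verification that the swapped map is a bijection exactly when $S$ is $\mu$-invariant makes the counting step more rigorous than the paper's sketch.
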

\begin{proof}
  By direction calculation, 
  \begin{align}
    \Exp_{X\sim\G}[|\Per_z(X)|^4]
    &= \sum_{\sigma,\tau,\alpha,\beta} z^{\ell(\sigma)+\ell(\tau)-\ell(\alpha)-\ell(\beta)} \cdot 
      M(\sigma,\tau,\alpha,\beta),
      \end{align}
      where
      \begin{align}
      M(\sigma,\tau,\alpha,\beta) :=\prod_{i}\Exp_{X\sim\G}[X_{i,\sigma(i)}X_{i,\tau(i)}X_{i,\alpha(i)}^* X_{i,\beta(i)}^*].
  \end{align}
  If $\sigma\cup\tau\neq\alpha\cup\beta$, $M(\sigma,\tau,\alpha,\beta)=0$.

  To compute $M(\sigma,\tau,\alpha,\beta)$, we first provide some intuition: if $\tau=1$ and $\sigma=\gamma_1\ldots\gamma_m$, where each $\gamma_i$ is a cycle of length greater than 1. 
  Then either $\alpha$ has $\gamma_i$ and $\beta$ has the identity, or $\alpha$ has the identity and $\beta$ has $\gamma_i$.
  This implies that $\beta\alpha=\sigma$ and there are $2^m$ choices of $\alpha$.
  For arbitrary $\tau$, we can ``shift'' each permutation by multiplying $\tau^{-1}$ to the right and reducing to the above simpler case.
  Since $\sigma\tau^{-1}$ and $\tau\sigma^{-1}$ have the same cycle type, the above argument is symmetric with respect to swapping the roles of $\sigma$ and $\tau$ (i.e., we can start with $\sigma=1$ and get the same result).

  More formally, if $\sigma\tau^{-1}$ is of cycle type $\lambda=(\lambda_1,\ldots,\lambda_k)$ with $\lambda_1\geq\lambda_2\geq\ldots\geq \lambda_m> \lambda_{m+1}=\ldots=\lambda_k=1$, we may write in the form
  \begin{align}
    \sigma\tau^{-1} = \gamma_1\gamma_2\ldots\gamma_m.
  \end{align}
  Then if $\sigma\cup\tau=\alpha\cup\beta$, it holds that 
  \begin{align}
    (\alpha,\beta)\in A(\sigma,\tau) 
    \end{align}
    where
    \begin{align}
    A(\sigma,\tau) = \left\{ (\alpha,\beta): \alpha=\gamma_1^{b_1}\ldots\gamma_m^{b_m}\tau, \beta=\tau\alpha^{-1}\sigma, b_1,\ldots,b_m\in\bit\right\}.
  \end{align}
  Clearly for every $(\sigma,\tau)$ such that $\sigma\tau^{-1}$ is of type $\lambda$, $|A(\sigma,\tau)|=2^{m}$ since there are $2^m$ choice of $\alpha$ and $\beta$ is uniquely determined once $\alpha$ is fixed.

  Furthermore, for each fixed point $i$, $\sigma(i)=\tau(i)=\alpha(i)=\beta(i)$, and thus $\prod_{i:\sigma(i)=\tau(i)}\Exp_{X\sim\G}[|X_{i,\sigma(i)}|^4]=2^{k-m}$.
  Therefore, we have shown that 
  \begin{align}
    M(\sigma,\tau,\alpha,\beta) = 2^{F(\sigma\tau^{-1})}\cdot \Id[(\alpha,\beta)\in A(\sigma,\tau)]
  \end{align}
  and
  \begin{align}
    \sum_{\alpha,\beta} M(\sigma,\tau,\alpha,\beta) = 2^{C(\sigma\tau^{-1})},
  \end{align}
  where $F(\xi)$ is the number of fixed points, and $C(\xi)$ is the number of cycles in $\xi\in S_n$.  
  For $z=1$,
  \begin{align}
    \Exp_{X\sim\G}[|\Per(X)|^4]
    &=\sum_{\sigma,\tau,\alpha,\beta} M(\sigma,\tau,\alpha,\beta) = \sum_{\sigma,\tau} 2^{C(\sigma\tau^{-1})} = (n!)^2 (n+1).
  \end{align}
  Here we have applied the fact that $\sum_{\xi\in S_n}2^{C(\xi)}=(n+1)!$.
\end{proof}
However, we can not get the explicit formula for the $\Exp_{X\sim\G}[|\Per_z(X)|^4]$ because when $\sigma\cup\tau=\alpha\cup\beta$, the inversion number can be nonzero: namely $\ell(\sigma)+\ell(\tau)-\ell(\alpha)-\ell(\beta) \neq 0$. 
But we can still get the following result:
\begin{lemma}
  For $z\in\mathbb T$, $\Exp_{X\sim\G}[|\Per_z(X)|^4]\leq  \Exp_{X\sim\G}[|\Per(X)|^4].$
\end{lemma}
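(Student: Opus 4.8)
The plan is to reuse the expansion of $\Exp_{X\sim\G}[|\Per_z(X)|^4]$ established in the proof of the previous lemma. Expanding $|\Per_z(X)|^4$ as a quadruple sum over permutations and taking the expectation entrywise gives
\begin{align*}
  \Exp_{X\sim\G}[|\Per_z(X)|^4] = \sum_{\sigma,\tau,\alpha,\beta\in S_n} z^{\ell(\sigma)+\ell(\tau)-\ell(\alpha)-\ell(\beta)}\, M(\sigma,\tau,\alpha,\beta),
\end{align*}
where $M(\sigma,\tau,\alpha,\beta)=\prod_{i=1}^n \Exp_{X\sim\G}[X_{i,\sigma(i)}X_{i,\tau(i)}X_{i,\alpha(i)}^*X_{i,\beta(i)}^*]$ is exactly the coefficient that appears in the $z=1$ computation. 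The key observation I would isolate is that every coefficient $M(\sigma,\tau,\alpha,\beta)$ is a non-negative real number, so the only effect of $z$ is to multiply each term of the sum by a unit-modulus phase.

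To justify $M\ge 0$, I would invoke Wick's theorem (Isserlis' theorem) for i.i.d.\ standard complex Gaussians: since $\Exp[X_{i,a}X_{i,b}]=0$ and $\Exp[X_{i,a}X_{i,b}^*]=\Id[a=b]$, each factor equals $\Id[\sigma(i)=\alpha(i)]\Id[\tau(i)=\beta(i)]+\Id[\sigma(i)=\beta(i)]\Id[\tau(i)=\alpha(i)]\in\{0,1,2\}$, so the product over $i$ is a non-negative integer. Alternatively, one can simply quote the closed form $M(\sigma,\tau,\alpha,\beta)=2^{F(\sigma\tau^{-1})}\,\Id[(\alpha,\beta)\in A(\sigma,\tau)]$ already derived above, which is manifestly non-negative.

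Finally, since $\Exp_{X\sim\G}[|\Per_z(X)|^4]$ is the expectation of a non-negative quantity it is itself a non-negative real, hence equal to its own modulus; combining this with $|z^{\ell(\sigma)+\ell(\tau)-\ell(\alpha)-\ell(\beta)}|=1$ and the triangle inequality gives
\begin{align*}
  \Exp_{X\sim\G}[|\Per_z(X)|^4]
  = \Bigl|\sum_{\sigma,\tau,\alpha,\beta} z^{\ell(\sigma)+\ell(\tau)-\ell(\alpha)-\ell(\beta)} M(\sigma,\tau,\alpha,\beta)\Bigr|
  \le \sum_{\sigma,\tau,\alpha,\beta} M(\sigma,\tau,\alpha,\beta)
  = \Exp_{X\sim\G}[|\Per(X)|^4],
\end{align*}
where the last equality is the $z=1$ evaluation $(n!)^2(n+1)$ computed above. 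There is essentially no serious obstacle: the argument hinges solely on the non-negativity of the coefficients $M$, which is immediate from the Gaussian moment structure, and the inequality is saturated precisely when all phases $z^{\ell(\sigma)+\ell(\tau)-\ell(\alpha)-\ell(\beta)}$ supported on $M\ne 0$ coincide — in particular at $z=1$.
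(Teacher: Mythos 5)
Your proposal is correct and follows essentially the same route as the paper: expand $|\Per_z(X)|^4$ as the quadruple sum with coefficients $M(\sigma,\tau,\alpha,\beta)\ge 0$, note the expectation is a non-negative real, and bound the phase-weighted sum by the unweighted sum, which is the $z=1$ value. Your Wick-theorem justification of $M\ge 0$ is just a slightly more explicit version of the non-negativity the paper takes from its earlier closed-form computation, so there is no substantive difference.
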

The proof is simple since $M(\sigma, \tau, \alpha, \beta) \geq 0$ and $\Exp_{X\sim\G}[|\Per_z(X)|^4]$ must be a positive number. So we have
\begin{align}\nonumber
    \Exp_{X\sim\G}[|\Per_z(X)|^4]
    &= \sum_{\sigma,\tau,\alpha,\beta} z^{\ell(\sigma)+\ell(\tau)-\ell(\alpha)-\ell(\beta)} \cdot 
      M(\sigma,\tau,\alpha,\beta) \\\nonumber
    &\leq \sum_{\sigma,\tau,\alpha,\beta}M(\sigma,\tau,\alpha,\beta) \\
    &=\Exp_{X\sim\G}[|\Per(X)|^4].
\end{align}
More generally, we have the following inequality.
\begin{lemma}
  For $z\in\mathbb T$ and $k\geq 2$, $\Exp_{X\sim\G}[|\Per_z(X)|^{2k}]\leq  \Exp_{X\sim\G}[|\Per(X)|^{2k}].$
\end{lemma}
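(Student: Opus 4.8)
The plan is to run exactly the argument used above for $k=2$, reducing everything to the sign of the combinatorial weights. Expanding $|\Per_z(X)|^{2k}=\Per_z(X)^k\,\overline{\Per_z(X)}^k$ over $k$-tuples of permutations $\vec\sigma=(\sigma_1,\dots,\sigma_k)$ and $\vec\tau=(\tau_1,\dots,\tau_k)$, using $\overline z=z^{-1}$ on the unit circle, and writing $\ell(\vec\sigma):=\sum_{j=1}^k\ell(\sigma_j)$, linearity of expectation together with independence of the rows of $X$ yields
\begin{align*}
  \Exp_{X\sim\G}[|\Per_z(X)|^{2k}]
   = \sum_{\vec\sigma,\vec\tau} z^{\,\ell(\vec\sigma)-\ell(\vec\tau)}\,M(\vec\sigma,\vec\tau),
  \qquad
  M(\vec\sigma,\vec\tau):=\prod_{i=1}^n \Exp_{X\sim\G}\Bigl[\,\prod_{j=1}^k X_{i,\sigma_j(i)}\,\overline{X_{i,\tau_j(i)}}\,\Bigr].
\end{align*}

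The first step is to show $M(\vec\sigma,\vec\tau)\ge 0$. Since the entries $X_{i,1},\dots,X_{i,n}$ of row $i$ are independent, the inner expectation factors over columns: for column $c$ set $p_c:=\#\{j:\sigma_j(i)=c\}$ and $q_c:=\#\{j:\tau_j(i)=c\}$, so the row-$i$ factor equals $\prod_c \Exp[X_{i,c}^{\,p_c}\,\overline{X_{i,c}}^{\,q_c}]$. For a standard complex Gaussian $Z$ one has, by rotational invariance, $\Exp[Z^{p}\overline{Z}^{q}]=0$ unless $p=q$, and $\Exp[|Z|^{2p}]=p!$; hence every factor is a nonnegative integer and so is $M(\vec\sigma,\vec\tau)$.

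Granting that, the inequality follows at once: $\Exp_{X\sim\G}[|\Per_z(X)|^{2k}]$ is a nonnegative real, so it equals its own absolute value, and by the triangle inequality together with $|z|=1$,
\begin{align*}
  \Exp_{X\sim\G}[|\Per_z(X)|^{2k}]
  = \Bigl|\,\sum_{\vec\sigma,\vec\tau} z^{\,\ell(\vec\sigma)-\ell(\vec\tau)}M(\vec\sigma,\vec\tau)\,\Bigr|
  \le \sum_{\vec\sigma,\vec\tau} M(\vec\sigma,\vec\tau)
  = \Exp_{X\sim\G}[|\Per(X)|^{2k}],
\end{align*}
the last equality being the displayed identity specialized to $z=1$.

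The only nontrivial ingredient is the nonnegativity of $M(\vec\sigma,\vec\tau)$, which rests on the elementary fact that the mixed moments $\Exp[Z^{p}\overline{Z}^{q}]$ of a complex Gaussian are nonnegative (they vanish unless $p=q$, and equal $p!$ otherwise); everything else is the bookkeeping already carried out for $k=2$ in the excerpt. I therefore expect no obstacle beyond recording this moment computation, which is standard, and perhaps remarking that the argument is insensitive to the precise value of $k\ge 2$.
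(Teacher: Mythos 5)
Your proposal is correct and follows essentially the same route as the paper: expand the $2k$-th moment over tuples of permutations, observe that the Gaussian expectation weights $M$ are nonnegative, and bound the phase-weighted sum by the unweighted sum, which is the $z=1$ value. The paper states this lemma as the direct generalization of its $k=2$ argument, and your moment computation ($\Exp[Z^{p}\overline{Z}^{q}]=p!\,\Id[p=q]\ge 0$) simply makes explicit the nonnegativity the paper takes for granted.
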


Due to the lack of efficient formula for computing permanent, showing anti-concentration conjecture of permanent is hard. As we show in the paper,  There is no efficient formula for general $z$-permanent. So we believe that showing the anti-concentration conjecture of $z$-permanent is also a hard problem. However, the higher moment of $z$-permanent is less than permanent (but with the same first moment). This gives us a hint that anti-concentration is also true for $z$-permanent. We have the following similar as $z=1$.

\begin{theorem}[Weak Anti-Concentration of $z$-Permanent]\label{thm:weak-z-pacc}
For $\alpha<1,$
\begin{equation}
    \Pr_{X\sim\G}[|\Per_z(X)|^2 >\alpha \cdot n!] >\frac{(1-\alpha)^2}{n+1}.
\end{equation}
\end{theorem}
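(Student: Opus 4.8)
The plan is to apply the Paley–Zygmund inequality to the non-negative random variable $R := |\Per_z(X)|^2$ for $X \sim \G$. Recall the Paley–Zygmund inequality: for a non-negative random variable $R$ with finite second moment and $\theta \in [0,1]$,
\begin{equation}
  \Pr[R > \theta\, \Exp[R]] \geq (1-\theta)^2 \frac{(\Exp[R])^2}{\Exp[R^2]}.
\end{equation}
Here $\Exp[R] = \Exp_{X\sim\G}[|\Per_z(X)|^2] = n!$ by \lem{first-m}, and $\Exp[R^2] = \Exp_{X\sim\G}[|\Per_z(X)|^4]$. The second moment is the only quantity we do not have in closed form, but by the lemma immediately preceding this theorem we have the bound $\Exp_{X\sim\G}[|\Per_z(X)|^4] \leq \Exp_{X\sim\G}[|\Per(X)|^4] = (n!)^2(n+1)$.

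Putting these together, with $\theta = \alpha$, we get
\begin{equation}
  \Pr_{X\sim\G}[|\Per_z(X)|^2 > \alpha \cdot n!] \geq (1-\alpha)^2 \frac{(n!)^2}{\Exp_{X\sim\G}[|\Per_z(X)|^4]} \geq (1-\alpha)^2 \frac{(n!)^2}{(n!)^2(n+1)} = \frac{(1-\alpha)^2}{n+1},
\end{equation}
which is exactly the claimed bound (with a strict inequality in the last step, or one can absorb it; the statement uses $>$, matching Paley–Zygmund's strict event $\{R > \theta\Exp[R]\}$). The first inequality is Paley–Zygmund applied to $R = |\Per_z(X)|^2$ with $\theta = \alpha$; the second uses the moment comparison lemma.

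The main obstacle — really the only non-trivial ingredient — is controlling the fourth moment $\Exp_{X\sim\G}[|\Per_z(X)|^4]$, since no exact formula is available when $z \neq 1$ (the inversion-number phases $z^{\ell(\sigma)+\ell(\tau)-\ell(\alpha)-\ell(\beta)}$ do not all collapse to $1$ on the support where $\sigma\cup\tau = \alpha\cup\beta$). This is handled by the triangle-inequality argument already established above: every term $M(\sigma,\tau,\alpha,\beta)$ is a product of Gaussian moments and hence non-negative, so replacing each phase by $1$ can only increase the sum, giving $\Exp_{X\sim\G}[|\Per_z(X)|^4] \leq \Exp_{X\sim\G}[|\Per(X)|^4] = (n!)^2(n+1)$. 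Everything else is a direct substitution into Paley–Zygmund, so no further calculation is needed.
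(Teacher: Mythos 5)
Your proposal is correct and follows essentially the same route as the paper: a second-moment argument using $\Exp_{X\sim\G}[|\Per_z(X)|^2]=n!$ and the comparison bound $\Exp_{X\sim\G}[|\Per_z(X)|^4]\leq (n!)^2(n+1)$, with the paper invoking "Chebyshev's inequality" where you more precisely name the Paley--Zygmund inequality. The only cosmetic difference is that you spell out the substitution the paper leaves implicit.
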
 
\begin{proof}
By Chebyshev's inequality using
$\Exp_{X\sim\G}[|\Per_z(X)|^2]=n!$ and $\Exp_{X\sim\G}[|\Per_z(X)|^4]\leq n!^2(n+1)$, the bound immediately follows.
\end{proof}

\section{Strong Autocorrelation Property}\label{app:strong}
Consider a random variable $X$ with a real distribution $\mathcal{F}$ with mean $0$ and variance $1.$ We denote it as $X \sim \mathcal{F}_{\mathbb{R}}(0,1).$ We define $\mathcal{F}_{\mathbb{R}}(\mu,\sigma^2)$ with mean $\mu$ and variance $\sigma^2$ by shifting and scale the random variable $X \to \mu+\sigma^2 X.$ We denote such distribution $\F_{\mathbb{R}}(\mu, \sigma^2)$. We pick $\epsilon$ such that $\epsilon<\frac{1}{2}.$
\begin{align}
X & = \mathcal{F}_{\mathbb{R}}(0,1)^{n} \\ 
\mathcal{D}_{1}  &  =\mathcal{F}_{\mathbb{R}}\left(  0,\left(  1-\varepsilon\right)^2
\right)^{n},\\
\mathcal{D}_{2}  &  =\prod_{i=1}^{n}\mathcal{F}_{\mathbb{R}}\left(  v_{i},1\right)
\end{align}
We define
\begin{equation}
    G(\epsilon)=\left\Vert \mathcal{F}_{\mathbb{R}}(0,1)- \mathcal{F}_{\mathbb{R}}(0,(1-\epsilon)^2)
    \right\Vert, 
    H(v_i)=\left\Vert \mathcal{F}_{\mathbb{R}}(0,1)- \mathcal{F}_{\mathbb{R}}(v_i,1)
    \right\Vert.
\end{equation}

\begin{lemma}
If $G(\epsilon),\frac{dG(\epsilon)}{d\epsilon}$ are both continuous when $\epsilon \in [0,\frac{1}{2}]$ and $H(x), \frac{dH(x)}{dx}$ are both continuous when $x \in [0,1]$. Then,
\begin{align*}
     \left\Vert \mathcal{D}_{1}-X\right\Vert  &  \leq n M_1\epsilon, \forall \epsilon \in [0,1/2]\\ 
\left\Vert \mathcal{D}_{2}-X\right\Vert  &  \leq M_2\left\Vert 
v\right\Vert_{1},\,
\end{align*}
where $M_1,M_2$ are independent of $\epsilon,v_i$ for every $i$.
\end{lemma}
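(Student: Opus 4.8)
The statement decomposes into two independent steps: (i) a hybrid/subadditivity bound reducing the total variation distance between the product distributions to a sum of one-coordinate distances, and (ii) a first-order Taylor estimate turning the assumed $C^1$-regularity of $G$ and $H$ near the origin into the claimed linear bounds. I would carry out (i) first, since it is purely structural and does not use the regularity hypotheses, and then feed its output into (ii).

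For step (i), I would first record the standard fact that for product measures $P=\prod_{i=1}^n P_i$ and $Q=\prod_{i=1}^n Q_i$ one has $\|P-Q\|\le\sum_{i=1}^n\|P_i-Q_i\|$. This is proved by telescoping through the hybrids $R_k=\bigl(\prod_{i\le k}Q_i\bigr)\otimes\bigl(\prod_{i>k}P_i\bigr)$: here $R_0=P$, $R_n=Q$, and $R_{k-1}$, $R_k$ differ only in the $k$-th coordinate, so $\|R_{k-1}-R_k\|=\|P_k-Q_k\|$ because the common marginals factor out of the $L^1$ integral; the triangle inequality for $\|\cdot\|$ then finishes it. Applying this with all $n$ coordinates of $\mathcal D_1$ equal to $\mathcal F_{\mathbb R}(0,(1-\varepsilon)^2)$ and all $n$ coordinates of $X$ equal to $\mathcal F_{\mathbb R}(0,1)$ gives $\|\mathcal D_1-X\|\le nG(\varepsilon)$; applying it to $\mathcal D_2$ versus $X$ gives $\|\mathcal D_2-X\|\le\sum_{i=1}^n H(v_i)$. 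To make the second line literally fit the stated domain of $H$, I would extend $H$ to $[-1,1]$ via $H(x):=\|\mathcal F_{\mathbb R}(0,1)-\mathcal F_{\mathbb R}(x,1)\|$, under which $H(0)=0$ and the continuity/differentiability hypotheses are their natural two-sided versions; alternatively one replaces $H(v_i)$ by $H(|v_i|)$.

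For step (ii), observe that $G(0)=0$ since $\mathcal F_{\mathbb R}(0,(1-0)^2)=\mathcal F_{\mathbb R}(0,1)$, and likewise $H(0)=0$. Since $G$ is $C^1$ on the compact interval $[0,\tfrac12]$, its derivative is bounded there; put $M_1:=\sup_{\varepsilon\in[0,1/2]}|G'(\varepsilon)|<\infty$. Then $G(\varepsilon)=\int_0^\varepsilon G'(t)\,dt\le M_1\varepsilon$ for all $\varepsilon\in[0,\tfrac12]$, so $\|\mathcal D_1-X\|\le nM_1\varepsilon$. Identically, with $M_2:=\sup|H'|<\infty$ on $[-1,1]$ (or $[0,1]$), one gets $H(x)\le M_2|x|$, hence $\|\mathcal D_2-X\|\le\sum_i H(v_i)\le M_2\sum_i|v_i|=M_2\|v\|_1$. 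Both $M_1$ and $M_2$ are determined by $\mathcal F$ alone — they are suprema of functions that do not involve $n$, $\varepsilon$, or $v$ — which is exactly the independence asserted.

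The only subtle points are the marginal-factoring step inside the hybrid argument (a one-line $L^1$ computation, but worth stating carefully for total variation) and the convention for negative shifts $v_i$; everything else is elementary calculus on a compact interval. I do not anticipate difficulty matching these to the constants $2nM_1\varepsilon$ and $\sqrt{2n}\,M_2\|v\|_2$ of \dfn{autocor}, since $n\le 2n$ and $\|v\|_1\le\sqrt n\,\|v\|_2\le\sqrt{2n}\,\|v\|_2$, with the complex version of \dfn{autocor} recovered by viewing a complex coordinate as a pair of real coordinates and absorbing the resulting factor $2$ into the constants.
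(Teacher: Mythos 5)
Your proposal is correct and follows essentially the same route as the paper: bound $\|\mathcal D_1-X\|$ and $\|\mathcal D_2-X\|$ by coordinate-wise subadditivity of total variation over the product (the paper calls this the triangle inequality; you make the hybrid argument explicit), then use $G(0)=H(0)=0$ together with the boundedness of the continuous derivatives on the compact intervals (paper: mean value theorem plus extreme value theorem; you: the equivalent integral bound) to get $G(\epsilon)\le M_1\epsilon$ and $H(v_i)\le M_2|v_i|$, with the sign convention for $v_i$ handled as the paper does by restricting to $v_i\ge 0$ without loss of generality. No gaps; the remarks on matching the constants $2nM_1\epsilon$ and $\sqrt{2n}M_2\|v\|_2$ correspond to the paper's subsequent corollary rather than to this lemma.
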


\begin{proof}
We first prove the first inequality:
\begin{align*}
    &\left\Vert \mathcal{D}_{1}-X\right\Vert \leq n \left\Vert \mathcal{F}_{\mathbb{R}}(0,1)- \mathcal{F}_{\mathbb{R}}(0,(1-\epsilon)^2)
    \right\Vert =n |G(\epsilon)| 
    \leq n\epsilon M_1.
\end{align*}
In order to bound the difference, we use the mean value theorem on variable $\epsilon$. 
Notice that $G(0)=0$. For all $\epsilon \in [0,\frac{1}{2}]$ we have:
\begin{equation}
   |G(\epsilon)-G(0)|\leq \epsilon \max_{\epsilon \in [0,\frac{1}{2}]} \left|\frac{dG(\epsilon)}{d\epsilon}\right|\leq \epsilon M_1.
\end{equation}
Here $M_1$ is defined to be $\left|\frac{dG(\epsilon)}{d\epsilon}\right|$. 
The existence is guaranteed due to the continuity of $\left|\frac{dG(\epsilon)}{d\epsilon}\right|$ and the following lemma.
\begin{lemma}[{Extreme value theorem~\cite{rudin1976principles}}]
 A continuous function from a non-empty compact space to a subset of the real numbers attains a maximum and a minimum.
\end{lemma}
So finish the proof of the first inequality.

The second inequality follows using again the triangle inequality:
\begin{align*}
    &\left\Vert \mathcal{D}_{2}-X\right\Vert \leq \sum_{i=1}^{n} \left\Vert \mathcal{F}_{\mathbb{R}}(0,1)- \mathcal{F}_{\mathbb{R}}(v_i,1)
    \right\Vert \leq \sum_{i=1}^{n} |v_i| M_2 \leq M_2\left\Vert 
v\right\Vert_{1}
\end{align*}
Here we use the same trick (notice $H(0)=0$, without loss of generality, we can assume $v_i \geq 0.$)
\begin{equation}
      |H(v_i)-H(0)|\leq |v_i| \max_{x \in [0,1]} \left|\frac{dH(x)}{dx}\right|\leq |v_i| M_2.
\end{equation}
Here $M_2$ is defined to be $\left|\frac{dH(x)}{dx}\right|$. 
\end{proof}
We can define complex random variable $Z=X+iY$,$X \sim \mathcal{F}_{\mathbb{R}}(\mu_1,1/2), Y \sim \mathcal{F}_{\mathbb{R}}(\mu_2,1/2)$ and $X,Y$ are independent. So then $Z \sim \mathcal{F}_{\mathbb{C}}(\mu_1+i\mu_2,1).$ 
It will be helpful to think of each complex coordinate as two real coordinates and $v$ is a vector in $\mathbb{R}^{2N}$.

\begin{corollary}
If we consider the complex version composed by real and with same condition:
\begin{align}
X & = \mathcal{F}_{\mathbb{C}}(0,1)^{n} = \mathcal{F}_{\mathbb{R}}(0,1/2)^{2n}\\ 
\mathcal{D}_{1}  &  =\mathcal{F}_{\mathbb{C}}\left(  0,\left(  1-\varepsilon\right)^2
\right)^{n}=\mathcal{F}_{\mathbb{R}}(0,(1-\epsilon)^2/2)^{2n},\\
\mathcal{D}_{2}  &  =\prod_{i=1}^{n}\mathcal{F}_{\mathbb{C}}\left(  v_{i},1\right)
\end{align}
We have:
\begin{align*}
     \left\Vert \mathcal{D}_{1}-X\right\Vert  &  \leq 2n M_1\epsilon, \forall \epsilon \in [0,\frac{1}{2}]\\ 
\left\Vert \mathcal{D}_{2}-X\right\Vert  &  \leq M_2\left\Vert 
v\right\Vert_{1} \leq \sqrt{2n}M_2\left\Vert 
v\right\Vert_{2}.
\end{align*}
In the final line, we use $1$- and $2$-norm inequalities. 
Notice that for the $1$-norm, we imagine $\nu$ as an $\mathbb{R}^{2n}$ vectors but for $2$-norm, we pack them into $\mathbb{C}^{n}$.  
\end{corollary}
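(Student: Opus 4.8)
The plan is to derive the complex statement from the real-variable Lemma proved just above by treating $\mathbb{C}^n$ as $\mathbb{R}^{2n}$ and using that every total-variation distance appearing here is unchanged under a common invertible rescaling of coordinates. First I would record two elementary invariance facts: (i) for a fixed invertible $T:\mathbb{R}^k\to\mathbb{R}^k$ and any laws $\mu,\nu$, $\|T_*\mu-T_*\nu\|=\|\mu-\nu\|$; and (ii) for a product of laws, $\big\|\prod_j\mu_j-\prod_j\nu_j\big\|\le\sum_j\|\mu_j-\nu_j\|$ (the standard hybrid argument, already used implicitly in the real Lemma). Fact (i) shows in particular that the functions $G$ and $H$ do not change, up to absorbing absolute constants into $M_1,M_2$, when the base one-dimensional variance $1$ is replaced by $\tfrac12$: the distance between two centered members of the family depends only on the ratio of their scale parameters, and the distance between a member and its shift merely rescales the shift.

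Next I would set up the identification. Writing $Z_i=X_i+iY_i$ with $X_i,Y_i$ independent and each distributed as $\mathcal{F}_{\mathbb{R}}(0,\tfrac12)$, the base distribution $X=\mathcal{F}_{\mathbb{C}}(0,1)^n$ is literally $\mathcal{F}_{\mathbb{R}}(0,\tfrac12)^{2n}$, exactly as displayed in the statement. Under this dictionary $\mathcal{D}_1=\mathcal{F}_{\mathbb{C}}(0,(1-\varepsilon)^2)^n$ is obtained from $X$ by rescaling all $2n$ real coordinates by a common $\varepsilon$-dependent factor, i.e.\ it is $\mathcal{F}_{\mathbb{R}}(0,(1-\varepsilon)^2/2)^{2n}$; and $\mathcal{D}_2=\prod_i\mathcal{F}_{\mathbb{C}}(v_i,1)$ is the coordinate-wise shift of $X$ by the real vector $w\in\mathbb{R}^{2n}$ whose entries are $\Re(v_i),\Im(v_i)$.

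Then I would invoke the preceding Lemma with $2n$ in place of $n$, using facts (i)--(ii) to translate its hypotheses on $G$ and $H$ to the variance-$\tfrac12$ family. This gives $\|\mathcal{D}_1-X\|\le 2nM_1\varepsilon$ for all $\varepsilon\in[0,\tfrac12]$, and $\|\mathcal{D}_2-X\|\le M_2\|w\|_1$. For the $\ell_2$ bound, since $w\in\mathbb{R}^{2n}$, Cauchy--Schwarz gives $\|w\|_1\le\sqrt{2n}\,\|w\|_2$, and the Euclidean norm is the same whether $w$ is read in $\mathbb{R}^{2n}$ or repacked into $\mathbb{C}^n$ because $\sum_i\big(\Re(v_i)^2+\Im(v_i)^2\big)=\sum_i|v_i|^2$; hence $\|\mathcal{D}_2-X\|\le\sqrt{2n}\,M_2\|v\|_2$, which is the claimed inequality.

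The only step that needs real care is the bookkeeping in the second paragraph together with the scale translation in the first: one has to check that splitting a unit-variance complex Gaussian into its two real parts yields variance $\tfrac12$ per coordinate (so the base family is exactly $\mathcal{F}_{\mathbb{R}}(0,\tfrac12)^{2n}$), and that shifting the $i$-th complex coordinate by $v_i$ is shifting two real coordinates by $\Re(v_i)$ and $\Im(v_i)$ — after which the whole proof is a dimension count $n\mapsto 2n$, a triangle inequality over $2n$ coordinates, and one Cauchy--Schwarz step, precisely as the remark after the statement anticipates.
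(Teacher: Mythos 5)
Your proposal is correct and follows essentially the same route as the paper: view $\mathcal{F}_{\mathbb{C}}(\cdot,\cdot)^n$ as a product of $2n$ real coordinates, apply the real-variable lemma (triangle inequality over the $2n$ factors), and finish with $\|w\|_1\le\sqrt{2n}\,\|w\|_2$, identifying the Euclidean norm of the $\mathbb{R}^{2n}$ vector with $\|v\|_2$ in $\mathbb{C}^n$. Your explicit remark that total-variation distance is invariant under the common rescaling (so the variance-$\tfrac12$ family inherits the bounds with constants absorbed into $M_1,M_2$) is a point the paper passes over silently, but it is the same argument, not a different one.
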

Notice that in \cite{aaronson2011computational}, one can use rotational invariance to get the tighter bound for Gaussian matrices. But it is enough for our purpose.

\end{document}